\documentclass[format=acmsmall, review=false]{acmart}
\usepackage{acm-ec-26}
\usepackage{booktabs} 
\setcitestyle{authoryear}
\usepackage{cleveref}
\usepackage{tcolorbox}
\usepackage{threeparttable}
\usepackage{tabularx, booktabs} 
\usepackage{multirow}
\usepackage{color, colortbl}
\definecolor{Gray}{gray}{0.9}

\newcommand{\smg}{\ensuremath{\mathsf{SM}^\mathsf{g}}\xspace}

\newcommand{\sm}{\ensuremath{\mathsf{SM}}\xspace}
\newcommand{\kp}{\ensuremath{\mathsf{KP}}\xspace}
\newcommand{\kpb}{\ensuremath{\mathsf{KRTV}}\xspace}
\newcommand{\kpc}{\ensuremath{\mathsf{KNR}}\xspace}

\newcommand{\alga}{\ensuremath{\mathsf{SM}}\xspace}

\newcommand{\bam}{\widehat{\cM}}

\newcommand{\tm}{\widetilde{\cM}}
\newcommand{\tce}{\widetilde{\cE}}

\newcommand{\am}{{\cM^{a}}}
\newcommand{\bm}{{\cM^{b}}}

\newcommand{\sda}{\textbf{BODA}\xspace}
\newcommand{\om}{\ensuremath{\operatorname{EW-OM-RO}}\xspace}

\makeatletter
\newcommand{\thickhline}{%
    \noalign {\ifnum 0=`}\fi \hrule height 1pt
    \futurelet \reserved@a \@xhline
}
\newcolumntype{"}{@{\hskip\tabcolsep\vrule width 1pt\hskip\tabcolsep}}
\makeatother
\newcommand{\bta}{\bar{\theta}}
\newcommand{\tta}{\tilde{\theta}}

\newcommand{\revise}[1]{{\color{black} #1}}
\usepackage{macros_pan}
\title{Degree-Parameterized Analysis of Sampling-Based Online Matching}

\author{Pan Xu}
\authornote{\texttt{{\color{blue}tcs.enthusiast@gmail.com}}}

\begin{abstract}
We study edge-weighted online bipartite matching under random arrival order, parameterized by the maximum degree $d$ on the offline side.
Let $\theta$ denote the sampling fraction.
Classical sampling-based algorithms for this problem are typically analyzed in the unbounded-degree regime.
We give a degree-parameterized analysis of two sampling-based frameworks, showing how their guarantees vary with both $d$ and $\theta$.

The first framework, \textbf{Deterministic Greedy Sampling}, computes offline prices from a fixed-size initial sample and then applies a local threshold rule to each subsequent arrival.
We characterize its worst-case competitiveness by an explicit function of $d$ and $\theta$, and show that this characterization is tight within this policy family for every fixed $d\ge2$ and every $\theta\in[0,1]$.
The optimal tuning interpolates between greedy behavior in very sparse instances and sample-driven behavior in dense instances: for $d=1,2$, no sampling is optimal, while in the unbounded-degree limit the optimized guarantee is approximately $0.2562$.
This strictly improves over the classical $1/8$ guarantee for the corresponding greedy sampling analysis of Korula and Pál (ICALP 2009), while retaining linear per-arrival time.

We also analyze the variance of the total number of matched offline agents produced by Deterministic Greedy Sampling.
This gives a second-moment analysis of allocation-volume variability under random arrival, distinct from weighted-reward variance.
The proof controls aggregate covariance among matching indicators despite the absence of negative dependence.
We derive explicit worst-case upper and lower bounds, show that the upper bound is order-tight as $\theta\to1_-$, and establish that, for fixed $m$ and $d$, the worst-case match-count variance converges to zero at both $\theta\to0_+$ and $\theta\to1_-$; in particular, it is $O(\theta)$ as $\theta\to0_+$.

The second framework, \textbf{Black-Box Sampling--Matching}, applies a prefix-dependent reweighting and then invokes an arbitrary approximate offline matching algorithm on each evolving prefix.
Under a natural arrival-order-independence condition on the black-box solver, we prove a transfer theorem showing that the online competitiveness is the offline approximation ratio multiplied by an explicit factor depending on $d$ and $\theta$.
With exact matching, this framework recovers the classical $1/e$ guarantee in the unbounded-degree limit; with approximate solvers, it quantifies the tradeoff between solver quality and per-arrival computational cost.

Together, these results give a degree-parameterized account of how sampling, bounded offline degree, solver choice, and prefix reweighting shape the expected weighted performance, online efficiency, and allocation-volume variability of sampling-based online matching algorithms.
\end{abstract}

\begin{document}
\begin{titlepage}
\maketitle
\setcounter{tocdepth}{1} 
\tableofcontents

\end{titlepage}

\newpage

\setcounter{page}{1}

\section{Introduction}

\revise{Online allocation systems frequently make irrevocable assignment decisions under limited information about future arrivals.}
Examples include worker--task assignment in crowdsourcing platforms, driver--rider assignment in ride-hailing and delivery markets, and impression--advertiser assignment in online advertising.
\revise{In these settings, a platform observes requests, workers, or impressions sequentially, and must decide in real time whether and how to assign each arriving online agent to one of a limited set of offline resources.}

Online matching and its variants have emerged as a central abstraction for studying this challenge, capturing the inherent tension between irrevocable decisions and incomplete future information~\cite{zhou2021primal,dickerson2018assigning,nanda2020balancing,xu2020trade}.
Most prior work in this area focuses on optimizing expected performance, typically measured by the expected total matching weight or welfare.
\revise{This expectation-based benchmark is fundamental, but it does not describe how much the realized outcome fluctuates across arrival orders or internal random choices of the policy.}
\revise{For applications in which allocation volume itself is operationally meaningful---for instance, the number of served tasks, assigned workers, or utilized resources---it is also natural to ask whether a policy produces stable realized matching sizes.}
\revise{This motivates a complementary, second-moment perspective: in addition to expected weighted performance, we study the variability of the number of matches produced by an online policy.}

\revise{We study this question in the classical model of \emph{edge-weighted online matching under random arrival order}, with a particular focus on how the maximum offline degree affects the performance of sampling-based policies.}

\xhdr{Edge-Weighted Online Matching under Random Order} (\om).
Consider a bipartite graph $G = (I, J, E)$, where $I$ denotes a fixed set of \emph{offline agents} and $J$ denotes a set of \emph{online agents} arriving sequentially.
An edge $e = (i, j) \in E$ represents feasibility or mutual interest between offline agent $i \in I$ and online agent $j \in J$, and is associated with a non-negative weight $w(e)$.\footnote{For example, in online advertising, an edge between advertiser $i$ and impression $j$ indicates that advertiser $i$ is interested in showing an ad to impression $j$.}
For notational convenience, let $I = [m] := \{1, \ldots, m\}$ and $J = [n] := \{1, \ldots, n\}$, and use $i$ and $j$ to index offline and online agents, respectively.

At each round, a single online agent $j \in J$ arrives, revealing the incident edge set $E_j$ and the corresponding weights $\{ w(e) : e \in E_j \}$.
Upon arrival, the algorithm must irrevocably decide whether to match $j$ to at most one currently unmatched offline agent $i \in I$, receiving an immediate reward of $w(i,j)$ if a match is made.
The online agents arrive in a uniformly random order, while the offline agents and the total number of arrivals $n$ are known in advance.
The objective is to design an online matching policy that maximizes the expected total weight of the resulting matching.

\xhdr{\revise{Bounded Offline Degree.}}
\revise{The structural parameter studied in this paper is the maximum degree on the offline side.}
Formally, we assume that the degree of every offline agent is bounded by a parameter $d$, which is known to the algorithm.
\revise{Equivalently, each offline resource has only a limited local neighborhood within the full arrival population.}
\revise{This assumption should be interpreted as a sparsity condition on the realized compatibility graph, rather than as a claim that a platform artificially truncates an otherwise dense graph.}
\revise{It captures settings in which compatibility is intrinsically limited by geography, skills, eligibility, targeting constraints, or type-specific feasibility.}

Such bounded-degree structures arise naturally across a range of online marketplaces.
In crowdsourcing systems, for example, tasks are typically static while workers arrive dynamically~\cite{DBLP:journals/corr/abs-2002-10697,ho2012online,Manshadi2020OnlinePF}.
Compatibility between a worker and a task is determined by skill requirements, certifications, or interest profiles, so that each task is effectively compatible with only a small subset of worker types, even when the total number of arriving workers is large.

A similar phenomenon occurs in ride-hailing platforms~\cite{ma2021fairness,teac21}, where drivers (offline agents) are associated with specific locations and service regions, and riders (online agents) are characterized by origin--destination pairs.\footnote{This modeling choice is typically justified over short time horizons, such as peak hours, during which the pool of available drivers remains relatively stable while riders arrive dynamically.}
Geographic proximity and service-area constraints ensure that each driver is compatible with only a small number of rider types.
Related bounded-degree patterns also appear in online advertising and assortment optimization, where targeting criteria or preference constraints restrict each advertisement or product assortment to a limited segment of users~\cite{goyal2020online,gong2021online,feng2019linear}.
\revise{Across these domains, bounded offline degree is best viewed as a structural abstraction for local or type-restricted compatibility, rather than as the outcome of arbitrary pre-filtering.}

Motivated by these considerations, we formalize this market structure assumption as follows.
\begin{tcolorbox}
\textbf{Bounded-Offline-Degree Assumption} (\sda).
Every offline agent has degree at most $d$ in the bipartite graph $G = (I, J, E)$, where $n = |J|$ denotes the total number of online agents.
Both $d$ and $n$ are accessible to the algorithm.\footnotemark  
~{Unless stated otherwise, asymptotic guarantees are interpreted in the regime $n \to \infty$ with $d$ treated as a structural parameter independent of $n$.}
\end{tcolorbox}

\footnotetext{
Access to the total number of arrivals $n$ is essential in the edge-weighted online matching setting; without it, no online policy can achieve a nontrivial competitive ratio (see Appendix~\ref{app:15-a}).
\revise{The degree parameter $d$ can be viewed as known from the compatibility model or estimated from historical data; the explicit dependence of our guarantees on $d$ also permits sensitivity analysis when $d$ is misspecified.}
}

\subsection{Effectiveness--Efficiency Tradeoffs in Sampling--Matching Algorithms}

For edge-weighted online matching under random arrival order, a standard algorithmic paradigm is \emph{sampling followed by online matching}, inspired by the classical Secretary Problem.
Such algorithms use an initial \textbf{Sampling Phase} to learn threshold, price, or matching information, and then use this information in a subsequent \textbf{Matching Phase} to process each arriving online agent irrevocably.

\xhdr{Performance Metrics: Effectiveness and Online Efficiency.}
We evaluate sampling--matching policies using two standard criteria.
The first is \emph{competitiveness}: the ratio between the expected total weight achieved by the online policy and the total weight of an optimal offline matching on the same instance.
This measures effectiveness relative to a prophet benchmark with full future information.
The second is \emph{online time complexity}: the worst-case computational cost incurred per arrival during the matching phase.
As is standard, we separate this per-arrival cost from the one-time computation performed after the sampling phase.
These two criteria expose a basic tradeoff: policies that repeatedly solve global matching problems on evolving prefix graphs can obtain stronger worst-case guarantees, while local threshold- or price-based policies admit faster online decisions but typically weaker guarantees.

\xhdr{Two Canonical Frameworks as Extreme Points.}
This tradeoff is illustrated by two classical sampling--matching frameworks.

\xhdr{$\kpb$~\cite{kess13}: Prefix Optimization.}
\citet{kess13} proposed a sampling--matching framework, denoted by $\kpb$, that achieves the optimal $1/\sfe$ competitiveness guarantee for \om without structural assumptions on the graph.\footnote{In this setting, an offline agent may be adjacent to all $n$ online agents. The optimality of the $1/\sfe$ bound follows because \om generalizes the classical Secretary Problem.}
The algorithm samples the first $K=\lfloor n/\sfe \rfloor$ arrivals and, for each subsequent arrival $j_t$, recomputes an optimal matching on the prefix graph $(I,\mathcal{S}_t)$; it assigns $j_t$ if the offline vertex matched to $j_t$ in this prefix optimum remains available.
Thus $\kpb$ achieves optimal worst-case effectiveness in the unstructured regime, but its online rule is global: each arrival triggers a matching computation on the current prefix graph.
Under standard static implementations, this costs $O((|I|+|J|)^3)$ per arrival using the Hungarian method~\cite{kuhn1955hungarian}, or $O(|E|^{3/2}\log |E|)$ via cost-scaling flow algorithms~\cite{goldberg1990finding}.
Moreover, the classical guarantee is degree-oblivious; it does not describe how performance changes when the offline degree is bounded by a structural parameter $d$.

\xhdr{$\kp$~\cite{korula2009algorithms}: Local Price-Based Decisions.}
At the other end of the spectrum, \citet{korula2009algorithms} introduced a lightweight framework, denoted by $\kp$, that achieves a $1/8$ competitiveness guarantee with linear online time.
The algorithm draws a random sample size $K\sim\mathrm{Bin}(n,1/2)$, runs Greedy on the sampled subgraph $(I,\mathcal{S}_K)$, and uses the resulting matching to define a price $p(i)$ for each offline agent. In the online phase, for each arriving agent, the algorithm first selects a maximum-weight incident edge whose weight exceeds the corresponding offline price. If the offline endpoint of this edge is available, the agent is matched along this edge; otherwise, the agent is rejected. This rule inspects only the incident edges of the arriving agent, yielding online time $O(\deg(j_t))$, and hence at most $O(|I|)$, per arrival.\footnote{As usual, the one-time cost of computing the sample matching is excluded from the online time complexity.}
However, like $\kpb$, the $\kp$ analysis is degree-oblivious and does not adapt its guarantee to bounded offline degree.

Our results revisit this effectiveness--efficiency tradeoff through the lens of bounded offline degree.
The first framework preserves the local threshold-based rule of $\kp$, but tunes the sampling fraction to $d$ and uses a fixed sample size. The second retains the prefix-optimization structure of $\kpb$, but introduces
a prefix-dependent reweighting and replaces exact matching with a black-box
offline approximation procedure, yielding a degree-parameterized transfer theorem.

\subsection{Variance Analysis for Online Matching: A Metric for Algorithmic Robustness}
\label{sec:pre}

Much of the online matching literature evaluates algorithms through expectation-based guarantees. 
Such guarantees are fundamental, but they do not describe how much the realized outcome fluctuates across random arrival orders or internal random choices of the policy. 
This issue is especially relevant in online settings: unlike offline randomized algorithms, which can often be derandomized or repeated to average out fluctuations~\cite{raghavan1988probabilistic}, an online algorithm acts only once on each arriving instance. 
This motivates a second-moment perspective on online matching.

Recent work has begun to study robustness in online algorithms from this viewpoint. 
\citet{xu2022exploring,xu2024tight} analyzed variance of the total number of matches under Known Independent and Identical Distributions (KIID) and Known Heterogeneous Distributions (KHD), respectively. 
These models differ from the random-arrival model considered here. 
In a related direction, \citet{dinitz2024controlling} studied robustness in ski rental through tail-risk guarantees. 
Variance-based analysis is complementary to such tail guarantees: once a performance metric is fixed, its variance gives a parameter-free measure of dispersion, and standard inequalities such as Chebyshev's inequality can convert variance bounds into tail bounds.

\xhdr{Two Sources of Online Uncertainty.} 
In \om, the input graph and all edge weights are fixed. 
Randomness enters through the uniformly random arrival order and, when present, the algorithm's internal randomization. 
We refer to these two stochastic sources as \emph{online uncertainty}. 
Our variance analysis asks how this online uncertainty propagates to variability in the realized outcome.

\xhdr{Match-Count Variance vs.\ Weighted-Reward Variance.}
In an edge-weighted model, a realized matching can be evaluated either by its total reward or by its cardinality. 
We use total reward for expected-performance and competitiveness analysis. 
For variance analysis, however, weighted reward mixes the effect of online uncertainty with the deterministic weight profile of the selected edges. 
Thus weighted-reward variance may reflect not only variability in allocation volume, but also which deterministic weights are selected and how large they are.

This distinction is already visible in the classical Secretary Problem. 
Suppose one candidate has value $w>0$ and the remaining $n-1$ candidates have value $0$. 
The optimal deterministic policy selects the high-value candidate with probability asymptotically $1/\sfe$, so its reward variance is asymptotically $(1-1/\sfe)(1/\sfe)w^2$. 
As $w$ varies, the arrival-order uncertainty is unchanged, but the reward variance grows quadratically in $w$. 
Thus weighted-reward variance can be dominated by the deterministic magnitude profile of the instance.

The obstruction is not only the possibility of very large weights. 
It persists even when all weights lie in $[0,1]$. 
Consider one offline agent and two feasible online agents with edge weights $1$ and $0$. 
A policy that accepts the first feasible arrival always produces exactly one match, so the match-count variance is zero. 
Under random arrival order, however, the total reward is Bernoulli with mean $1/2$ and variance $1/4$. Hence weighted-reward variance can be positive while match-count variance is
zero, ruling out any finite multiplicative bound of the form
$\operatorname{Var}(\text{reward}) \le C \cdot
\operatorname{Var}(\text{match count})$.

Appendix~\ref{app:why} gives a complementary counterexample showing that this failure
persists even when the match count itself has positive variance: the variance
of the total reward can exceed the match-count variance multiplied by the
square of the maximum edge weight.

\xhdr{Variance of Matches as a Robustness Proxy.}
For this reason, we analyze the variance of the \emph{total number of matches}. 
This choice is methodological rather than merely technical: it projects the realized weighted matching onto its cardinality, thereby factoring out deterministic weight-composition effects and isolating the contribution of online uncertainty to allocation-volume variability. 
For a fixed realized matching, cardinality is insensitive to the magnitudes of the matched edge weights. 
We therefore view match-count variance as a principled robustness metric for sampling-based online matching, complementary to expected weighted performance. 
Extending the analysis to weighted-reward variance would require additional structural assumptions linking allocation size, edge-weight composition, and reward magnitude, and remains open.

\begin{table}[th!]
\center
\caption{Comparison of the two sampling--matching frameworks proposed in this paper, $\smg(\theta)$ and $\alga(\BB,\theta)$, with existing algorithms from~\cite{korula2009algorithms,kess13,kaplan2022online}. 
Here, $\theta\in[0,1]$ is the sampling fraction, $\BB$ denotes a black-box $\alpha$-approximation procedure for maximum-weight bipartite matching, $K$ is the number of sampled arrivals, $d$ is the maximum offline degree, and $\mathsf{T}_{\mathrm{online}}$ denotes the worst-case per-arrival runtime during the online phase.}

\medskip
\renewcommand{\arraystretch}{1.5}
\resizebox{\textwidth}{!}{
\begin{tabular}{c|c|c|c|c|c}
\hline
\hline
 & $K$ & Setting of $d$ & Subroutine & $\mathsf{T}_{\text{online}}$ & Competitive Ratio \\
\hline \thickhline

&&  & \textsc{TMW-Select}\textsuperscript{\bluee{$\ast$}}&  &  \\
\multirow{-2}{*}{$\kp$~\cite{korula2009algorithms}} & \multirow{-2}{*}{$\mathrm{Bin}(n, 1/2)$} & \multirow{-2}{*}{$\infty$} & On $(I, j_t)$ & \multirow{-2}{*}{$O(|I|)$\textsuperscript{\bluee{\S}}} & \multirow{-2}{*}{$1/8$}  \\ 
\hline

&& \multirow{2}{*}{$\infty$} & Greedy & \multirow{2}{*}{$O(|E| \, \log |E|)$} & \multirow{2}{*}{$0.1715$} \\
\multirow{-2}{*}{$\kpc$~\cite{kaplan2022online}} & \multirow{-2}{*}{$\mathrm{Bin}(n, \sqrt{2}-1)$} & & On $(I, \cS_t)$ & & \\ 
\hline

\multirow{2}{*}{$\kpb$~\cite{kess13}} & \multirow{2}{*}{$\lfloor n/\sfe \rfloor$} & \multirow{2}{*}{$\infty$} & Optimal & \multirow{2}{*}{$O\sbp{(|I|+|J|)^3}$\textsuperscript{\bluee{\dag}}} & \multirow{2}{*}{$1/\sfe \approx 0.3678$} \\
& & & On $(I, \cS_t)$ & & \\
\hline \thickhline

\rowcolor{Gray} && & \textsc{TMW-Select} & & $\sig(d,\theta)$~\bluee{\eqref{eqn:sig}} \\
\rowcolor{Gray} \multirow{-2}{*}{$\smg(\ta)$} & \multirow{-2}{*}{$\lfloor n \, \ta \rfloor$} &  \multirow{-2}{*}{$d$ (Param.)} & On $(I, j_t)$ & \multirow{-2}{*}{$O(|I|)$} & $\max_\ta \sig(\infty, \ta) \sim 0.2562$ \\
\hline

\rowcolor{Gray} &&  & Reweighted $\BB$ & & $\alp \, \kap(d,\theta)$~\bluee{\eqref{eqn:kap}} \\
\rowcolor{Gray} \multirow{-2}{*}{$\alga(\BB, \ta)$} & \multirow{-2}{*}{$\lfloor n \, \ta \rfloor$} &   \multirow{-2}{*}{$d$ (Param.)} & On $(I, \cS_t)$ &   \multirow{-2}{*}{$\BB$-Dependent} & $\max_\ta \kap(\infty, \ta) = 1/\sfe$ \\
\hline \thickhline
\end{tabular}
}

\vspace{1em}
\parbox{\textwidth}{\footnotesize%
\textsuperscript{\bluee{$\ast$}}
\textsc{TMW-Select} refers to the \emph{Thresholded Max-Weight Rule}, which is used by both $\kp$ and $\smg(\ta)$ as an online subroutine for making real-time assignment decisions. Given an arriving online agent $j$ and a price vector $p : I \to \mathbb{R}_{\ge 0}$ over offline agents, the rule selects a neighbor $i^* \in I$ satisfying $w(i^*,j) \ge p(i^*)$ that maximizes $w(i,j)$ among all such neighbors. If no such neighbor exists, then $j$ is rejected. Otherwise, if $i^*$ is available (i.e., unmatched), then $j$ is assigned to $i^*$; if $i^*$ is already matched, then $j$ is rejected.

\quad
\textsuperscript{\bluee{\S}}
The online time complexity of both $\kp$ and $\smg$ is
$O(\max_{j \in J} \deg(j))$, i.e., linear in the maximum degree among all online agents in $J$.

\quad
\textsuperscript{\bluee{\dag}}
Alternatively, the running time can be expressed in terms of the number of edges as
$O(|E|^{3/2} \cdot \log |E|)$, based on the classical cost-scaling min-cost flow algorithm~\cite{goldberg1990finding}.
}
\label{table:coma}
\end{table}


\section{Main Contributions}
\label{sec:main-con}

This paper gives a degree-parameterized analysis of sampling-based algorithms for edge-weighted online matching under random arrival order.
The central question is how a bounded offline degree $d$ changes the guarantees of classical sampling--matching policies as a function of the sampling fraction $\theta$.
Under the bounded-offline-degree assumption (\sda), we study two complementary frameworks.

The first framework, \emph{Deterministic Greedy Sampling} ($\smg$), computes offline prices from an initial sample and then applies a local threshold rule to each arriving online agent. For this framework, we obtain a tight degree-parameterized competitiveness
characterization together with explicit worst-case bounds on the variance of
the number of matched offline agents, including order-tight behavior near
full sampling and a linear-in-$\theta$ upper bound in the low-sampling regime. The second framework, \emph{Black-Box Sampling--Matching} ($\sm(\BB,\theta)$), invokes an $\alpha$-approximate offline matching procedure on each evolving prefix graph.
For this framework, we prove a transfer theorem from offline approximation to online competitiveness.

A comparative summary of $\smg$, $\sm(\BB,\theta)$, and the classical $\kp$ and $\kpb$ algorithms is provided in Table~\ref{table:coma}. The table highlights the two main algorithmic contrasts: $\smg$ preserves the
local price-based online decisions of KP while tuning the sampling fraction to
$d$ and using a fixed sample size, whereas $\sm(\BB,\theta)$ retains the
prefix-optimization structure of $\kpb$ while introducing prefix-dependent
reweighting and allowing approximate offline solvers.

We use the following functions to state the guarantees:
\begingroup
\allowdisplaybreaks
\begin{align}
\kappa(d, \theta)
&:= \int_{\theta}^1  
\Bigl(
(1 - \zeta)^{d-1}
+ \bigl(1 - (1 - \zeta)^{d-1}\bigr)\frac{\theta}{\zeta}
\Bigr) \, \sd \zeta,
\label{eqn:kap}
\\
\eta(d, \theta)
&:= \frac{(1 - \theta)\theta}{1 - (1 - \theta)^d}
= \frac{1-\theta}{1 + (1-\theta) + \cdots + (1-\theta)^{d-1}},
\label{eqn:eta}
\\
\sigma(d, \theta)
&:= \kappa(d, \theta) - \frac{\theta(1 - \theta)}{2}.
\label{eqn:sig}
\end{align}
\endgroup
Here $\sigma(d,\theta)$ gives the tight competitiveness of $\smg(\theta)$, $\kappa(d,\theta)$ governs the black-box transfer theorem, and $\eta(d,\theta)$ appears in an intermediate diagnostic bound.
Lemma~\ref{lem:sig} in Appendix~\ref{app:lem-sig} collects the basic properties of these functions used below, including comparison inequalities and uniqueness of the relevant sampling optimizers.

\subsection{Degree-Parameterized Effectiveness of $\smg$}

Our first result characterizes the worst-case competitiveness of $\smg(\theta)$ exactly within this policy family.

\begin{theorem}[Section~\ref{sec:smg}]
\label{thm:smg-cr-tig}
Fix $\theta\in[0,1]$ and consider the fixed-$d$ asymptotic regime $n\to\infty$.
The deterministic greedy sampling framework $\smg(\theta)$ has linear online time complexity and is $\sigma(d,\theta)$-competitive, where $d$ is the maximum offline degree and $\sigma(d,\theta)$ is defined in~\eqref{eqn:sig}.
Moreover, for every fixed integer $d\ge2$ and every $\theta\in[0,1]$, this guarantee is tight within the $\smg(\theta)$ policy family: over bipartite graphs with offline degree at most $d$, the worst-case competitiveness of $\smg(\theta)$ is exactly $\sigma(d,\theta)$.\footnote{Throughout the paper, statements of tightness for $\smg(\theta)$ are with respect to this fixed policy family. They do not imply optimality among all online algorithms for degree-$d$ instances. The same convention applies to the variance analysis in Section~\ref{sec:asy}.}
\end{theorem}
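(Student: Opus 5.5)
The plan is to prove the lower bound $\sigma(d,\theta)$ by a charging argument against a fixed maximum-weight matching $M^*$, and to prove tightness by two gadget families whose losses add up to exactly $\kappa(d,\theta)$ and $\theta(1-\theta)/2$.

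\textbf{Setup.} Embed the uniformly random order into continuous time: give each online agent $j$ an independent arrival time $t_j\sim U[0,1]$. As $n\to\infty$ with $d$ fixed, the sample $\mathcal{S}_K$ with $K=\lfloor n\theta\rfloor$ is the set $\{j:t_j<\theta\}$ up to $o(1)$ error. For each $i$ this error affects only the $O(d)$ arrivals adjacent to $i$. The claim of linear online time is immediate, since \textsc{TMW-Select} inspects only $E_{j_t}$.

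\textbf{Lower bound via per-edge accounting.} For each $(i,j)\in M^*$ I would lower-bound the expected reward credited to it by splitting on $t_j=\zeta$.
\begin{itemize}
\item If $\zeta<\theta$, then $j$ is in the sample. Greedy on the sample then raises prices, and I charge this through the price vector. The Greedy $1/2$-approximation on the sample loses at most half of the sampled part of the relevant edges.
\item If $\zeta\ge\theta$, the arriving $j$ selects, by the thresholded max-weight rule, an edge of weight at least $w(i,j)$ whenever $p(i)\le w(i,j)$. So $j$ collects at least $w(i,j)$ provided its selected offline endpoint is still free.
\end{itemize}
The key bound in the second case concerns $i$, which has at most $d-1$ other neighbors. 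With probability $(1-\zeta)^{d-1}$ none of these arrived before $\zeta$, so $i$ is certainly free. Otherwise, let $j'$ be the earliest-arriving of the other neighbors that arrived before $\zeta$, i.e.\ the first such neighbor. With conditional probability $\theta/\zeta$ this $j'$ lies in the sample. In that case $j'$ cannot have taken $i$, and any later arrival targeting $i$ must beat the price that the sampled neighbors set. This gives the secretary-type factor, and integrating over $\zeta\in[\theta,1]$ yields the term $\kappa(d,\theta)$.

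The remaining loss is that the price of an endpoint may exceed $w(i,j)$, or that $j$ is diverted to an edge whose endpoint is taken. Both events require a specific pair of agents to be split by the sampling threshold, one before $\theta$ and one after. I would bound their cost by $\Pr[\text{split}]\cdot\tfrac12 = \theta(1-\theta)/2$ times the edge weight. The factor $\tfrac12$ comes from Greedy's rule that, within a conflicting pair, the heavier edge is chosen, so the loss is paid only on half of the ordered configurations. Summing over $M^*$ then gives $\mathbb{E}[\mathrm{ALG}]\ge(\kappa-\theta(1-\theta)/2)\,w(M^*)$.

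\textbf{Main obstacle.} I expect the hardest step to be making the loss term exactly $\theta(1-\theta)/2$ rather than something cruder. This requires coupling the Greedy prices on the sample with the availability events of the online phase, and these events are not independent across offline agents. I would first try a monotonicity (domination) argument. It would show that removing non-$M^*$ edges outside the $d$-neighborhood of $i$ can only help $i$ stay free, which reduces the problem to a local configuration of bounded size where the probabilities can be computed exactly.

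\textbf{Tightness, for fixed $d\ge2$.} I would build two gadget families and take a disjoint union scaled so that the losses add.
\begin{itemize}
\item \emph{First gadget.} Offline agent $i$ has one $M^*$-neighbor of weight $1$ and $d-1$ neighbors of weights $1+\epsilon,1+2\epsilon,\dots$, each of which is the $M^*$-partner of a private offline agent. The $M^*$-edge is then collected exactly when the secretary event above occurs, so this gadget's ratio tends to $\kappa(d,\theta)$ as $\epsilon\to0$.
\item \emph{Second gadget.} A length-$3$ path with weights $1,1+\epsilon,1$. Greedy in the sample misprices it exactly in the split configurations, which forces the additional $\theta(1-\theta)/2$ loss.
\end{itemize}
Replicating the gadgets many times lets the asymptotics $n\to\infty$ and $\epsilon\to0$ be taken jointly, giving ratio $\sigma(d,\theta)+o(1)$.
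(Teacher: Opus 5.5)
\textbf{Lower bound.} The per-edge accounting cannot yield $\sigma(d,\theta)$, because $\theta(1-\theta)/2$ is not a per-edge loss.

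Take the paper's extremal configuration (Figure~\ref{fig:2}). Offline agent $i^*$ is adjacent to $j^*$ with weight $1+\epsilon$, to $j_2$ with weight $1$, and to $d-2$ further agents with weight $\epsilon$. A second offline agent $\bar i$ has the single edge $(\bar i,j^*)$ of weight $1$, so $M^*=\{(\bar i,j^*),(i^*,j_2)\}$. If $j^*$ is sampled, Greedy sets $p(i^*)=1+\epsilon>1$, and $j_2$ can never qualify. As $n\to\infty$, $j_2$ is matched with probability exactly $\kappa(d,\theta)-\theta(1-\theta)$. So this $M^*$ edge pays the full split loss, not half of it. Crediting the offline endpoint instead is worse, since $\bar i$ is never matched.

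The factor $\tfrac12$ appears only because $j^*$ collects $w^*=1+\epsilon$ with probability $\kappa(d,\theta)$ while its own $M^*$ edge weighs $\bar w=1$; averaging the two edges gives $\sigma$. This is how the paper argues in Lemma~\ref{lem:bam-b}. It does a local analysis around the heaviest edge $e^*=(i^*,j^*)$. When $e^*\notin M^*$, it groups $i^*$ with the $M^*$-partner $\bar i$ of $j^*$ and minimizes $\bigl(w^*\kappa+\tilde w(\kappa-\theta(1-\theta))\bigr)/(\bar w+\tilde w)$ subject to $\bar w,\tilde w\le w^*$. It then recurses on the residual instance. Your justification that the heavier edge wins on half of the ordered configurations does not correspond to any event of the algorithm.

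Your derivation of $\kappa$ is also invalid for a fixed price vector. The earliest of $i$'s other pre-$\zeta$ neighbors being sampled does not keep $i$ free. Any neighbor arriving in $(\theta,\zeta)$ with weight at least $p(i)$ can still take $i$. Moreover, $p(i)$ need not equal that sampled neighbor's weight, since Greedy may match it elsewhere. That reasoning is the prefix-optimization argument, which is valid for the black-box framework of Section~\ref{sec:sda} but not for a price rule. The missing tool is the exact coupling of Lemma~\ref{lem:eqv}: $\mathsf{SM}^{\mathsf{g}}(\theta)$ is distributed like the virtual algorithm, in which each offline agent takes a uniformly random edge of its $\mathcal{M}^b$ set. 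Under this coupling, a heaviest edge at $i$ contributes $w^*(1-\theta)$ times the conditional mean of $1/D$ for a truncated geometric $D$, which equals $w^*\kappa(\tilde d,\theta)$.

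\textbf{Tightness.} Taking a disjoint union of gadgets cannot make their losses add. The ratio of a disjoint union is $\sum_g \mathrm{ALG}_g/\sum_g \mathrm{OPT}_g$, which is at least the smallest gadget ratio. Your second gadget is the $Z$-module, whose ratio tends to $(1-\theta)/2=\sigma(2,\theta)$. Since $\kappa(\cdot,\theta)$ is strictly decreasing for $\theta<1$, for $d\ge3$ and $\theta\in(0,1)$ we have $\sigma(d,\theta)<\sigma(2,\theta)$ and $\sigma(d,\theta)<\kappa(d,\theta)$. So even granting your claimed ratios, every such union stays strictly above $\sigma(d,\theta)$. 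In fact your first gadget does not have ratio $\kappa(d,\theta)$ either: for $d=2$ with unit private edges it is the $Z$-module itself, with ratio $(1-\theta)/2\neq(1-\theta^2)/2=\kappa(2,\theta)$.

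What is needed is a single connected gadget in which the secretary loss and the split loss act on the same offline agent. That is the configuration of Figure~\ref{fig:2} described above, where $\beta=2\kappa(d,\theta)-\theta(1-\theta)+O(\epsilon)$ against $\gamma=2$. Disjoint copies of that one gadget then handle $n\to\infty$.
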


Let $\theta^*_{\sigma}(d)$ denote the unique maximizer of $\sigma(d,\theta)$ over $\theta\in[0,1]$, and let $\sigma^*(d):=\sigma(d,\theta^*_{\sigma}(d))$.
The optimizer describes how sampling should vary with the offline degree.
For $d=1,2$, the optimal choice is $\theta^*_{\sigma}(d)=0$, yielding $\sigma^*(1)=1$ and $\sigma^*(2)=1/2$; thus, in the lowest-degree regimes, the optimal member of the $\smg$ family reduces to a greedy rule.
As $d$ grows, sampling becomes necessary to resolve competition among offline agents.
In the unbounded-degree limit, the competitiveness of $\smg$ admits the following closed-form expressions:
\begin{align*}
\kappa(\infty,\theta)
&= -\theta \ln \theta,
\qquad
\sigma(\infty,\theta)
= -\theta \ln \theta - \frac{\theta(1-\theta)}{2},
\\
\theta^*_{\sigma}(\infty)
&= -W\!\left(-\sfe^{-3/2}\right) \approx 0.3017,
\\
\sigma^*(\infty)
&= \frac{1}{2} W\!\left(-\sfe^{-3/2}\right)
\Bigl(
1 + 2 \ln\bigl(-W(-\sfe^{-3/2})\bigr)
+ W\!\left(-\sfe^{-3/2}\right)
\Bigr)
\approx 0.2562,
\end{align*}
where $W(\cdot)$ denotes the Lambert $W$ function.\footnote{For any $x \in [-\sfe^{-1},0]$, $W(x)$ is the unique real number $z \in [-1,0]$ such that $z \cdot \sfe^{z} = x$.}
Thus the optimized dense-limit guarantee is approximately $0.2562$, strictly above the classical $1/8$ guarantee for the corresponding greedy sampling analysis of~\citet{korula2009algorithms}, while preserving linear per-arrival time.
This limiting regime also illustrates the structural transition in the optimal sampling choice: $\theta=0$ is optimal in very sparse instances, whereas a positive sampling fraction is optimal as local competition increases.

The proof proceeds in two steps.
First, we give a diagnostic generalization of the analysis of~\citet{korula2009algorithms}, yielding the intermediate bound $\eta(d,\theta)/2$.
This diagnostic analysis identifies two inequalities behind the classical argument and shows where they lose tightness.
Second, we show that the tight examples for these two inequalities have incompatible local structures, and replace the two-step comparison by a direct local worst-case analysis.
This yields the exact expression $\sigma(d,\theta)$ for all $\theta\in[0,1]$ and all $d\ge2$.
Unless stated otherwise, the displayed formulas are in the fixed-$d$ asymptotic regime $n\to\infty$; finite-$n$ expressions are obtained by retaining the corresponding finite sums and products in the proofs.

\subsection{Worst-Case Variance Analysis for $\smg$}
\label{sec:asy}

We next analyze the match-count variability of $\smg$.
Consistent with Section~\ref{sec:pre}, the robustness metric is the
variance of the total number of matched offline agents, with randomness
taken over the random arrival order and any internal randomization of the
policy.
This metric measures allocation-volume variability rather than
weighted-reward variability.

For fixed $m$, $d$, and $\theta$, let $\psi_{m,d}(\theta)$ denote the
worst-case variance of $\smg(\theta)$ over instances with $m$ offline
agents and maximum offline degree at most $d$.
Formally, if $\cI(m,d)$ denotes this class of instances and
$|\smg(\theta,G)|$ denotes the number of matches produced on instance
$G$, then
\begin{align}
\psi_{m,d}(\theta)
:=
\sup_{G\in\cI(m,d)}
\Var\bigl[\,|\smg(\theta,G)|\,\bigr].
\label{def:var-psi}
\end{align}

\begin{theorem}[Section~\ref{sec:smg-var} and Appendix~\ref{app:smg-var-tig}]
\label{thm:smg-var-new}
For any $m$, $d$, and $\theta\in[0,1]$,
\[
\underline{\psi}_{m,d}(\theta)
\;\le\;
\psi_{m,d}(\theta)
\;\le\;
\overline{\psi}_{m,d}(\theta),
\]
where
\begin{align}
\overline{\psi}_{m,d}(\theta)
&:=
m\Bigl(
\tilde{\theta}(1-\tilde{\theta})
+
2\min\Bigl\{
\frac{\bar{\theta}^{4}}{\theta^{2}},
\;
(d+2)\bar{\theta}^{7/2}
\Bigr\}
\Bigr),\\
\underline{\psi}_{m,d}(\theta)
&:=
m\,\theta(1-\theta)
\Bigl(
1-\frac{1}{d}+O(1/n)
\Bigr),
\end{align}
with
\[
\tilde{\theta}:=\max\{1/2,\theta\},
\qquad
\bar{\theta}:=1-\theta.
\]
The $O(1/n)$ term is interpreted in the fixed-$d$ asymptotic regime.
\end{theorem}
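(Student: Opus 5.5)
The plan is to write the match count as $|\smg(\theta,G)| = \sum_{i\in I} X_i$, where $X_i$ is the indicator that offline agent $i$ is matched in the matching phase. Then
\[
\Var\Bigl[\sum_i X_i\Bigr] = \sum_i \Var[X_i] + \sum_{i\neq i'} \mathrm{Cov}(X_i,X_{i'}),
\]
and the two terms of $\overline{\psi}_{m,d}(\theta)$ are treated separately.

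\emph{Diagonal term.} Each $X_i$ is Bernoulli with parameter $p_i$, so $\Var[X_i]=p_i(1-p_i)$. It suffices to show $p_i(1-p_i)\le\tilde\theta(1-\tilde\theta)$. For $\theta\le 1/2$ this is immediate, since $p(1-p)\le 1/4$. For $\theta>1/2$, I would show $p_i\le 1-\theta$. Agent $i$ can only be matched to a neighbor $j$ arriving after the first $K=\lfloor n\theta\rfloor$ positions. If $i$ has a neighbor at all, fix the neighbor $j^\dagger$ whose arrival determines whether $i$ is ever selected, namely the first neighbor whose TMW-selected edge points to $i$. This arrival lands outside the sample with probability about $1-\theta$. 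This gives $p_i\le\bar\theta\le 1/2$, so $p_i(1-p_i)\le\theta(1-\theta)$ because $p(1-p)$ is increasing on $[0,1/2]$. I expect this step to need care, and may be replaced by the cruder fact that matching any $i$ requires at least one of its $\le d$ neighbors outside the sample; in that case the bound only holds up to lower-order corrections absorbed elsewhere.

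\emph{Covariance term.} This is the main obstacle, because the indicators are not negatively associated. Prices are computed from the sample by Greedy, and a shifted sample can raise some prices and lower others. My approach is to condition on the random set $S$ of sampled online agents, which fixes the price vector $p_S$. Given $S$, the matching phase is deterministic in the order of the remaining agents, and each online agent $j\notin S$ points to a single offline agent $i^*(j)$. Hence, conditionally, $X_i=\mathbf 1[\exists j\notin S: i^*(j)=i]$. These conditional indicators depend on disjoint sets of online agents, so they are conditionally independent functions of the set $S$, and the covariance reduces to $\mathrm{Cov}(\E[X_i\mid S],\E[X_{i'}\mid S])$. I would bound this with an Efron--Stein or coupling argument: $\E[X_i\mid S]$ changes only when the membership of an online agent in $S$ changes and that change affects $i$'s price or the pointer into $i$.

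Two estimates are then needed. Summing over $i'\neq i$ should give $\sum_{i'}|\mathrm{Cov}|\le\bar\theta^4/\theta^2$, via a Poissonization-style bound: each influential flip costs a factor $\bar\theta$, two flips are needed on each side, and normalization by the sampling probability contributes $\theta^{-2}$. Separately, counting the at most $d$ neighbors and the two-hop interactions gives $(d+2)\bar\theta^{7/2}$, with the fractional exponent coming from Cauchy--Schwarz on $\sqrt{\Var}$ factors of order $\bar\theta^{1/2}$. I expect matching these exact exponents to be the delicate part. I would aim first for bounds of the stated shape and adjust constants afterwards.

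\emph{Lower bound.} I would construct a disjoint union of $m$ stars, each offline agent having $d$ neighbors with identical weights. With identical weights and no sample competition, Greedy on the sample matches $i$ whenever at least one neighbor is sampled, which sets the price to a value that is still attainable in the online phase. The $X_i$ are then independent across stars up to $O(1/n)$ sampling-without-replacement corrections. For each star, $i$ is matched in the online phase with probability that I compute from $\theta$ and $d$, and I would tune the weights (for example, one heavy neighbor among $d-1$ equal ones) so that $\Var[X_i]=\theta(1-\theta)(1-1/d)$. Summing over the $m$ stars gives $\underline{\psi}_{m,d}(\theta)$.
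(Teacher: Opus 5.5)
Your treatment of the covariance term, which is the heart of the upper bound, has a genuine gap. Your own formula $X_i=\mathbf{1}[\exists j\notin S:\ i^*(j)=i]$ shows that $X_i$ is a deterministic function of the sample. So $\E[X_i\mid S]=X_i$, and the ``reduction'' to $\mathrm{Cov}(\E[X_i\mid S],\E[X_{i'}\mid S])$ is the identity: conditioning gains nothing, and all the dependence still sits in the random $K$-subset $S$. The Efron--Stein/Poissonization sketch is then reverse-engineered from the target exponents rather than derived. Poissonizing replaces $\smg(\theta)$ by $\kp(\theta)$ and would need a separate de-Poissonization step. Cauchy--Schwarz with $\sqrt{\Var[X_i]}\approx\bar{\theta}^{1/2}$ only gives $|\mathrm{Cov}|=O(\bar{\theta})$, never $\bar{\theta}^{7/2}$. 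Worse, the one concrete estimate you state, $\sum_{i'}|\mathrm{Cov}(X_i,X_{i'})|\le\bar{\theta}^4/\theta^2$, is false. Let $i_1$ be adjacent to $j_1,j_2$ and $i_2$ adjacent to $j_2$ only, with $w(i_1,j_1)>w(i_1,j_2)>w(i_2,j_2)$, padded with isolated online agents. Tracing $\smg$ gives $X_{i_1}=\mathbf{1}[j_1\notin\cS_K]$ and $X_{i_2}=\mathbf{1}[j_1\in\cS_K,\ j_2\notin\cS_K]$. Hence $X_{i_1}X_{i_2}\equiv0$ and $|\mathrm{Cov}(X_{i_1},X_{i_2})|\approx\theta\bar{\theta}^2$, which dominates both $\bar{\theta}^4/\theta^2$ and $(d+2)\bar{\theta}^{7/2}$ as $\theta\to1_-$. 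So only the \emph{signed} covariance sum can be controlled at the stated rate, and that requires a sign structure your plan lacks. This is what Lemma~\ref{lem:var-5} supplies. By Lemma~\ref{lem:var-4}, the event that an agent's decisive neighbor is unsampled can only decrease other agents' chances of having an edge processed, and only non-free agents are affected. Positive covariance can therefore arise only within clusters of non-free agents that are all negatively impacted by one free agent (Figure~\ref{fig:var-7}). For such a cluster the moments are explicit: $\mu_\ell=\bar{\theta}(1-\bar{\theta}^{\ell})$, $\mu_{0,\ell}=\bar{\theta}^2(1-\bar{\theta}^{\ell-1})$, and $\mu_{\ell,\ell'}=\bar{\theta}^2(1-\bar{\theta}^{\ell})$. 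Summing a geometric series gives $\bar{\theta}^4/\theta^2$, and the inequality $\ln(1-x)\ge -x/\sqrt{1-x}$ gives $(k^2+2k)\bar{\theta}^{7/2}$. One then sums over disjoint clusters of size at most $d$, discarding negative contributions.

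Your lower bound also mislocates the factor $1-1/d$. On a single star whose top weight is shared by $r$ neighbors, $i$ is matched iff not all $r$ of them are sampled, so $\Var[X_i]\approx\theta^r(1-\theta^r)$. No choice of weights makes this equal to $\theta\bar{\theta}(1-1/d)$ for general $\theta$; with distinct weights, as the paper's analysis assumes, it is always $\theta\bar{\theta}$. Nor are the cross-star terms negligible. With a unique heaviest neighbor $\ell_i$ one has $X_i=\mathbf{1}[\ell_i\notin\cS_K]$, and each of the $m(m-1)$ ordered pairs has covariance $-\theta\bar{\theta}/(n-1)$. Their total is $-m\theta\bar{\theta}(m-1)/(n-1)$, which for $n=md$ is about $-m\theta\bar{\theta}/d$, not an $O(1/n)$ correction. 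This sampling-without-replacement effect is exactly the source of the paper's $1-1/d$ (Appendix~\ref{app:smg-var-tig}). Your heavy-neighbor star is the right construction; doing this exact computation and using $n\ge md$ gives
\[
\Var[X]=m\,\theta\bar{\theta}\Bigl(1-\frac{m-1}{n-1}\Bigr)\ \ge\ m\,\theta\bar{\theta}\Bigl(1-\frac{1}{d}\Bigr).
\]
Finally, on the diagonal term, a neighbor ``whose TMW-selected edge points to $i$'' is unsampled by definition. The decisive neighbor must instead be defined through the processing order of Algorithm~\ref{alg:vir}, as in Lemma~\ref{lem:var-gen}, so that its identity is fixed before its own sample membership is revealed. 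Your fallback $\Pr[X_i=1]\le 1-\theta^d$ does not imply $\Pr[X_i=1]\le\bar{\theta}$. For $\theta>1/2$ it therefore loses a constant factor, not a lower-order term.
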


Theorem~\ref{thm:smg-var-new} gives an explicit worst-case upper bound
on the match-count variance of $\smg(\theta)$, together with a general
lower-bound construction.
The two endpoint regimes exhibit different behavior.

Near full sampling, the upper bound is order-tight.
Specifically, Lemma~\ref{lem:ws-va} in Appendix~\ref{app:ws-va} shows
that, for every fixed $d\ge2$,
\[
\psi_{m,d}(\theta)
=
\Theta\bigl(\overline{\psi}_{m,d}(\theta)\bigr)
\qquad
\text{as }\theta\to1_-.
\]
In particular,
\[
\psi_{m,d}(\theta)=\Theta\bigl(m(1-\theta)\bigr)
\qquad
\text{as }\theta\to1_-,
\]
in the fixed-$d$ asymptotic regime.

The low-sampling regime requires a different argument.
The general covariance bound in
$\overline{\psi}_{m,d}(\theta)$ does not vanish as
$\theta\to0_+$ and is therefore not tight in this regime.
Appendix~\ref{app:ws-va} establishes the additional bound
\[
\psi_{m,d}(\theta)
\le
m^3d\,\theta,
\]
which holds uniformly over all instances in $\cI(m,d)$.
Consequently, for every fixed $m$ and $d$,
\[
\lim_{\theta\to0_+}\psi_{m,d}(\theta)
=
\psi_{m,d}(0)
=
0.
\]
Moreover, for every fixed $m$ and $d\ge2$, combining this upper bound
with the lower bound in Theorem~\ref{thm:smg-var-new} gives
\[
\psi_{m,d}(\theta)
=
\Theta(\theta)
\qquad
\text{as }\theta\to0_+,
\]
where the implicit constants may depend on $m$ and $d$.

At the other endpoint,
\[
\lim_{\theta\to1_-}\psi_{m,d}(\theta)
=
\psi_{m,d}(1)
=
0.
\]
Thus $\psi_{m,d}(\theta)$ is one-sided continuous at both
$\theta=0$ and $\theta=1$.
The mechanisms at the two endpoints are nevertheless different.
When $\theta=1$, the sampling phase consumes all arrivals and
$\smg(1)$ deterministically outputs an empty matching.
When $\theta=0$, the sample is empty and all offline prices are zero;
under the fixed tie-breaking rule, the resulting set of matched offline
agents is deterministic, although the identities of their matched online
partners may depend on the arrival order.
For small positive $\theta$, randomness enters through the possibility
that one or more edge-bearing online agents are included in the sample,
and Appendix~\ref{app:ws-va} shows that the resulting worst-case
match-count variance vanishes at least linearly with $\theta$ for fixed
$m$ and $d$.

These statements should be interpreted as worst-case guarantees for the
fixed policy family $\smg(\theta)$, rather than as a minimax optimality
result over all online policies.
Variance minimization without an effectiveness requirement is trivial:
a policy that rejects every arrival has zero variance.
The relevant interpretation is therefore bi-criteria:
$\smg(\theta)$ achieves the competitiveness guarantee of
Theorem~\ref{thm:smg-cr-tig} while admitting explicit worst-case control
of match-count variability.

\xhdr{Deterministic Sampling as a Variance-Reduction Mechanism.}
To highlight the role of deterministic sampling as an algorithmic design
choice, let $\kp(\theta)$ denote the parameterized variant of the policy
proposed by~\citet{korula2009algorithms}, identical to $\smg(\theta)$
except that it selects a random sample size
$K\sim\mathrm{Bin}(n,\theta)$ rather than fixing
$K=\lfloor n\theta\rfloor$.

Consider an instance consisting of $|I|=|J|=m$ disjoint unit-weight
edges, and assume for simplicity that $m\theta$ is an integer.
Both $\smg(\theta)$ and $\kp(\theta)$ yield an expected number of
$m(1-\theta)$ matches.
However, their match-count variance differs sharply:
$\smg(\theta)$ always produces exactly $m(1-\theta)$ matches and hence
has zero match-count variance, whereas $\kp(\theta)$ produces
\[
m-K
=
m-\mathrm{Bin}(m,\theta),
\]
which has variance $m\theta(1-\theta)$.
Thus randomizing the sample size can introduce allocation-volume
variance even in an instance with no matching conflict, while fixing the
sample size removes this source of variability without changing the
expected number of matches in the example.

The proof of Theorem~\ref{thm:smg-var-new} must also handle dependence
among offline matching indicators.
These indicators need not be negatively dependent;
Appendix~\ref{app:neg-dep} gives explicit examples with positive
covariance.
The main technical step is therefore an aggregate covariance bound,
which controls the total pairwise covariance despite possible local
positive correlations.
\subsection{A Black-Box Sampling--Matching Framework}

Our second framework builds on the prefix-optimization paradigm of
$\kpb$ while introducing two modifications.
First, instead of requiring an exact maximum-weight matching computation
on every evolving prefix, we allow an arbitrary $\alpha$-approximation
procedure $\BB$ as a black-box offline solver.
Second, before invoking $\BB$, we apply a prefix-dependent reweighting to
the observed edges.
The reweighting reflects a conditional lower bound on the probability
that the offline endpoint of a tentative edge remains available when the
edge is proposed.
Together, these two ingredients separate the online sampling analysis
from the choice of offline solver: the resulting competitive guarantee
factors into the approximation ratio $\alpha$ of $\BB$ and the
degree-dependent sampling term $\kappa(d,\theta)$.

An additional feature of the black-box implementation requires particular
care.
For every fixed unordered prefix, the matching returned by $\BB$ must be
independent of the arrival order of the online agents within that prefix.
In particular, $\BB$ must not use the identity of the current arrival as
a distinguished vertex, the previous arrival order, or the current online
matching when selecting among candidate prefix matchings.
For a randomized black box, its internal random seeds are taken to be
independent of the arrival order.
\emph{This arrival-order-independence requirement is not specific to our
reweighting technique: it is also necessary for the classical
prefix-optimization framework underlying $\kpb$}.
Appendix~\ref{sec:amend} gives a concrete example showing that, without
this requirement, even an exact optimizer on every prefix can lead to
vanishing online competitiveness.

The framework is conceptually distinct from $\smg$.
While $\smg$ computes prices once from the initial sample and subsequently
makes local threshold-based decisions, $\sm(\BB,\theta)$ repeatedly
solves a reweighted global matching problem on the current prefix graph.
Even when $\BB$ is Greedy, $\sm(\BB,\theta)$ is not equivalent to
$\smg(\theta)$; Appendix~\ref{app:comp} gives explicit examples.

\begin{theorem}[A Black-Box Sampling--Matching Transfer Theorem;
Section~\ref{sec:sda}]
\label{thm:main-2}
Fix $\theta\in[0,1]$ and consider the fixed-$d$ asymptotic regime
$n\to\infty$.
Suppose that $\BB$ is an $\alpha$-approximation procedure for
maximum-weight matching on edge-weighted bipartite graphs and satisfies
the arrival-order-independence convention specified in
Section~\ref{sec:sda}.
Then $\sm(\BB,\theta)$ has asymptotic competitiveness guarantee
\[
    \alpha\,\kappa(d,\theta),
\]
where $\kappa(d,\theta)$ is defined in~\eqref{eqn:kap}.
\end{theorem}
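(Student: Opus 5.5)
The plan is to follow the $\kpb$ template: lower-bound the expected reward collected at each step $t>K=\lfloor n\theta\rfloor$ and then sum over $t$. The sum is a Riemann sum in $\zeta=t/n$ that converges to $\kappa(d,\theta)$ as $n\to\infty$ with $d$ fixed.

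\textbf{Step 1: random-order structure.} Fix $t$. The unordered prefix $\cS_t$ is a uniformly random $t$-subset of $J$, and conditionally on $\cS_t$ the current arrival $j_t$ is uniform in $\cS_t$. By the arrival-order-independence convention, the matching $M_t$ returned by $\BB$ on the reweighted prefix graph depends only on the unordered set $\cS_t$ and on seeds that are independent of the order. This is exactly what allows the two classical symmetry arguments to go through:
\begin{itemize}
\item The expected weight of the edge proposed at step $t$ is $\frac1t\,\mathbb{E}[\text{weight of } M_t]$.
\item An offline agent $i$ is proposed at an earlier step $s$ with probability at most $1/s$, independently across the backward-revealed arrivals.
\end{itemize}

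\textbf{Step 2: degree-dependent availability bound.} Let $e_t=(i,j_t)$ be the proposed edge. Agent $i$ has at most $d-1$ neighbors other than $j_t$. Condition on whether any of them lies in $\cS_{t-1}$.
\begin{itemize}
\item If none does, then $i$ has never been proposed, so it is available with probability $1$. For a fixed set of $d-1$ neighbors, this event has probability $\prod_{k=1}^{d-1}\bigl(1-\tfrac{t-1}{n-k}\bigr)\to(1-\zeta)^{d-1}$.
\item Otherwise, the backward-revealing argument of $\kpb$ applies: $i$ is unproposed during steps $K+1,\dots,t-1$ with probability at least $\prod_{s=K+1}^{t-1}(1-1/s)=K/(t-1)\to\theta/\zeta$.
\end{itemize}
These two regimes together produce the integrand of $\kappa$. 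The prefix-dependent reweighting is designed to plug this bound into the objective. I would define it as
\[
w_t(i,j)=w(i,j)\Bigl(\mathbb{1}[\text{$i$ has no other neighbor in }\cS_t]+\mathbb{1}[\text{otherwise}]\,\tfrac{K}{t}\Bigr),
\]
or, if needed, the averaged version $w(i,j)\bigl((1-\zeta)^{d-1}+(1-(1-\zeta)^{d-1})\theta/\zeta\bigr)$. The point is that the expected realized reward of step $t$ becomes at least $\mathbb{E}[w_t(e_t)]$.

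\textbf{Step 3: compare with $\mathrm{OPT}$ and sum.} Let $M^*$ be an optimal offline matching. Its restriction to $\cS_t$ is feasible for the prefix, so
\[
\mathbb{E}[w_t(M_t)]\;\ge\;\alpha\,\mathbb{E}[w_t(M^*\cap\cS_t)].
\]
Each edge $(i,j)\in M^*$ lies in $\cS_t$ with probability $t/n$. Conditioned on that, the event ``$i$ has no other neighbor in $\cS_t$'' has probability at least about $(1-\zeta)^{d-1}$, because $\deg(i)\le d$. This gives
\[
\mathbb{E}[w_t(M_t)]\;\ge\;\alpha\,\tfrac{t}{n}\bigl((1-\zeta)^{d-1}+(1-(1-\zeta)^{d-1})\theta/\zeta\bigr)\mathrm{OPT}.
\]
Dividing by $t$ and summing over $t>K$ yields $\alpha\,\kappa(d,\theta)\,\mathrm{OPT}$ in the limit.

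\textbf{Main obstacle.} The hardest step is making Step 2 rigorous jointly with the edge actually chosen. The availability of $i$ and the event that $\BB$ selects $(i,j_t)$ are correlated through $\cS_t$. In particular, the ``no other neighbor of $i$ in $\cS_{t-1}$'' event is itself a function of the prefix that $\BB$ sees.

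My first attempt would be to condition on the unordered set $\cS_t$, and hence on $M_t$ and on the reweighting factors, which are functions of $\cS_t$. I would then run the backward-revealing argument only over the internal order of $\cS_{t-1}$. In the degree-1 case the availability indicator is then deterministic given $\cS_t$. In the other case the $K/(t-1)$ bound holds conditionally, as in $\kpb$.

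If the indicator-based reweighting breaks monotonicity of the comparison with $M^*$, I would fall back to the averaged, $\cS_t$-independent factor. For that version I would prove only the needed inequality in expectation, using that $\deg(i)\le d$ makes ``$i$ has no other neighbor in $\cS_t$'' stochastically at least as likely as it is for a fixed set of $d-1$ neighbors.
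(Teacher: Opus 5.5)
Your proposal is correct and follows essentially the paper's own route. It uses the same per-vertex reweighting: factor $1$ when $i$ has prefix degree one, and a $K/(t-1)$-type factor otherwise, where your $K/t$ is a harmless, slightly conservative variant of the paper's $b_t=K/(t-1)$. It uses the same conditional availability bound, obtained by conditioning on the unordered prefix, the current arrival and the seed, and then exposing earlier arrivals backwards. And it uses the same comparison of $\BB$'s output with the restriction of a fixed optimal matching, where $d$ enters only through that matching's fixed edges via $q_{n,d}(t)$.

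The fallback to an averaged, $\cS_t$-independent factor is unnecessary, and its justification would not go through. A uniform rescaling leaves $\BB$'s choice unchanged, and your degree-based probability bound holds for fixed vertices of the optimal matching, not for the endpoint that $\BB$ selects adaptively.
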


By Lemma~\ref{lem:sig}, for each fixed $d\ge1$, the function
$\kappa(d,\theta)$ admits a unique maximizer over $\theta\in[0,1]$,
denoted by $\theta^*_{\kappa}(d)$, and we write
\[
\kappa^*(d):=
\kappa\bigl(d,\theta^*_{\kappa}(d)\bigr).
\]
Thus, the sampling fraction $\theta$ can be tuned to the degree bound
$d$, while $\alpha$ quantifies the performance loss incurred by the
chosen offline solver.

For exact matching and in the unbounded-degree limit, we have
\[
    \kappa(\infty,\theta)=-\theta\ln\theta,
    \qquad
    \theta^*_{\kappa}(\infty)=1/\sfe,
    \qquad
    \kappa^*(\infty)=1/\sfe.
\]
Hence, with an exact black box and $\theta=1/\sfe$, our framework
recovers the classical $1/\sfe$ competitiveness guarantee of $\kpb$.
The algorithms themselves are not identical in general, since our
framework additionally applies the prefix-dependent reweighting.
For finite $d$, the same expression quantifies the benefit of bounded
offline degree; for example,
\[
\kappa^*(5)\approx0.372>1/\sfe\approx0.368.
\]

For approximate solvers, the guarantee is multiplied by the approximation
ratio $\alpha$.
For example, Greedy gives $\alpha=1/2$, and in the unbounded-degree
limit with $\theta=1/\sfe$ yields competitiveness $1/(2\sfe)$.
This illustrates the effectiveness--runtime tradeoff exposed by the
framework: an exact solver maximizes $\alpha$ but may be expensive when
invoked on every prefix graph, whereas a faster approximate solver can
reduce online computation at the cost of its approximation factor.

Theorem~\ref{thm:main-2} is a transfer theorem for expected weighted
performance.
Unlike $\smg$, the online running time and variance behavior of
$\sm(\BB,\theta)$ are solver-dependent because $\BB$ is invoked on each
evolving prefix.
We therefore view $\sm(\BB,\theta)$ as a flexible meta-framework that
combines prefix reweighting with arbitrary offline approximation
procedures, transferring their approximation guarantees to the online
random-order setting while explicitly accounting for bounded offline
degree.

\section{Related Work}
\label{app:related}

Online matching and its variants have been studied extensively since the seminal work of~\citet{kvv}, spanning a wide range of arrival models, weight structures, and algorithmic paradigms. We refer the reader to the surveys of~\citet{mehta2012online} and~\citet{huang2024online} for comprehensive overviews. In this section, we focus on prior work most closely related to \emph{online matching under random arrival order}, which is the primary setting of this paper.

\xhdr{Random arrival order.} For the unweighted case under random arrival order, \citet{mahdian2011online} and \citet{karande2011online} independently proposed randomized algorithms that surpass the classical $1-1/\sfe \approx 0.632$ bound. For the vertex-weighted variant, \citet{huang2019online} introduced a generalized ranking algorithm achieving a competitive ratio of $0.6534$. The edge-weighted setting has also been studied under random order: \citet{kess-stoc} analyzed online packing LPs, which generalize edge-weighted online matching, though their focus was on settings with large offline capacities and resource constraints. \citet{ashlagi2019edge} proposed a windowed model in which online agents may be deferred for up to $d$ rounds before a decision is required.

\xhdr{Algorithmic frameworks beyond sampling--matching.}
While this paper focuses on sampling--matching algorithms, primal--dual techniques constitute another powerful approach for online matching under random order; see~\citet{buchbinder2009design} for a foundational treatment. These methods have been applied across a range of matching and allocation problems, often emphasizing dual feasibility and potential-based arguments rather than explicit sampling.

\xhdr{Strategic and incentive-aware models.}
Another related line of work studies edge-weighted online bipartite matching in strategic environments~\cite{reiffenhauser2019optimal,krysta2012online}, where online agents are modeled as bidders and offline agents as items. A central objective in these settings is to design mechanisms that ensure truthfulness, so that agents have no incentive to misreport private valuations.

\xhdr{Degree-bounded models.}
There has been growing interest in online matching models that incorporate degree constraints. \citet{buchbinder2007online} studied the AdWords problem under bounded offline degree, where each offline agent is adjacent to at most $d \ll n$ online agents under adversarial arrival order, and proposed a $1-(1-1/d)^d$-competitive primal--dual algorithm; this bound was later shown to be tight by~\citet{azar2017online}. \citet{naor2018near} introduced the $(k,d)$-bipartite model, in which offline agents have degree at least $k$ and online agents have degree at most $d$, a setting orthogonal to ours, which focuses exclusively on bounded offline degree. \citet{albers2022tight} studied unweighted and vertex-weighted matching on $(k,d)$-bipartite graphs under adversarial order and extended their analysis to the AdWords problem~\cite{albers2022online}, where they derived an optimal deterministic algorithm. \citet{cohen2018randomized} analyzed unweighted matching on $d$-regular bipartite graphs and achieved a $1-O(\ln d/d)$ competitive ratio.   \citet{aamand2022optimal} studied an unweighted model in which the algorithm is given predictive access to offline degrees, and \citet{bhangale2025optimal} investigated the extreme constant-degree regime by studying online bipartite matching in graphs with offline degree at most two, obtaining optimal guarantees in this highly structured setting.   \citet{feng2025degree} studied unweighted online bipartite matching under adversarial arrival order on $(d,d)$-bounded graphs, a generalization of $d$-regular graphs, comparing \textsc{Ranking} with Online Correlated Selection (OCS) and showing that OCS outperforms \textsc{Ranking} for every fixed $d\ge2$.

\xhdr{Alternative arrival models.}
Finally, several variants of edge-weighted online matching have been studied under alternative arrival assumptions, including adversarial order with free disposal~\cite{fahrbach2022edge} and stochastic arrivals drawn from known distributions~\cite{huang2021online,brubach2020online}.

\section{Diagnostic Analysis of the Sampling-Based Framework $\smg$}
\label{sec:smg-wm}

We begin the competitive analysis by formally presenting the framework $\smg(\theta)$ in Algorithm~\ref{alg:smg}. 
\begin{algorithm}[ht!]
\caption{Deterministic Greedy Sampling: $\smg(\theta)$ with $\theta \in [0,1]$}
\label{alg:smg}
\DontPrintSemicolon
\textbf{Sampling Phase}:\;
Collect the first $K = \lfloor n \, \theta \rfloor$ online arrivals and let $\cS_K$ denote the resulting sample set.\;
Run the Greedy algorithm (\gre) on the induced subgraph $\cG_K := (I, \cS_K)$,\footnotemark and let $\cM_K$ denote the resulting matching.\;
For each offline agent $i \in I$, set the price $p(i)$ equal to the weight of the edge incident to $i$ in $\cM_K$, or set $p(i) = 0$ if $i$ is unmatched.\;

\textbf{Matching Phase}:\;
\tcc{Each online round runs in $O(\deg(j_t))$ time.}
Initialize $\cM \leftarrow \emptyset$.\;
\For{$t = K+1$ \KwTo $n$}{
    Let $j_t \in J$ arrive at time $t$.\;
    Let $\cE_t \gets \{ e = (i, j_t) \in E_{j_t} : w(e) \ge p(i) \}$.
    \tcc*[r]{$O(\deg(j_t))$} \label{alg:s6}
  If $\cE_t \neq \emptyset$, let $e_t=(i_t,j_t)$ be a maximum-weight edge in $\cE_t$; add $e_t$ to $\cM$ if $i_t$ is unmatched, and reject $j_t$ otherwise. If $\cE_t=\emptyset$, reject $j_t$.
    \tcc*[r]{$O(\deg(j_t))$} \label{alg:s5}
}
\end{algorithm}
\footnotetext{The Greedy algorithm sorts all edges in $\cG_K := (I, \cS_K)$ in decreasing order of weight and sequentially adds each edge to $\cM_K$ if doing so preserves the matching constraint.}

This section revisits and generalizes the sampling-based analysis originally developed by \citet{korula2009algorithms}, extending it to a fully parameterized setting indexed by the maximum offline degree $d$ and the sampling parameter $\theta$.
\emph{Our objective is not to replicate the prior analysis, but to use this generalized formulation as a diagnostic lens: to identify where the classical argument becomes loose and to motivate the refinements that lead to a tight analysis in the next subsection.}

Our main result in this section is the following theorem.

\begin{theorem}[Section~\ref{app:thm-warm}]
\label{thm:smg-cr}
The policy $\smg$ achieves an effectiveness level of at least $\tfrac{1}{2}\,\eta(d,\theta)$, where $d$ is the maximum offline degree and $\eta(d,\theta)$ is defined in~\eqref{eqn:eta}.
\end{theorem}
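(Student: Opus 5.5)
We follow the Korula--Pál template, organized per edge of a fixed optimal offline matching. Let $\cM^*$ be a maximum-weight matching of $G$ and fix $e^*=(i,j)\in\cM^*$. We charge to $e^*$ an expected contribution of at least $\tfrac12\eta(d,\theta)\,w(e^*)$ and then sum over $\cM^*$. The factor $\tfrac12$ comes from Greedy, and $\eta$ comes from a degree-sensitive probability that the ``right'' configuration of sample and availability occurs. Concretely, we split the reward into two parts: a price part $\sum_i p(i)$, collected whenever $i$ is matched in the matching phase, and a surplus part $w(i,j_t)-p(i)$. The argument then needs two inequalities, and these are exactly the ones the paper later calls loose.

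\textbf{Inequality (A), Greedy versus the sample optimum.} Greedy on $\cG_K$ is a $\tfrac12$-approximation, so $\sum_i p(i)\ge \tfrac12\,w(\mathrm{OPT}(\cG_K))$. Every edge of $\cM^*$ whose online endpoint lies in $\cS_K$ is present in $\cG_K$. Since $K=\lfloor n\theta\rfloor$ is fixed and the order is uniform, each online agent lies in the sample with probability $\theta+O(1/n)$. Hence $\E[\sum_i p(i)]\ge \tfrac{\theta}{2}\,w(\cM^*)$.

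\textbf{Inequality (B), a surviving-neighbour argument.} Fix an offline $i$ with neighbours $j_1,\dots,j_{d'}$, where $d'\le d$, ordered by arrival. We condition on the event that the first neighbour of $i$ arriving after the sample is the one selected via \textsc{TMW-Select}, so that $i$ is still free when this happens. With a fixed-size sample and degree $d$, the probability that at least one neighbour of $i$ falls after the sample while some earlier structure holds is a geometric-type sum. Writing $\bar\theta=1-\theta$, the relevant quantity is
\[
\frac{\theta\bar\theta}{1-\bar\theta^{d}}=\frac{\bar\theta}{\sum_{k<d}\bar\theta^k},
\]
which is the conditional probability, given that some neighbour of $i$ is post-sample, that a specified post-sample neighbour arrives first among the post-sample ones, weighted by the sample event. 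We then show that whenever $j$ (the $\cM^*$ partner of $i$) arrives post-sample and $i$ has not yet been taken, $j$ is proposed along an edge of weight at least $\max\{w(e^*),p(\cdot)\}$ minus the price. Summing the price part and the surplus part per offline vertex yields at least $\tfrac12\eta(d,\theta)$ of $w(e^*)$ in expectation.

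\textbf{Assembly.} We take the minimum over the two charging routes, or a convex combination as Korula--Pál do. The $\tfrac12$ loss from (A) multiplies the probability from (B), which gives $\tfrac12\eta(d,\theta)\,w(\cM^*)$. We then check the asymptotics as $n\to\infty$ with $d$ fixed, where sampling without replacement converges to independent $\theta$-coins.

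\textbf{Main obstacle.} The hardest step is (B). Two difficulties interact. First, a tentative edge may target an offline vertex $i'\neq i$, and we must argue that such a rejection or redirection only helps the charging, via the price of $i'$. Second, the availability of $i$ depends on earlier post-sample neighbours of $i$. Bounded degree is what turns this dependence into the $\sum_{k<d}\bar\theta^k$ denominator. To handle both, we will first condition on the set of sampled agents among the $\le d$ neighbours of $i$ and then use exchangeability of the relative order of the post-sample neighbours.
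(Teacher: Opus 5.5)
\textbf{There is a genuine gap.} It sits in the assembly step and in inequality (B).

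Your inequality (A) is correct; it is the paper's bound $\E[w(\cM^a)]\ge\frac{\theta}{2}\OPT$. The problem is that a price $p(i)$ is earned only if $i$ is matched during the matching phase. That requires a post-sample neighbour that reaches $p(i)$ and also wins the thresholded max-weight rule. The earned price can therefore be negligible however large $\E[\sum_i p(i)]$ is:
\begin{itemize}
\item For $d=1$ the earned price is identically $0$, because $p(i)>0$ forces the unique neighbour of $i$ into the sample.
\item Take a star whose centre has one edge of weight $1$ and $d-1$ edges of weight at most $\varepsilon$. Then $\E[\sum_i p(i)]\approx\theta$, but the earned price is at most $\varepsilon$: if the heavy neighbour is sampled, no later neighbour reaches the price $1$, and otherwise $p(i)\le\varepsilon$.
\end{itemize}
So there is no ``collection probability'' by which (A) can be multiplied, and $\eta$ is not one. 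Indeed $\frac12\eta(d,\theta)=\frac{\theta}{2}\cdot\frac{\bar\theta}{1-\bar\theta^{d}}$, and the second factor exceeds $1$ whenever $d=1$ and $\theta<1/2$. The event you describe in (B) does not have probability $\theta\bar\theta/(1-\bar\theta^d)$.

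The whole guarantee must then come from the surplus part, and there the sketch has no valid argument. \smg takes the maximum-\emph{weight} qualifying edge, not the maximum-surplus one, and rejects $j$ without fallback if that endpoint is already taken. Hence the surplus realized by the $\cM^*$-partner $j$ of $i$ need not be at least $w(e^*)-p(i)$, even when $i$ is free. Moreover, whether a neighbour of $i$ proposes to $i$ at all depends on the prices of other offline agents. Exchangeability of the post-sample neighbours of $i$ therefore does not control the availability of $i$.

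\textbf{The missing idea is the paper's virtual algorithm $\vir$,} which removes both difficulties at once. It processes all edges in decreasing weight order. An edge $(i,j)$ with $j$ unprocessed that keeps $\cM^a$ a matching is sent to $\cM^a$ if $j\in\cS_K$, and to the pseudo-matching $\cM^b$ otherwise. Lemma~\ref{lem:eqv} shows two things: $\cM^a$ is exactly Greedy on the sample, and the output of \smg is distributed as picking, for each offline $i$, a uniformly random edge of $\cE_i^b$ (the first-arriving one). The proof then has two steps.
\begin{itemize}
\item[(a)] Each processed edge's online endpoint is (asymptotically) sampled with probability $\theta$ given the history of the process. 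Hence $\E[w(\cM^b)]=\frac{\bar\theta}{\theta}\E[w(\cM^a)]\ge\frac{\bar\theta}{2}\OPT$. This is the step that turns your factor $\frac{\theta}{2}$ into the $\frac{\bar\theta}{2}$ actually needed.
\item[(b)] $D_i=|\cE_i^b|$ is a truncated geometric count, so $\E[D_i\mid D_i\ge 1]\le\sum_{\ell=1}^d\bar\theta^{\ell-1}$. The average weight of $\cE_i^b$ is nonincreasing in $D_i$, so a correlation (FKG) inequality gives $\E[w(\cM^b)]\le\E[w(\widehat{\cM})]\sum_{\ell=1}^d\bar\theta^{\ell-1}$.
\end{itemize}
Dividing (a) by (b) yields exactly $\frac12\eta(d,\theta)\OPT$. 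In this route $\eta$ arises as the ratio of $\bar\theta$ to a conditional mean degree, not as a probability.
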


\xhdr{Remark on Theorem~\ref{thm:smg-cr}.}
In the extremely dense setting ($d \to \infty$), the function $\eta(d,\theta)$ converges to $\theta(1-\theta)$, which is maximized at $\theta=1/2$.
This choice yields an effectiveness level of $1/8$, thereby recovering the guarantee established for the parameter-free version of \kp.

\subsection{Proof of Theorem~\ref{thm:smg-cr}}
\label{app:thm-warm}

To carry out this diagnostic analysis, we consider the virtual auxiliary algorithm $\vir$, formally stated in Algorithm~\ref{alg:vir}, which originates from prior work on \kp.
The algorithm $\vir$ is assumed to have full access to the input graph and the sample set $\cS_K$, and processes all edges offline in decreasing order of weight.
As shown below, $\vir$ produces a matching that is statistically identical to the output of $\smg$, while enabling a transparent separation between sampling effects and congestion effects.

\begin{algorithm}[ht!]
\caption{Auxiliary algorithm $\vir$ used for analyzing $\smg$}
\label{alg:vir}
\DontPrintSemicolon
Sort all edges in $E$ in decreasing order of weight.\;
Initialize $\widehat{\cM} = \cM^a = \cM^b = \emptyset$. Mark all $j \in J$ as \emph{unprocessed}.\;

\For{each edge $e = (i,j) \in E$ in sorted order \label{alg:v1}}{
    \If{$j$ is unprocessed \textbf{and} $\cM^a \cup \{e\}$ remains a matching \label{alg:v2}}{
        Mark $j$ as processed.\;
        Add $e$ to $\cM^a$ if $j \in \cS_K$; otherwise, add $e$ to $\cM^b$.\label{alg:v5}
    }
}

Let $\cR^b \subseteq I$ be the set of offline agents incident to at least one edge in $\cM^b$.\;
\For{each $i \in \cR^b$}{
    Let $\cE_i^b \subseteq \cM^b$ be the set of edges in $\cM^b$ incident to $i$.\;
    Uniformly sample one edge $e \in \cE_i^b$ and add $e$ to $\widehat{\cM}$.\label{alg:v3}  
}
\end{algorithm}

\xhdr{Remarks on $\vir$.}
The set $\cM^b$ may form a pseudo-matching, in which an offline agent can be incident to multiple edges.
For example, in a star graph with a single offline agent, $\vir$ will keep adding edges to $\cM^b$ until encountering an online agent in $\cS_K$, at which point the corresponding edge is added to $\cM^a$ and no further edges incident to that online agent are processed.

\begin{lemma}[Appendix~\ref{app:lem-eqv}]
\label{lem:eqv}
The matching $\cM$ produced by $\smg$ in Algorithm~\ref{alg:smg} is statistically identical to the matching $\widehat{\cM}$ produced by $\vir$ in Algorithm~\ref{alg:vir}.
\footnote{That is, the two random matchings share the same distribution.}
\end{lemma}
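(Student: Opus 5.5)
The plan is to condition on the unordered sample set $\cS_K$ and compare the two procedures edge by edge. Under uniformly random arrival order with a fixed sample size $K=\lfloor n\theta\rfloor$, $\cS_K$ is a uniformly random $K$-subset of $J$. Conditional on $\cS_K$, the arrival order of the remaining $n-K$ agents is a uniformly random permutation of $J\setminus\cS_K$, independent of the order inside the sample. Throughout, I fix one global tie-breaking order on edges of equal weight. Both Greedy in the sampling phase and the sorting in $\vir$ use this order, as does the choice of a maximum-weight edge in $\cE_t$. Some such convention is needed for the statement to hold exactly.

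\textbf{Step 1: $\cM^a$ equals $\cM_K$.} In $\vir$, whether an edge $e=(i,j)$ with $j\in\cS_K$ is accepted depends only on $\cM^a$ and on whether $j$ is already processed. Edges added to $\cM^b$ never change $\cM^a$, and they only mark non-sample agents as processed. So the sample edges, taken in sorted order, are processed exactly as Greedy on $\cG_K=(I,\cS_K)$ would process them. Greedy's matching constraint is enforced by the condition that $\cM^a\cup\{e\}$ is a matching. Hence $\cM^a=\cM_K$, and the prices $p(i)$ in Algorithm~\ref{alg:smg} can be read off from $\cM^a$.

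\textbf{Step 2: $\cM^b$ is the set of proposals $e_t$.} Take a non-sample agent $j$. When an edge $(i,j)$ is reached in the sorted order, $\cM^a\cup\{(i,j)\}$ is a matching exactly when $i$ has not yet been matched in $\cM^a$. That happens exactly when $i$'s eventual $\cM_K$-edge, if it has one, comes later in the sorted order, i.e.\ $w(i,j)\ge p(i)$ under the fixed tie-breaking. Unmatched $i$ have $p(i)=0$, so for them the condition always holds. So the first edge of $j$ accepted by $\vir$ is the first edge in sorted order that lies in $\cE_t$, which is the maximum-weight edge $e_t$ chosen by $\smg$. If $\cE_t=\emptyset$, then $j$ contributes no edge. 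Therefore $\cM^b=\{e_t : j_t\notin\cS_K,\ \cE_t\neq\emptyset\}$, and this set is a deterministic function of $\cS_K$ alone, not of the order of the later arrivals.

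\textbf{Step 3: the final selection.} In $\smg$, the offline agent $i$ is matched precisely along the edge of $\cE_i^b$ whose online endpoint arrives first among the endpoints of $\cE_i^b$. Later proposals to $i$ are rejected, and proposals never go to any agent other than $i_t$. So $\cM$ is obtained by taking, for each $i\in\cR^b$, the earliest-arriving edge of $\cE_i^b$. The sets of online endpoints of the $\cE_i^b$ are pairwise disjoint, since each non-sample agent proposes at most once.

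The main point is that for disjoint sets $A_1,\dots,A_r$ under a uniform random permutation, the earliest elements of the $A_k$ are independent and each is uniform on its $A_k$. This follows from a symmetry argument. Permuting the elements within each $A_k$ independently preserves the uniform distribution of the arrival order. For any target tuple of earliest elements, a suitable product of transpositions, one within each $A_k$, maps the event for one tuple bijectively onto the event for any other tuple. Hence every tuple is equally likely.

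Conditional on $\cS_K$, this is exactly the independent uniform sampling in line~\ref{alg:v3} of $\vir$. Averaging over $\cS_K$ gives that $\cM$ and $\widehat{\cM}$ have the same distribution.
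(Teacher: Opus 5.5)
Your proof is correct and follows the same three-part structure as the paper's proof. You identify $\mathcal{M}^a$ with the sample Greedy matching $\mathcal{M}_K$, identify $\mathcal{M}^b$ with the set of threshold proposals (the paper's modified output $\widetilde{\mathcal{M}}$), and observe that first-arrival selection among the proposals to each offline agent is uniform. Your version is somewhat more careful than the paper's: you replace the induction on $K$ with a direct invariance argument, you justify independence of the first-arrival choices across offline agents via the within-block transposition symmetry (the paper only asserts this), and you note that a consistent tie-breaking convention, including in the threshold test $w(e)\ge p(i)$, is needed, whereas the paper implicitly assumes distinct weights.
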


\subsubsection{Proof of Theorem~\ref{thm:smg-cr}}
For any set of edges $M \subseteq E$, let $w(M) := \sum_{e \in M} w(e)$.
For ease of notation, we use $\OPT$ to denote both an optimal matching and the corresponding total weight it achieves.
By Lemma~\ref{lem:eqv}, the matching $\cM$ produced by $\smg$ and the matching $\widehat{\cM}$ produced by $\vir$ are identically distributed.
Therefore, it suffices to lower bound $\E[w(\widehat{\cM})] / \OPT$.

\begin{lemma}
\label{lem:smg-cr-1} (1) $\E[w(\cM^a)] \ge (\ta/2) \, \OPT$;  (2) When $d = 1$, $\E[w(\cM^b)] = (1-\ta)\,\OPT$, and when $d \ge 2$,
\begin{align}
\E[w(\cM^b)] \ge \frac{1-\ta}{2} \, \OPT . \label{ineq:bm}
\end{align}
\end{lemma}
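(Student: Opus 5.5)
The plan is to analyze the \emph{label-free} process underlying $\vir$ and show that the split of its edges into $\cM^a$ and $\cM^b$ is essentially a $\theta$ versus $(1-\theta)$ coin flip per online agent. Let $\cU := \cM^a \cup \cM^b$. In $\vir$, an edge $e=(i,j)$ is accepted exactly when two conditions hold: $j$ is still unprocessed, and $i$ is not already covered by $\cM^a$. Only edges in $\cM^a$ block their offline endpoint. So each online agent $j$ contributes at most one edge $e_j \in \cU$, namely its heaviest edge whose offline endpoint is not yet $\cM^a$-covered when that edge is scanned. The label of $j$ (whether $j\in\cS_K$) only decides whether $e_j$ goes to $\cM^a$ or $\cM^b$.

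\textbf{Step 1 (label independence).} I will show that the identity of $e_j$ (or its nonexistence) is a function of the labels of the \emph{other} online agents only. The $\cM^a$-status of an offline agent at the moment $j$'s edges are scanned is determined by edges of other online agents added earlier in the scan. By induction over the sorted order, $j$'s own label never affects which edges are added before $j$ becomes processed.

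\textbf{Step 2 (splitting in expectation).} Step 1 gives
\[
\E[w(\cM^a)] = \sum_j \E\bigl[\mathbf 1\{j\in\cS_K\}\, w(e_j)\bigr],
\]
where $w(e_j)$ depends only on $\cS_K\setminus\{j\}$. Under a Bernoulli($\theta$) sample this factors exactly into $\theta\,\E[w(e_j)]$. For the fixed-size sample $K=\lfloor n\theta\rfloor$, the conditional probability that $j\in\cS_K$ given the labels of the others is $\theta+o(1)$ uniformly in the fixed-$d$ regime $n\to\infty$. This is because $w(e_j)$ depends only on the labels of the boundedly many relevant neighbors in the local scan, which is where I would spend care. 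If a fully exact finite-$n$ statement is wanted, I would instead condition on $|\cS_K\setminus\{j\}|$ and use exchangeability. Either way, $\E[w(\cM^a)] \approx \theta\,\E[w(\cU)]$ and $\E[w(\cM^b)] \approx (1-\theta)\,\E[w(\cU)]$.

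\textbf{Step 3 (charging $\OPT$ to $\cU$, deterministically).} For each $(i,j)\in\OPT$, consider the moment it is scanned. Either it enters $\cU$, or $j$ was already processed by a heavier edge $e_j\in\cU$, or $i$ was already covered by a heavier edge of $\cM^a\subseteq\cU$. Charge $w(i,j)$ to that blocking edge. Each $e\in\cU$ is charged at most twice: at most once through its online endpoint, since $\OPT$ is a matching, and at most once through its offline endpoint, and only if $e\in\cM^a$. Hence $w(\cU)\ge \OPT/2$ for every sample. Combined with Step 2, this yields part (1) and inequality~\eqref{ineq:bm}.

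\textbf{Step 4 (the case $d=1$).} Every offline agent has a single edge, so an offline agent covered by $\cM^a$ never blocks any other online agent. Each $j$ therefore takes its maximum-weight edge, and these edges have distinct offline endpoints. Thus $\cU$ is the edge set of a matching that picks the best edge of each star centered at an online agent, which is exactly an optimal matching; so $w(\cU)=\OPT$. Moreover, $\cM^b$ has one edge per offline agent, so $\widehat{\cM}=\cM^b$. Step 2 then gives $\E[w(\cM^b)]=(1-\theta)\OPT$.

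\textbf{Main obstacle.} I expect Step 2 under the fixed sample size to be the main obstacle: the labels are not independent, so the factorization holds only up to the correlation induced by sampling without replacement. I will handle it asymptotically, as permitted by the fixed-$d$ regime $n\to\infty$.
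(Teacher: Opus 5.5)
\textbf{Comparison with the paper's route.} Your route differs from the paper's. The paper proves (1) without any label-splitting. $\cM^a$ is exactly Greedy on $(I,\cS_K)$, the restriction of an optimal matching to $\cS_K$ has expected weight $\theta\,\OPT$, and Greedy is a $1/2$-approximation. It then derives (2) from the identity $\E[w(\cM^b)]=\frac{1-\theta}{\theta}\E[w(\cM^a)]$, justified only by ``conditioning on the order in which edges are processed.'' You instead prove the pointwise bound $w(\mathcal U)\ge\OPT/2$ by charging, and obtain both parts from the split.

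Most of this is sound:
\begin{itemize}
\item Step 1 (the edge $e_j$ does not depend on $j$'s own label) is the correct justification of the split, which the paper glosses over.
\item Step 3 is valid: each edge of $\mathcal U$ is charged at most once through its online endpoint and, only if it lies in $\cM^a$, once through its offline endpoint.
\item Step 4 is exact, because for $d=1$ the edge $e_j$ is always $j$'s heaviest edge regardless of any labels.
\item Your route also handles $\theta=0$, where the paper's ratio is undefined.
\end{itemize}

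\textbf{The gap is Step 2 under the fixed-size sample.} Both reasons you give are false.
\begin{itemize}
\item Given the labels of all other online agents, $j$'s label is determined (because $|\cS_K|=K$). Its conditional probability is therefore $0$ or $1$, not $\theta+o(1)$.
\item $e_j$ can depend on unboundedly many labels. Take offline degree $2$ and edge order $(i_1,j_1)>(i_1,j_2)>(i_2,j_2)>(i_2,j_3)>\cdots$. Then $e_{j_k}$ changes according to whether $j_1,\dots,j_{k-1}$ are all sampled.
\end{itemize}

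Your exchangeability fallback does not close this. Write $f_j:=w(e_j)$ as a function of the other sampled agents, and let $U_k$ be a uniform $k$-subset of $J\setminus\{j\}$. Conditioning on $j$'s label gives
\[
\E[w(\cM^a)]=\tfrac{K}{n}\textstyle\sum_j\E[f_j(U_{K-1})],\qquad \E[w(\cM^b)]=\tfrac{n-K}{n}\textstyle\sum_j\E[f_j(U_{K})].
\]
But $w(\mathcal U)\ge\OPT/2$ only controls a mixture of these two sums, and they genuinely differ. For one offline agent with neighbors of weights $w_1>w_2$, $n=2$, $K=1$, one gets $\E[w(\cM^a)]=(w_1+w_2)/2$ but $\E[w(\cM^b)]=w_1/2$. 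So the paper's exact ratio identity is itself only asymptotic.

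\textbf{The missing idea is monotonicity.} Enlarging the sample gives every offline agent a weakly heavier Greedy partner. This holds because Greedy is the unique stable matching under edge-ranked preferences, and adding agents on one side weakly benefits the other side. Hence the set of $j$'s edges whose offline endpoint is still uncovered when scanned can only shrink, and $f_j$ is non-increasing in the sample.

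Now apply your pointwise bound to a uniformly random $(K+1)$-subset and use this monotonicity. This gives $\sum_j\E[f_j(U_K)]\ge\OPT/2$, hence $\E[w(\cM^b)]\ge\frac{n-K}{n}\cdot\frac{\OPT}{2}\ge\frac{1-\theta}{2}\OPT$ exactly. It also gives $\E[w(\cM^a)]\ge\frac{K}{n}\cdot\frac{\OPT}{2}$, since $f_j(U_{K-1})$ dominates $f_j(U_K)$.

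Without this monotonicity step, or the paper's direct Greedy argument for (1), your part (1) rests on an unjustified $o(1)$ approximation, even though the lemma states it non-asymptotically.
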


\begin{proof}
We first prove part~(1).
Let $\cM_K^*$ denote an optimal matching in the random subgraph $\cG_K := (I,\cS_K)$.
Since $\Pr[j \in \cS_K] = K/n = \ta$ for every $j \in J$, linearity of expectation implies
\[
\E[w(\cM_K^*)] \ge \ta \, \OPT .
\]
Because $\cM_K$ is returned by the greedy algorithm on $\cG_K$, we have
\[
\E[w(\cM_K)] \ge \frac{1}{2} \, \E[w(\cM_K^*)] \ge (\ta/2) \, \OPT .
\]
By construction of $\vir$, for any fixed sample set $\cS_K$, the matching $\cM^a$ coincides with $\cM_K$.
Therefore, $\E[w(\cM^a)] = \E[w(\cM_K)] \ge (\ta/2) \, \OPT$.

We now prove part~(2).
When $d = 1$, each offline agent has degree one, so every edge not selected into $\cM^a$ must belong to $\cM^b$, implying $\E[w(\cM^b)] = (1-\ta)\,\OPT$.
When $d \ge 2$, conditioning on the order in which edges are processed, each edge is assigned 
to $\cM^a$ or $\cM^b$ according to whether its online endpoint lies in $\cS_K$, 
which occurs with marginal probabilities $\ta$ and $1-\ta$, respectively. Consequently,
\[
\E[w(\cM^b)] = \E[w(\cM^a)] \, \frac{1-\ta}{\ta}
\ge \frac{1-\ta}{2} \, \OPT.
\]
\end{proof}

\begin{lemma}
\label{lem:deg}
Let $D$ denote the (random) degree of an offline agent in $\cM^b$.
Then
\begin{align}\label{ineq:2-4-a}
\E[D \mid D \ge 1] \le \sum_{\ell=1}^d (1-\ta)^{\ell-1},
\end{align}
where $d$ is the maximum offline degree.
\end{lemma}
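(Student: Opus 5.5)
Fix an offline agent $i$ with neighbors $j_1,\dots,j_{d'}$, $d'\le d$, listed in decreasing order of edge weight, i.e., the order in which $\vir$ scans the edges $(i,j_\ell)$. The plan is to describe $D$ as a stopping process along this list and bound the tail probabilities $\Pr[D\ge \ell \mid D\ge 1]$ by $(1-\ta)^{\ell-1}$. Then
\[
\E[D\mid D\ge1]=\sum_{\ell\ge1}\Pr[D\ge\ell\mid D\ge1]\le\sum_{\ell=1}^{d}(1-\ta)^{\ell-1}.
\]

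\textbf{Structure of the scan.} While $\vir$ processes the edges at $i$, $i$ stays available in $\cM^a$ until the first edge $(i,j_\ell)$ that is examined with $j_\ell$ unprocessed and $j_\ell\in\cS_K$. At that moment the edge enters $\cM^a$, and afterwards no edge at $i$ can enter either $\cM^a$ or $\cM^b$: $\cM^b$ edges require $\cM^a\cup\{e\}$ to be a matching. Before that moment, each examined edge with unprocessed online endpoint $j_\ell\notin\cS_K$ enters $\cM^b$ and increments $D$. Edges whose online endpoint was already processed are skipped.

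\textbf{Conditional probability of continuing.} Condition on $D\ge\ell-1$ together with the whole history of $\vir$ up to the moment the next \emph{effective} edge at $i$ is examined. An effective edge is one with an unprocessed online endpoint $j$. The key claim is that, given this history, $j$ belongs to $\cS_K$ with probability at least $\ta$. Whether $j$ is unprocessed depends only on the sample membership of other online agents, so $j$'s membership is not yet revealed. Under uniform-without-replacement sampling with a fixed $K$, the conditional probability given the revealed memberships of other agents is $(K-a)/(n-b)$, where $a$ revealed agents are in the sample out of $b$ revealed ones. This tends to $\ta$ in the fixed-$d$ regime, up to $O(d/n)$ corrections, and is exactly $\ta$ under the independent-coin view.

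Given this claim, each effective edge after the first $\cM^b$ edge stops the process with probability at least $\ta$. Hence $\Pr[D\ge\ell\mid D\ge\ell-1]\le 1-\ta$, and chaining from $\ell=2$ gives $\Pr[D\ge\ell\mid D\ge1]\le(1-\ta)^{\ell-1}$. Since $D\le d'\le d$, the sum truncates at $d$.

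\textbf{Main obstacle.} The hard part is justifying the conditional independence rigorously. The event that $j$ is unprocessed when $(i,j)$ is examined is determined by earlier heavier edges at $j$ and by sample memberships of other online agents, and these interact through $\cM^a$ in a complicated way. I would handle this with a principle of deferred decisions: reveal each online agent's membership in $\cS_K$ only when an edge incident to it is first examined while it is unprocessed. In $\vir$, an online agent is processed exactly at that moment, so each membership is queried at most once. Every query is then a fresh draw, with probability $\ta$ in the independent model and $(K-a)/(n-b)\to\ta$ for fixed sample size. The stopping-time argument applies directly to this sequence of fresh draws.
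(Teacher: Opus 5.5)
Your proposal is correct and follows essentially the paper's route: treat $i$'s incident edges, scanned in decreasing weight order, as a stopping process; bound each continuation probability by $1-\theta$; and sum the tails $\Pr[D\ge\ell\mid D\ge1]\le(1-\theta)^{\ell-1}$ for $\ell\le d$. Your deferred-decision handling of edges whose online endpoint was already processed is more careful than the paper, which assumes all $d$ neighbors are scanned and conditions only on the failures among them. One caveat: in your global revelation scheme the number $b$ of revealed agents is not bounded by $d$, so for fixed $K$ the ratio $(K-a)/(n-b)$ is not pointwise $\theta+O(d/n)$ (it can be off by a constant when $n-b$ is small). Your argument is therefore exact only under the independent-coin view, which is the same asymptotic-independence approximation the paper itself uses.
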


\begin{proof}
Fix an offline agent $i \in I$ and assume, without loss of generality, that it has exactly $d$ online neighbors.
Under $\vir$, edges incident to $i$ are processed in decreasing order of weight and are added to $\cM^b$ until the first incident edge whose online endpoint lies in $\cS_K$ is encountered, at which point that edge is added to $\cM^a$.

Viewing the addition of an edge to $\cM^a$ as a \emph{success} and to $\cM^b$ as a \emph{failure}, the conditional random variable $(D \mid D \ge 1)$ equals the number of failures before the first success, truncated at $d$, given that at least one failure occurs.
Since sampling is performed without replacement, the events corresponding to different neighbors are not independent; nevertheless, at each step the conditional probability of observing another failure is at most $1-\ta$.\footnote{For example, when $\ell=2$, conditioning on $D \ge 1$ means that the highest-weight neighbor is not in $\cS_K$. The conditional probability that the second-highest-weight neighbor is also not in $\cS_K$ equals $(n-K-1)/(n-1)$, which is at most $1-K/n = 1-\ta$. More generally, after $\ell-1$ failures, the corresponding conditional probability is $(n-K-(\ell-1))/(n-(\ell-1)) \le 1-\ta$.}

Therefore, for each $\ell \in [d]$,
\[
\Pr[D \ge \ell \mid D \ge 1] \le (1-\ta)^{\ell-1}.
\]
Consequently,
\[
\E[D \mid D \ge 1]
= \sum_{\ell=1}^d \Pr[D \ge \ell \mid D \ge 1]
\le \sum_{\ell=1}^d (1-\ta)^{\ell-1}.
\]
\end{proof}

\begin{proof}[Proof of Theorem~\ref{thm:smg-cr}]
For each offline agent $i$, let $\cE_i^b$ denote the set of edges incident to $i$ in $\cM^b$, and let $D_i := |\cE_i^b|$.
By construction of $\vir$,
\[
\E[w(\widehat{\cM})]
= \sum_{i \in I} \E\!\left[ \sum_{e \in \cE_i^b} \frac{w(e)}{D_i} \,\middle|\, D_i \ge 1 \right] \Pr[D_i \ge 1].
\]

For a fixed offline agent $i$, the random variable
\[
\left( \sum_{e \in \cE_i^b} \frac{w(e)}{D_i} \,\middle|\, D_i \ge 1 \right)
\]
is non-increasing in $(D_i \mid D_i \ge 1)$, since $\vir$ processes incident edges in decreasing order of weight.
Applying the FKG inequality, we obtain
\begin{align*}
\E\!\left[ \sum_{e \in \cE_i^b} w(e) \right]
&= \E\!\left[
\left( \sum_{e \in \cE_i^b} \frac{w(e)}{D_i} \right) D_i \,\middle|\, D_i \ge 1
\right] \Pr[D_i \ge 1] \le
\E\!\left[
\sum_{e \in \cE_i^b} \frac{w(e)}{D_i}
\right]
\, \E[D_i \mid D_i \ge 1].
\end{align*}

Summing over all offline agents and invoking Lemma~\ref{lem:deg}, we obtain
\begin{align}
\E[w(\cM^b)]
&\le
\E[w(\widehat{\cM})]
\, \sum_{\ell=1}^d (1-\ta)^{\ell-1}.
\label{ineq:smg-1}
\end{align}

Finally, for $d \ge 2$, combining Lemma~\ref{lem:eqv}, inequality~\eqref{ineq:smg-1}, and Lemma~\ref{lem:smg-cr-1}, we obtain
\begin{align*}
\frac{\E[w(\cM)]}{\OPT}
&= \frac{\E[w(\widehat{\cM})]}{\OPT}
\quad \text{(by Lemma~\ref{lem:eqv})} \nonumber \\
&\ge
\frac{1}{\OPT}
\, \frac{\E[w(\cM^b)]}{\sum_{\ell=1}^d (1-\ta)^{\ell-1}}
\quad \text{(by inequality~\eqref{ineq:smg-1}, comparing $\widehat{\cM}$ to $\cM^b$)}\\
&\ge
\frac{1}{2}
\, \frac{1-\ta}{\sum_{\ell=1}^d (1-\ta)^{\ell-1}}
= \frac{1}{2} \, \eta(d,\ta)
\quad \text{(by inequality~\eqref{ineq:bm} and the definition of $\eta(d,\ta)$)}.
\end{align*}

The case $d = 1$ follows analogously. This completes the proof of Theorem~\ref{thm:smg-cr}, establishing the stated effectiveness guarantee.
\end{proof}

\section{A Tight Competitive Analysis of $\smg$}\label{sec:smg}
\subsection{Motivation for a Tight Analysis of $\smg$}
\label{sec:smg-t}

As shown in the proof of Theorem~\ref{thm:smg-cr}, the diagnostic analysis of $\smg$ relies on two inequalities:
inequality~\eqref{ineq:smg-1}, which relates $\E[w(\widehat{\cM})]$ to $\E[w(\cM^b)]$, and
inequality~\eqref{ineq:bm}, which bounds $\E[w(\cM^b)]$ in terms of the offline optimum $\OPT$.
Although each inequality can be tight for suitable instances, the mechanisms underlying their tightness are fundamentally different.

\begin{figure}[ht!]
\centering
\begin{minipage}[b]{0.48\linewidth}
\centering
\begin{tikzpicture}[scale=0.9]
    \node[circle, draw, thick, minimum size=7mm] (i1) at (0,0) {$i_1$};
    \node[circle, draw, thick, minimum size=7mm] (i2) at (0,-2) {$i_2$};

    \node[circle, draw, thick, minimum size=7mm] (j1) at (4,0) {$j_1$};
    \node[circle, draw, thick, minimum size=7mm] (j2) at (4,-2) {$j_2$};

    \draw[ultra thick] (i1) -- node[above, midway, blue] {$1$} (j1);
    \draw[ultra thick] (i2) -- node[above, midway, blue] {$1$} (j2);
    \draw[ultra thick] (i2) -- node[above, midway, blue] {$1+\ep$} (j1);
\end{tikzpicture}

\vspace{1mm}
{\small (a) $Z$-shaped module (tightens inequality~\eqref{ineq:bm})}
\end{minipage}
\hfill
\begin{minipage}[b]{0.48\linewidth}
\centering
\begin{tikzpicture}[scale=0.9]
    \node[circle, draw, thick, minimum size=7mm] (i) at (0,0) {$i^*$};

    \node[circle, draw, thick, minimum size=7mm] (j1) at (4,0) {$j_1$};
    \node[circle, draw, thick, minimum size=7mm] (j2) at (4,-1.5) {$j_2$};
    \node[circle, draw, thick, minimum size=7mm] (jd) at (4,-3) {$j_d$};

    \draw[dotted, thick] (4,-2) -- (4,-2.5);

    \draw[ultra thick] (i) -- node[above, midway, blue] {$1$} (j1);
    \draw[ultra thick] (i) -- node[above, midway, blue] {$1$} (j2);
    \draw[ultra thick] (i) -- node[below, midway, blue] {$1$} (jd);
\end{tikzpicture}

\vspace{1mm}
{\small (b) Star module (tightens inequality~\eqref{ineq:smg-1})}
\end{minipage}

\caption{
Two structurally distinct tightness mechanisms underlying the diagnostic analysis.
Panel (a) shows a $Z$-shaped weighted module that can make inequality~\eqref{ineq:bm} tight by combining an essentially unique optimal matching with a worst-case configuration for greedy.
Panel (b) shows a star-like module that can make inequality~\eqref{ineq:smg-1} tight by inducing a truncated geometric degree process in $\cM^b$.
Formal constructions and calculations are provided in Appendix~\ref{app:tight}.
}
\label{fig:tightness-mechanisms}
\end{figure}

\xhdr{Structural requirements for tightening inequality~\eqref{ineq:bm}.}
We first consider inequality~\eqref{ineq:bm},
\[
\E[w(\cM^b)] = \frac{1-\ta}{2} \, \OPT.
\]
As illustrated in Figure~\ref{fig:tightness-mechanisms}(a), the auxiliary set $\cM^b$ can be viewed as the outcome of applying the greedy algorithm to a random edge set $\cE$, where each edge $e \in E$ is included with marginal probability $1-\ta$.
For this inequality to be tight, two conditions must hold simultaneously.
First, the expected optimal value over $\cE$ must satisfy $\E[\OPT(\cE)] = (1-\ta)\,\OPT$, which requires the input graph to have an essentially unique optimal matching.
Second, greedy must attain its worst-case approximation ratio of $1/2$ on $\cE$ with high probability.
This combination is known to require highly structured instances, such as collections of $Z$-shaped modules, in which greedy consistently performs suboptimally.
As a result, tightness of inequality~\eqref{ineq:bm} arises only in regimes with little effective randomness and highly constrained optimal structure; see Appendix~\ref{app:z-tight} for concrete constructions.

\xhdr{Structural requirements for tightening inequality~\eqref{ineq:smg-1}.}
In contrast, the tightness of inequality~\eqref{ineq:smg-1},
\[
\E[w(\cM^b)] = \E[w(\widehat{\cM})] \, \sum_{\ell=1}^d (1-\ta)^{\ell-1},
\]
depends on the tightness of the degree bound in Lemma~\ref{lem:deg},
\[
\E[D \mid D \ge 1] = \sum_{\ell=1}^d (1-\ta)^{\ell-1}.
\]
As shown schematically in Figure~\ref{fig:tightness-mechanisms}(b), this equality requires a markedly different structural regime:
each offline agent must have exactly $d$ incident edges revealed in strictly decreasing order of weight, so that the degree process in $\cM^b$ follows a truncated geometric distribution.
Such behavior is realized, for example, in collections of disjoint star graphs; see Appendix~\ref{app:star-tight}.

Taken together, these two regimes are fundamentally incompatible.
Tightness of inequality~\eqref{ineq:bm} requires instances with little randomness and highly constrained optimal structure, whereas tightness of inequality~\eqref{ineq:smg-1} requires substantial randomness and star-like local neighborhoods.
Consequently, the two sources of looseness identified in the diagnostic analysis cannot, in general, be simultaneously realized by the same input instance.
This incompatibility motivates a refined analysis that directly compares $\E[w(\widehat{\cM})]$ with $\OPT$, without passing through the intermediate quantity $\cM^b$, which is the focus of the tight analysis developed next.



\subsection{Proof of Theorem~\ref{thm:smg-cr-tig}}
\label{sec:smg-cr-tig}

We sketch the proof of Theorem~\ref{thm:smg-cr-tig}, emphasizing the key ideas and worst-case structures.
All technical details are deferred to Appendix~\ref{app:smg-cr-tight}.

The analysis focuses on a local worst-case scenario around a fixed offline agent $i^*$.
Let $\beta$ denote the expected gain contributed by the edge incident to $i^*$ that is selected into $\bam$ by the virtual algorithm $\vir$, and let $\gamma$ denote the weight of the edge incident to $i^*$ in the offline optimal matching $M^*$.
Establishing a tight lower bound on $\beta/\gamma$ directly yields a lower bound on $\E[w(\bam)]/\OPT$.

\begin{lemma}[Appendix~\ref{app:smg-cr-tight}]
\label{lem:bam-b}
\begin{align}
\frac{\beta}{\gamma} \ge \kap(d,\ta) - \frac{\ta(1-\ta)}{2},
\label{ineq:bam-b}
\end{align}
where $\kap(d,\ta)$ is defined in~\eqref{eqn:kap}.
\end{lemma}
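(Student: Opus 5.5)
The plan is to work entirely through the virtual algorithm $\vir$, which is legitimate by Lemma~\ref{lem:eqv}. I would fix the offline agent $i^*$ and its optimal partner $j^*$, so that $e^*=(i^*,j^*)\in M^*$ and $\gamma=w(e^*)$. Next I would embed the random order into continuous time: each online agent receives an independent arrival time $\zeta\sim U[0,1]$, and the sample consists of the agents with $\zeta\le\theta$. This replaces $\lfloor n\theta\rfloor$ sampling without replacement, and the error is $O(1/n)$ for fixed $d$ as $n\to\infty$. Conditioning on the arrival time $\zeta$ of $j^*$, I will lower bound the expected weight of the edge at $i^*$ that $\vir$ puts into $\bam$. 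Then I integrate over $\zeta\in[\theta,1]$. The term $\kappa(d,\theta)$ should come out as the main term and $\theta(1-\theta)/2$ as an explicit loss term.

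\textbf{Step 1 (availability of $i^*$).} Let $N$ be the set of other neighbours of $i^*$, so $|N|\le d-1$. I split into two events.
\begin{itemize}
\item No agent of $N$ arrives before $\zeta$, which has probability $(1-\zeta)^{d-1}$. Then $i^*$ cannot already be taken when $j^*$ arrives. In the $\vir$ picture, $i^*$ then receives an edge in $\bam$ of weight at least the relevant weight.
\item Some agent of $N$ arrives before $\zeta$. Then I would use the secretary-type argument from the analysis of $\kpb$. Among the competitors that arrive before $\zeta$, the one that matters (the heaviest, in the $\vir$ ordering) lies in the sample $[0,\theta]$ with probability $\theta/\zeta$. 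In that case it only raises the price and cannot occupy $i^*$.
\end{itemize}
Summing the two events and integrating over $\zeta$ gives exactly the integrand of $\kappa(d,\theta)$. This yields a term $\kappa(d,\theta)\,\gamma$, provided that whenever $j^*$ finds $i^*$ free it indeed proposes to $i^*$ with a gain of at least $\gamma$.

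\textbf{Step 2 (the threshold loss).} The proviso above can fail in two ways:
\begin{itemize}
\item $j^*$ may prefer another neighbour $i'$ whose price it clears;
\item the price $p(i^*)$, set by Greedy on the sample, may be driven by an edge whose endpoint $j'\neq j^*$ also competes for another offline agent.
\end{itemize}
I would use a charging argument, in the style of the proof of Lemma~\ref{lem:smg-cr-1}, to show that these failure events cost at most $\tfrac{\theta(1-\theta)}{2}\gamma$ in expectation. The factor $\theta(1-\theta)$ is the probability that exactly one endpoint of a conflicting pair of edges is sampled. The factor $\tfrac12$ reflects that greedy blocking loses at most half, as in the $Z$-shaped module of Figure~\ref{fig:tightness-mechanisms}(a). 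Whenever $j^*$ is diverted to a heavier edge, that gain is credited to the other agent and not double counted.

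\textbf{Step 3 (reduction to a local worst case).} To make Steps 1 and 2 rigorous, I would argue that the local ratio $\beta/\gamma$ is minimized when two things hold. First, all $d-1$ competitors of $i^*$ have weight just above $\gamma$, which gives the star module. Second, one competitor is attached through a $Z$-module, so that a single greedy conflict is present. I would then verify that monotone modifications (raising weights, adding neighbours up to degree $d$) cannot decrease the bound. Finally, summing $\beta\ge(\kappa-\theta(1-\theta)/2)\gamma$ over all $i^*\in I$ gives the global bound.

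I expect Step 2 to be the main obstacle. The two looseness mechanisms identified in Section~\ref{sec:smg-t} interact there, and one must show that the greedy-conflict loss and the competitor-availability loss combine additively rather than multiplicatively. My first attempt would be to condition on whether $j^*$'s alternative edge is sampled and to do the case analysis directly on the local $Z$-plus-star configuration.
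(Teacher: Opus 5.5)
Your Step~1 is sound only when the optimal edge of $i^*$ is the first edge processed at $i^*$. In that case, conditioning on the partner's arrival time $\zeta$ and using that the heaviest earlier competitor is sampled with probability $\theta/\zeta$ integrates to the same share $\Pr[e\in\mathcal{M}^b]\,\E[1/D\mid e\in\mathcal{M}^b]=\kappa(\tilde d,\theta)\ge\kappa(d,\theta)$ that the paper computes in its Case~1. Here $\tilde d\le d$ is the number of edges at $i^*$ that can be processed. Everything outside this case is pushed into Step~2, and Step~2 is asserted, not proved.

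Worse, the per-agent inequality you plan to sum is false as stated. Use the notation of Figure~\ref{fig:2}: $e=(i^*,j^*)$ is the globally heaviest edge and is \emph{not} in $M^*$. The agent $\bar{\imath}$ whose optimal partner is $j^*$ is then never matched, so its loss is its entire $\gamma$, not $\tfrac{\theta(1-\theta)}{2}\gamma$. For $i^*$ itself, a sampled $j^*$ does not ``only raise the price'': Greedy on the sample sets $p(i^*)=w(e)$, which is above the weight of $i^*$'s optimal edge $\tilde e$, so $\tilde e$ is blocked.

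The missing idea is the grouping that makes the accounting local. Take $e$ (the first edge the virtual algorithm processes) and split on whether $e\in M^*$. If it is not, charge $i^*$ jointly with $\bar{\imath}$:
\begin{itemize}
\item The first edge at $i^*$ earns at least $w(e)\kappa(d,\theta)$.
\item $\tilde e$, being second at $i^*$, earns exactly $w(\tilde e)\bigl(\kappa(\tilde d,\theta)-\theta(1-\theta)\bigr)$. The lost term is the event that $j^*$ is sampled and the partner of $\tilde e$ is not; this is what your ``exactly one endpoint'' heuristic is detecting.
\item Since $w(e)\ge\max\{w(\tilde e),w(\bar e)\}$, where $\bar e=(\bar{\imath},j^*)$, the pair has ratio at least $\kappa(d,\theta)-\theta(1-\theta)/2$.
\end{itemize}
You also need the case where $\tilde e$ is pre-empted at its online endpoint by a third agent. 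That gives a three-agent group and uses $\kappa(d-1,\theta)\ge\kappa(d,\theta)$. Then you recurse on the residual instance.

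The tool you propose for Step~2 cannot give this bound. A charging ``in the style of Lemma~\ref{lem:smg-cr-1}'' goes through Greedy's global $\tfrac12$-approximation on the sample and the identity $\E[w(\mathcal{M}^b)]=\frac{1-\theta}{\theta}\E[w(\mathcal{M}^a)]$. That is exactly the two-step comparison Section~\ref{sec:smg-t} identifies as lossy. It introduces a \emph{multiplicative} $\tfrac12$; this is what produces the diagnostic $\eta(d,\theta)/2$. Even on top of your Step~1 it would yield at most about $\kappa(d,\theta)/2$. That is strictly below the target for $\theta\in(0,1)$, since $\kappa(d,\theta)\ge-\theta\ln\theta>\theta(1-\theta)$.

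The $\tfrac12$ in $\theta(1-\theta)/2$ is not Greedy's approximation ratio. It comes from spreading a single lost term $\theta(1-\theta)\,w(\tilde e)$ over two optimal edges of total weight at most $2w(e)$, while $e$'s own gain pays for $\bar{\imath}$'s optimal edge.

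Likewise, Step~3's claim that monotone modifications cannot decrease the bound is unverified. The worst case, which is the $Z$-plus-star configuration you correctly guessed, is pinned down by the case analysis itself, not by a monotonicity argument.
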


The proof of Lemma~\ref{lem:bam-b} performs a case analysis based on whether the highest-weight edge incident to $i^*$ belongs to the offline optimal matching $M^*$.
For $\ta \in (0,1)$, the worst-case ratio is uniquely attained in a specific configuration (denoted as \textbf{Case~2a} in Appendix~\ref{app:smg-cr-tight}), where competition between two offline agents leads to the maximal loss relative to the optimum.
All other cases, including boundary cases with $\ta \in \{0,1\}$, yield strictly better bounds.

Combining Lemma~\ref{lem:bam-b} with the structural incompatibility identified in Section~\ref{sec:smg-t} establishes the tight competitive guarantee claimed in Theorem~\ref{thm:smg-cr-tig}.


\section{Variance Analysis of the Deterministic Greedy Sampling Framework}
\label{sec:var}

\subsection{Technical Challenges in Variance Analysis}

Recall that $\cM$ and $\widehat{\cM}$ denote the final matchings output by $\smg$ (Algorithm~\ref{alg:smg}) and the virtual algorithm $\vir$ (Algorithm~\ref{alg:vir}), respectively.
By Lemma~\ref{lem:eqv}, these two matchings are statistically identical.
Consequently, it suffices to analyze the variance of the matching produced by $\vir$.

For each offline agent $i \in I$, let $X_i$ be the indicator variable that $i$ is matched in $\widehat{\cM}$, and define
\[
X := \sum_{i \in I} X_i .
\]
Our goal is to upper bound the variance $\Var[X]$.

\smallskip
\noindent
\emph{Throughout this section, we write $\bta := 1-\ta$ and $\tta := \max\{1/2,\ta\}$.}

\begin{lemma}
\label{lem:var-gen}
For each $i \in I$, we have $\E[X_i] \le \bta = 1-\ta$ for all $\ta \in [0,1]$.
\end{lemma}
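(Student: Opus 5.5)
We need to show $\Pr[X_i=1]\le 1-\ta$. By Lemma~\ref{lem:eqv} it suffices to work with $\vir$, and there $X_i=1$ exactly when $i\in\cR^b$, that is, when $\cM^b$ contains at least one edge incident to $i$. The plan is to identify a single online agent whose membership in $\cS_K$ already rules out $X_i=1$, and then use that $\Pr[j\in\cS_K]=K/n$ for every fixed $j$.

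\emph{Step 1: the right ordering.} I would work with the \emph{unconditional} edge ordering $e_1,e_2,\dots$ fixed by $\vir$'s sort (ties broken by a fixed rule). The ordering does not depend on the sample, and this is what allows a direct probability computation.

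\emph{Step 2: a necessary event.} Consider the first edge incident to $i$ that passes the test in line~\ref{alg:v2} of $\vir$. Call it $e=(i,j)$. Until $i$ acquires an $\cM^a$-edge, the condition "$\cM^a\cup\{e\}$ is a matching" reduces, for edges at $i$, to whether their online endpoint is still unprocessed. The processing status of the online neighbors of $i$ is determined by edges at other offline agents. Which of those agents are in $\cM^a$ depends on the sample, so the identity of $j$ is random. Still, the event "$i$ is first touched by an edge whose endpoint is sampled" forces $e\in\cM^a$. After that, $i$ can receive no further $\cM^b$-edges, since any later edge at $i$ would violate the matching condition on $\cM^a$. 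Hence $X_i=1$ implies $j\notin\cS_K$.

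\emph{Step 3: bounding the probability.} The main obstacle is that $j$ is random and correlated with the sample. I would resolve it by a sequential revelation argument. Reveal sample membership only for online agents whose edges are actually processed, in sorted order. The decision of which edge first touches $i$ depends only on memberships already revealed for \emph{other} online agents. By the exchangeability of sampling without replacement, the conditional probability that this fresh agent $j$ lies outside $\cS_K$, given the revealed history, equals
\[
\frac{n-K-r}{n-r'}\le 1-\frac{K}{n}
\]
whenever the revealed agents include at least as large a sampled fraction as expected. This last condition is the delicate part. A cleaner route, which I would try first, is a coupling: realize $\cS_K$ as the first $K$ positions of a uniformly random permutation. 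By symmetry under relabeling, $j$ is equally likely to be any agent not yet revealed, so $\Pr[j\notin\cS_K]$ is an average of hypergeometric conditionals whose overall mean is $1-K/n$. If that averaging fails, a fallback is to argue $\E[X]\le\E[\#\cM^b\text{ first-touches}]$ and compare with $\cM^a$ via the factor $(1-\ta)/\ta$ from the proof of Lemma~\ref{lem:smg-cr-1}.

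\emph{Step 4: asymptotics and edge cases.} With $K=\lfloor n\ta\rfloor$ we have $1-K/n\to\bta$, and the $O(1/n)$ discrepancy is absorbed in the fixed-$d$ asymptotic convention. For $\ta=0$ the bound $\E[X_i]\le1$ is trivial. For $\ta=1$ every arrival is sampled, so $X_i=0$.
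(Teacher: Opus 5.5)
Your Steps 1--2 are the paper's proof. $X_i=1$ iff the first edge at $i$ that passes the test in $\vir$ has its online endpoint $j$ outside $\cS_K$. The paper stops there and asserts $\Pr[j\notin\cS_K]=1-\theta$, as if $j$ were a fixed agent. You are right that this needs justification: both the identity of $j$ and whether $i$ is touched at all depend adaptively on earlier sample memberships. However, your Step 3 does not supply the justification.

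\begin{itemize}
\item \emph{First route.} The pathwise inequality $\frac{n-K-r}{n-r'}\le 1-\frac{K}{n}$ holds exactly when the revealed agents contain at least the expected fraction of \emph{unsampled} agents; your stated condition is reversed. It fails on histories where sampled agents are over-represented, so no pathwise argument can work.
\item \emph{``Cleaner route.''} The claim that $j$ is equally likely to be any unrevealed agent is false, since $j$ is a deterministic function of the revealed history. What is true is only that, given the history, all unrevealed agents are exchangeable. The assertion that the resulting average of hypergeometric conditionals has mean $1-K/n$ is precisely what must be proved.
\item \emph{Fallback.} The factor $(1-\theta)/\theta$ from Lemma~\ref{lem:smg-cr-1} rests on the same per-query marginal, and it concerns aggregate weight rather than a single agent. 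So that route is circular.
\end{itemize}

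The missing step is an exchangeability (optional-stopping) argument, which makes your coupling idea precise.
\begin{itemize}
\item Call each successful test in $\vir$ a query. Every query is of a fresh online agent, and which agent is queried next is a deterministic function of the earlier answers.
\item Extend the query sequence by querying leftover agents in a fixed order. The answer sequence $(a_1,\dots,a_n)$ is then a uniformly random arrangement of $K$ ``sampled'' and $n-K$ ``unsampled'' labels.
\item Let $\tau$ be the index of $i$'s first query, with $\tau=\infty$ if there is none. The event $\{\tau=k\}$ is determined by $a_1,\dots,a_{k-1}$, and given these the remaining answers are exchangeable. Hence $\Pr[\tau=k,\ a_k=\text{unsampled}]=\Pr[\tau=k,\ a_n=\text{unsampled}]$.
\end{itemize}
Summing over $k$ gives
\[
\Pr[X_i=1]=\Pr[\tau<\infty,\ a_n=\text{unsampled}]\le 1-\frac{K}{n}.
\]
Equivalently, $M_r:=\frac{n-K-(r-s_r)}{n-r}$ is a martingale, where $s_r$ counts the sampled answers among the first $r$. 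Optional stopping at the bounded stopping time equal to $\tau-1$ on $\{\tau<\infty\}$ and $n-1$ otherwise, together with $M\ge 0$, gives the same bound. The inequality, rather than equality, comes from the event that $i$ is never touched. Your Step 4 then correctly handles the gap $1-K/n\le 1-\theta+1/n$ under the paper's convention $K/n=\theta$.
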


\begin{proof}
An offline agent $i$ is matched by $\vir$ if and only if the first edge incident to $i$ that is processed in Step~\eqref{alg:v5} of $\vir$, say $e=(i,j)$, satisfies $j \notin \cS_K$.
Since $\Pr[j \notin \cS_K] = 1-\ta$, it follows that $\E[X_i] \le 1-\ta = \bta$.
\end{proof}

Lemma~\ref{lem:var-gen} immediately implies that when $\ta \in [1/2,1]$, we have
\[
\Var[X_i] \le \E[X_i](1-\E[X_i]) \le \ta(1-\ta) \le \ta .
\]
Thus, in this regime the variance of each individual indicator is well controlled.

The main difficulty arises in the low-sampling regime $\theta\in[0,1/2]$. 
In this regime, $\bar{\theta}\ge 1/2$, so an individual indicator $X_i$ can have variance as large as $1/4$. 
More importantly, the matching indicators $\{X_i\}_{i\in I}$ need not satisfy negative dependence. 
Appendix~\ref{app:neg-dep} gives two explicit examples: one in which a single matching indicator attains variance $1/4$ and a pair of indicators has joint variance exceeding the independent benchmark, and another in which two matching indicators have strictly positive covariance. 
Thus the variance analysis cannot rely on standard concentration arguments based on independence or negative association.

The key step is instead to control the aggregate covariance directly. 
The following lemma shows that, despite possible positive correlations among individual matching events, the total covariance is bounded by a quantity linear in the number of offline agents.

%

\begin{lemma}[Appendix~\ref{app:lem-var5}]
\label{lem:var-5}
Let $\bta := 1-\ta$.
For each $i \in I$, let $X_i=1$ indicate that $i$ is matched in $\vir$.
Then
\[
\sum_{1 \le i < i' \le m}
\bigl(\E[X_i X_{i'}] - \E[X_i]\E[X_{i'}]\bigr)
\;\le\;
m \, \min\Bigl\{
\frac{\bta^4}{\ta^2},
\; (d+2)\,\bta^{7/2}
\Bigr\}.
\]
\end{lemma}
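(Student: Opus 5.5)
I will bound the total covariance by charging every pair $(i,i')$ with nonzero covariance to a local ``interaction'' between the first-processed edges of $i$ and $i'$ in $\vir$, and then counting these interactions per offline agent.

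\textbf{Step 1: reformulate $X_i$.}
By the proof of Lemma~\ref{lem:var-gen}, $X_i=1$ exactly when the first edge $e_i=(i,j)$ incident to $i$ that passes the test in Step~\eqref{alg:v2} of $\vir$ has $j\notin\cS_K$.
The identity of this first edge is itself random. It depends on which higher-weight edges were absorbed into $\am$, and hence on the sample status of other online agents.

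I will build a coupling in two layers. Assign each online agent $j$ a membership bit $B_j=\mathbf 1[j\in\cS_K]$. Exchangeable sampling without replacement is approximately independent Bernoulli$(\ta)$, and its pairwise correlations are $-O(1/n)$. These correlations contribute at most a nonpositive or $O(m^2/n)$ term, which I absorb into the asymptotic regime.

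Call $i$ and $i'$ \emph{interacting} if some online agent is adjacent to both of them, or if the first-processed edge of one of them can be affected by the sample status of a neighbor of the other.

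\textbf{Step 2: pairs that do not interact.}
If the sets of membership bits determining $X_i$ and $X_{i'}$ are disjoint, the covariance is zero up to the $O(1/n)$ correction. The entire sum therefore reduces to interacting pairs.

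For an interacting pair, $X_iX_{i'}=1$ requires that both first edges land in $\bm$. The covariance can be positive only if one agent's sampled neighbor diverts a competitor into $\am$ and thereby frees the other agent. This is the mechanism in Appendix~\ref{app:neg-dep}.

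Tracing such a chain requires at least two specific online agents to be outside the sample and a further structural event. I expect to show
\[
\mathrm{Cov}(X_i,X_{i'})\le \Pr[\text{the chain linking } i,i' \text{ is active}],
\]
where activation needs roughly four events of probability $\bta$ each, giving a $\bta^4$ factor. I would also need a geometric-type factor for the chain length.

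\textbf{Step 3: the two branches of the minimum.}
For the $\bta^4/\ta^2$ branch, I sum over chains of length $\ell$. Each additional link requires an additional unsampled agent, with factor $\bta$. Bounding the number of candidate partners per step gives a series like $\sum_\ell \ell\,\bta^{\ell}$, which is of order $1/\ta^2$. Multiplied by $\bta^4$, this yields $\bta^4/\ta^2$ per agent.

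For the $(d+2)\bta^{7/2}$ branch, I bound the number of interacting partners of $i$ directly by the degree: at most $d$ agents share a neighbor with $i$, plus a constant number linked through $e_i$'s own competitor. For each partner I bound the covariance by $\sqrt{\Var X_i\,\Var X_{i'}}$ combined with the activation probability. A Cauchy--Schwarz split, one half-power from $\sqrt{\bta}$ and $\bta^3$ from the chain, gives $\bta^{7/2}$.

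Summing over $i$ and halving the double count then gives $m\min\{\cdot,\cdot\}$.

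\textbf{Main obstacle.}
The main obstacle is Step 2: rigorously defining the interaction chains and proving that a positive covariance contribution really requires about four unsampled-agent events. The lack of negative dependence means I cannot simply discard these terms.

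My first attempt would be to condition on the processed-edge order and use an exposure (Doob-martingale) argument over edges in decreasing weight. In that argument, revealing a bit changes $X_i$ only through a single path of first-edge replacements. I would then bound cross-variations path by path.
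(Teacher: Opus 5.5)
Your proposal does not yet contain the argument that carries the lemma. The inequality $\mathrm{Cov}(X_i,X_{i'})\le\Pr[\text{chain active}]$, with a $\bar{\theta}^4$ factor and geometric decay in chain length, is only conjectured. The way you then extract the two branches of the minimum would fail as written, for two reasons.

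\emph{Counting.} The bounded-degree assumption constrains only offline degrees. A single online agent may be adjacent to all $m$ offline agents, so ``at most $d$ agents share a neighbor with $i$'' is false. For the same reason, the number of ``candidate partners per step'' in your first-branch series is not bounded. Charging over interacting pairs therefore gives nothing of order $dm$ unless you first show that almost all such pairs have nonpositive covariance, and that is the real content of the lemma. Conversely, positive covariance does not need a shared neighbor. In the cluster of Figure~\ref{fig:var-7}, the agents $i_\ell$ and $i_{\ell'}$ ($1\le\ell<\ell'\le k$) have disjoint neighborhoods, yet $\mathrm{Cov}(X_{i_\ell},X_{i_{\ell'}})=\bar{\theta}^{2+\ell'}(1-\bar{\theta}^{\ell})>0$. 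Moreover, up to $d-2$ such partners are linked through the same competitor $i_0$, not a constant number.

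\emph{The exponent $7/2$.} Cauchy--Schwarz does not compose multiplicatively with an activation probability. The natural rigorous version of your split introduces a surrogate $X'_{i'}$ that is independent of $X_i$ and agrees with $X_{i'}$ off the activation event $A$. It gives $|\mathrm{Cov}(X_i,X_{i'})|=|\mathrm{Cov}(X_i,X_{i'}-X'_{i'})|\le\sqrt{\bar{\theta}\,\Pr[A]}$. With your own estimate $\Pr[A]\approx\bar{\theta}^4$, this is $\bar{\theta}^{5/2}$, not $\bar{\theta}^{7/2}$.

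The missing idea is the paper's structural localization, which replaces pair-by-pair chain tracing. Write $X_i=Y_iZ_i$ and call $i$ free when $Y_i\equiv 1$. Lemma~\ref{lem:var-4} shows that the event $Z_{\tilde{i}}=1$ can only lower, never raise, the chance that $Y_{i'}=1$ for any other agent $i'$. Hence positive covariance is confined to disjoint, mutually independent clusters. Each cluster consists of one free agent $i_0$ whose descending edges $w_0>w_1>\cdots>w_k$ each feed a non-free agent $i_\ell$. Here $k+1\le d$ because it is $i_0$'s offline degree that is bounded.

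Inside a cluster everything is explicit, for example $\mu_\ell=\bar{\theta}(1-\bar{\theta}^{\ell})$. The total covariance has the closed form $C(k)=\bar{\theta}^2(\bar{\theta}^k-1)+\frac{\bar{\theta}^4}{\theta}\bigl(\frac{1-\bar{\theta}^{2k}}{1-\bar{\theta}^2}-k\bar{\theta}^{k-1}\bigr)$, and both branches are read off from it:
\begin{itemize}
\item Dropping the nonpositive terms gives $\bar{\theta}^4/(\theta^2(1+\bar{\theta}))$. Your heuristic $\sum_\ell \ell\,\bar{\theta}^{\ell}$ is exactly this sum for the canonical cluster.
\item The inequality $\ln(1-x)\ge -x/\sqrt{1-x}$ gives $(k^2+2k)\bar{\theta}^4/\sqrt{\bar{\theta}}$, so the half power is a loss rather than a gain. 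Multiplying by at most $m/(k+1)$ clusters yields $m(d+2)\bar{\theta}^{7/2}$.
\end{itemize}
Without an analogue of the negative-impact lemma and this cluster reduction, your Step~2 has no mechanism to control the number of contributing pairs or to produce the stated exponents.
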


\subsection{Proof of the Upper Bound on $\psi$ in Theorem~\ref{thm:smg-var-new}}
\label{sec:smg-var}

\begin{proof}
Let
$X= \sum_{i \in I} X_i$, where $X_i=1$ if offline agent $i$ is matched in $\vir$, and $X_i=0$ otherwise.
Define $\mu_i := \E[X_i] \le 1-\ta = \bta$ for each $i \in I$, and recall $\tta := \max\{1/2,\ta\}$.

Using the variance decomposition formula, we have
\begingroup
\allowdisplaybreaks
\begin{align*}
\Var[X]
&= \E[X^2] - \E[X]^2 = \sum_{i \in I} \bigl(\mu_i - \mu_i^2\bigr)
   + 2 \sum_{1 \le i < i' \le m}
     \bigl(\E[X_i X_{i'}] - \E[X_i]\E[X_{i'}]\bigr).
\end{align*}
\endgroup

By Lemma~\ref{lem:var-5}, the total covariance term is at most
$2m \cdot \min\{\bta^4/\ta^2,\,(d+2)\,\bta^{7/2}\}$.
Moreover, since $\mu_i \le \bta$ for all $i$ and $\tta=\max\{1/2,\ta\}$, $
\mu_i - \mu_i^2 \le \tta(1-\tta)$.

Combining these bounds yields
\[
\Var[X]
\le
m \cdot \Bigl(
\tta(1-\tta)
+ 2 \min\Bigl\{
\frac{\bta^4}{\ta^2},
\; (d+2)\,\bta^{7/2}
\Bigr\}
\Bigr),
\]
which completes the proof.
\end{proof}

\section{Competitive Analysis of the Black-Box Sampling--Matching Framework}
\label{sec:sda}

The framework $\sm(\BB,\ta)$ invokes an offline matching procedure on a
prefix-dependent reweighting of the observed graph. The reweighting
incorporates a conditional lower bound on the availability of each offline
endpoint. Accepted edges yield their original weights, rather than their
reweighted values.

\xhdr{Black-box convention: independence from the arrival order.}
Let $\BB$ be an $\alp$-approximation procedure for maximum-weight bipartite
matching with nonnegative weights. The approximation guarantee may hold
in expectation over the internal randomness of $\BB$. Each call to $\BB$
uses only the unordered, labeled prefix graph and its reweighted edge
weights; in particular, its output is independent of the arrival order of
agents within the prefix, and it does not use the identity of the current
arrival as a distinguished vertex or the current online matching.
If $\BB$ is randomized, let $\xi_t$ denote the random seed used at round
$t$, where the seeds $(\xi_t)_{t>K}$ are mutually independent and
independent of the arrival order. For a deterministic procedure, the seeds
may be taken to be constant. Appendix~\ref{sec:amend} gives a concrete
example showing that this arrival-order-independence requirement is
essential: even an exact prefix solver can yield vanishing online
competitiveness if its choice among optimal matchings is allowed to depend
on the arrival order.

\xhdr{Prefix reweighting.}
Fix $\ta\in[0,1]$ and set $K:=\lfloor n\ta\rfloor$.
For each matching-phase round $t>K$, let
\[
d_t(i):=
\bigl|\{j\in\cS_t:(i,j)\in E\}\bigr|
\]
denote the degree of offline agent $i$ in the observed prefix.
Define
\begin{equation}
b_t:=
\begin{cases}
1, & t=1,\\[1mm]
\dfrac{K}{t-1}, & t\ge2,
\end{cases}
\qquad
a_t(i):=
\begin{cases}
1, & d_t(i)\le1,\\
b_t, & d_t(i)\ge2.
\end{cases}
\label{eq:sda-factors}
\end{equation}
The case $t=1$ occurs only when $K=0$, and the value assigned to $a_t(i)$
when $d_t(i)=0$ is immaterial. For every edge $(i,j)$ in
$\cG_t:=(I,\cS_t)$, define the reweighted value
\begin{equation}
\widetilde w_t(i,j):=a_t(i)w(i,j).
\label{eq:sda-reweighting}
\end{equation}
For notational convenience, we further set
\[
a_t(\bot):=1.
\]
All reweighting factors lie in $[0,1]$ and depend only on the unordered
prefix, rather than on which offline agents are currently matched. In
particular, the reweighting requires no knowledge of unrevealed neighbors.

A formal description of the framework is given in
Algorithm~\ref{alg:sda}.

\begin{algorithm}[ht!]
\caption{A Black-Box Sampling--Matching Framework for \om:
$\sm(\BB,\ta)$}
\label{alg:sda}
\DontPrintSemicolon

\textbf{Sampling Phase}:\;
Collect the first $K:=\lfloor n\ta\rfloor$ online arrivals without
matching them, and initialize $\cM\gets\emptyset$.\;

\textbf{Matching Phase}:\;
\For{$t=K+1,\ldots,n$}{
    Let $j_t\in J$ arrive at time $t$, and let $\cS_t$ denote the set of
    online agents that have arrived by time $t$.\;

    Define the induced subgraph $\cG_t:=(I,\cS_t)$.\;

    Compute $d_t(i)$ and $a_t(i)$ according to~\eqref{eq:sda-factors},
    and assign weight
    $\widetilde w_t(i,j):=a_t(i)w(i,j)$ to every edge of $\cG_t$.\;

    Run $\BB$ on $\cG_t$ with weights $\widetilde w_t$, and let
    $\cM_t$ denote the returned matching.\;
    \label{step:call_bb}

    If $j_t$ is matched in $\cM_t$, let $e_t=(i_t,j_t)$ denote its
    incident edge; otherwise, set $e_t=i_t=\bot$.\;
    \label{alg:g3}

    \eIf{$e_t\neq\bot$ and $i_t$ is unmatched in $\cM$}{
        Add $e_t$ to $\cM$, receiving reward $w(e_t)$.\;
        \label{alg:g4}
    }{
        Reject $j_t$.\;
    }
}
\end{algorithm}

\subsection{Proof of Theorem~\ref{thm:main-2}}

Let $\OPT$ denote the total weight of a maximum-weight matching in the
full input graph $G$. For any matching $M$ in $\cG_t$, write
\[
w(M):=\sum_{e\in M}w(e),
\qquad
\widetilde w_t(M):=\sum_{e\in M}\widetilde w_t(e).
\]

For each $t>K$, let
\[
W_t:=w(e_t),
\qquad
\widetilde W_t:=\widetilde w_t(e_t)
\]
when $e_t\neq\bot$, and set
\[
W_t=\widetilde W_t:=0
\]
otherwise.

Let
\[
A_t:=
\{i_t\text{ is unmatched immediately before round }t\},
\qquad
R_t:=W_t\mathbf{1}_{A_t}.
\]
For notational convenience, when $i_t=\bot$, we regard $\bot$ as
unmatched at every round and recall that $a_t(\bot)=1$.
This convention has no effect on the algorithm or its reward since
$W_t=\widetilde W_t=0$ whenever $i_t=\bot$.
Thus,
\[
w(\cM)=\sum_{t=K+1}^nR_t.
\]

All expectations below are taken over the random arrival order and, when
applicable, the internal randomness of $\BB$.

Define
\begin{equation}
q_{n,d}(t):=
\frac{\binom{n-t}{d-1}}
     {\binom{n-1}{d-1}},
\qquad
p_t:=
q_{n,d}(t)+
\bigl(1-q_{n,d}(t)\bigr)b_t.
\label{eq:sda-pt}
\end{equation}
Here
\[
\binom{k_1}{k_2}=0
\qquad
\text{for integers }0\le k_1<k_2.
\]
In particular, $q_{n,1}(t)=1$.

We first establish two auxiliary lemmas.

\begin{lemma}[Section~\ref{sec:14-a}]
\label{lem:warm1}
For every $t>K$, the reweighted tentative reward satisfies
\[
\E[\widetilde W_t]
\ge
\frac{\alp\,\OPT}{n}\,p_t.
\]
\end{lemma}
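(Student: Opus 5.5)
The plan is to follow the classical \kpb{} symmetry argument. I would condition on the unordered prefix $\cS_t$ and on the seed $\xi_t$, and use the arrival-order-independence convention of $\BB$ to see the current arrival $j_t$ as a uniformly random member of $\cS_t$. Then I would lower bound the reweighted value of $\cM_t$ by the reweighted value of the restriction of an offline optimum $M^*$ to $\cG_t$. Finally, I would compute the expectation of the reweighting factor on each optimal edge, which is exactly where $q_{n,d}(t)$ and $b_t$ appear.

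\emph{Step 1 (symmetry).} Fix $t>K$. Conditional on the set $\cS_t$ (of size $t$), the random arrival order makes $j_t$ uniform over $\cS_t$. By the black-box convention, the matching $\cM_t$ and the weights $\widetilde w_t$ are functions of $\cS_t$ and $\xi_t$ only, so they are independent of which element of $\cS_t$ is $j_t$; the factors $a_t(i)$ depend only on $d_t(i)$, which is a function of the unordered prefix. Every edge of $\cM_t$ is incident to exactly one online agent of $\cS_t$, so
\[
\E\bigl[\widetilde W_t \,\big|\, \cS_t,\xi_t\bigr]=\frac{1}{t}\,\widetilde w_t(\cM_t).
\]
Averaging over $\xi_t$ and using that $\BB$ is an $\alp$-approximation in expectation on $(\cG_t,\widetilde w_t)$ gives
\[
\E\bigl[\widetilde W_t \,\big|\, \cS_t\bigr]\ge \frac{\alp}{t}\,\widetilde w_t\bigl(M^*\cap\cG_t\bigr),
\]
because $M^*\cap\cG_t$ is a feasible matching in $\cG_t$.

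\emph{Step 2 (per-edge expectation).} For each $(i,j)\in M^*$, we have $\Pr[j\in\cS_t]=t/n$. Conditional on $j\in\cS_t$, the other $t-1$ prefix slots form a uniform subset of the remaining $n-1$ online agents. Agent $i$ has at most $d-1$ other neighbors, so the probability that none of them lies in $\cS_t$ is at least $\binom{n-t}{d-1}/\binom{n-1}{d-1}=q_{n,d}(t)$; with fewer neighbors this probability only increases. On that event $d_t(i)=1$ and $a_t(i)=1$; otherwise $a_t(i)=b_t\le 1$. Since $b_t\le1$, the map $x\mapsto x+(1-x)b_t$ is nondecreasing, which yields
\[
\E\bigl[a_t(i)\,\big|\, j\in\cS_t\bigr]\ge q_{n,d}(t)+\bigl(1-q_{n,d}(t)\bigr)b_t=p_t .
\]
Summing over $M^*$ then gives $\E\bigl[\widetilde w_t(M^*\cap\cG_t)\bigr]\ge (t/n)\,p_t\,\OPT$.

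\emph{Step 3 (combine).} Taking expectations in Step 1 and inserting Step 2, the factors $t$ cancel and we obtain $\E[\widetilde W_t]\ge \alp\,\OPT\,p_t/n$. The boundary conventions ($t=1$ with $b_1=1$, and $q_{n,1}\equiv1$) are consistent with the computation and need no separate treatment.

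I expect the main obstacle to be justifying Step 1 rigorously. One must argue that conditioning on $\cS_t$ leaves $j_t$ uniform and independent of $(\cM_t,\widetilde w_t,\xi_t)$. This requires that the seeds are independent of the order, and that $\BB$ sees neither the identity of $j_t$ nor the current online matching. The counterexample in Appendix~\ref{sec:amend} shows that this step is precisely where the convention is needed. A secondary subtlety is the monotonicity in Step 2, which relies on $b_t\le1$ (true since $K\le t-1$).
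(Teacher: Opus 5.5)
Your proposal is correct and follows the paper's proof essentially step for step. Like the paper, you condition on $(\cS_t,\xi_t)$ to get $\E[\widetilde W_t\mid\cS_t,\xi_t]=\widetilde w_t(\cM_t)/t$, apply the $\alpha$-approximation against the restriction of a fixed offline optimum to $\cG_t$, and bound $\E[\mathbf{1}_{\{j\in\cS_t\}}a_t(i)]\ge (t/n)\,p_t$ using the hypergeometric avoidance probability, monotonicity in $\deg(i)$, and $b_t\le 1$; the only cosmetic difference is that the paper passes through the reweighted prefix optimum $\widetilde{\OPT}_t$, whereas you compare with the restricted optimum directly.
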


The next lemma lower bounds the conditional probability that the offline
endpoint of the tentative edge remains available.

Write
\[
\mathcal H_t:=
\sigma(\cS_t,j_t,\xi_t),
\]
for the information that determines the tentative edge selected at
round $t$: the unordered prefix $\cS_t$, the identity of the current
arrival $j_t$, and the current black-box random seed $\xi_t$.
Conditioned on $\mathcal H_t$, the tentative edge $e_t$ and its weight
are fixed, while the earlier arrival order, and hence the availability
of $i_t$, remain random.

\begin{lemma}[Section~\ref{sec:14-b}]
\label{lem:ge1}
For every $t>K$,
\[
\Pr[A_t\mid\mathcal H_t]
\ge
a_t(i_t)
\qquad\text{almost surely}.
\]
Consequently,
\[
\E[R_t\mid\mathcal H_t]
\ge
\widetilde W_t
\qquad\text{almost surely}.
\]
\end{lemma}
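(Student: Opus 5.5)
The plan is a backward-revealing argument in the style of the classical prefix-optimization analysis. We condition on $\mathcal H_t=\sigma(\cS_t,j_t,\xi_t)$ and treat the two cases in the definition of $a_t(i_t)$ separately. If $i_t=\bot$, then $A_t$ holds by convention and $a_t(\bot)=1$, so there is nothing to prove. Otherwise $(i_t,j_t)\in E$, so $j_t$ is one of the prefix neighbors of $i_t$.

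\emph{Case $d_t(i_t)\le 1$.} Here $j_t$ is the unique neighbor of $i_t$ in $\cS_t$. Every earlier arrival $j_s$ with $s<t$ is therefore non-adjacent to $i_t$. Since $\smg$-style matching only adds edges of $E$, the agent $i_t$ cannot have been matched before round $t$. Hence $\Pr[A_t\mid\mathcal H_t]=1=a_t(i_t)$.

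\emph{Case $d_t(i_t)\ge 2$.} The target is $\Pr[A_t\mid\mathcal H_t]\ge b_t=K/(t-1)$.

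First, $i_t$ can become matched at some round $s\in\{K+1,\dots,t-1\}$ only if $i_t$ is the tentative partner $i_s$ at round $s$. So it suffices to lower bound the probability that $i_t\ne i_s$ for all such $s$.

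Second, we reveal the arrival order backwards. Given $\mathcal H_t$, the set $\cS_{t-1}=\cS_t\setminus\{j_t\}$ is fixed, and the order of its elements is uniformly random and independent of $\xi_t$. Consequently $j_{t-1}$ is uniform on $\cS_{t-1}$, then $j_{t-2}$ is uniform on $\cS_{t-2}=\cS_{t-1}\setminus\{j_{t-1}\}$, and so on. At each step the draw is independent of everything revealed "later" (that is, of $j_{s+1},\dots,j_t$ and of the seeds).

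Third, fix $s$ and condition on $\cS_s$, on all later arrivals, and on the seeds. By the arrival-order-independence convention, the matching $\cM_s$ returned by $\BB$ on $(I,\cS_s)$ with weights $\widetilde w_s$ is determined by the unordered set $\cS_s$ together with $\xi_s$; the weights $a_s(\cdot)$ also depend only on $\cS_s$. In $\cM_s$, the vertex $i_t$ has at most one partner $j^\circ\in\cS_s$. The event $i_s=i_t$ requires $j_s=j^\circ$, which has conditional probability at most $1/s$. Because the seeds $(\xi_s)$ are independent of the order, these bounds chain multiplicatively over $s=t-1,t-2,\dots,K+1$. This gives
\[
\Pr[A_t\mid\mathcal H_t]\;\ge\;\prod_{s=K+1}^{t-1}\Bigl(1-\frac1s\Bigr)=\frac{K}{t-1}=b_t=a_t(i_t).
\]
The product telescopes; when $t=K+1$ it is empty and the bound reads $1$.

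Finally, $e_t$ and hence $W_t$ are $\mathcal H_t$-measurable. Therefore $\E[R_t\mid\mathcal H_t]=W_t\Pr[A_t\mid\mathcal H_t]\ge W_t\,a_t(i_t)=\widetilde W_t$.

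The main obstacle is making the conditional independence in the backward revelation rigorous. Three points need care. One must check that conditioning on $\mathcal H_t$, which includes the identity of $j_t$ and the seed $\xi_t$, leaves the order of $\cS_{t-1}$ uniform. One must check that conditioning on the later draws does not bias $j_s$. And one must check that the bound $\Pr[i_s=i_t\mid\cS_s,\ldots]\le 1/s$ uses only the fact that $\cM_s$ is a function of $(\cS_s,\xi_s)$ and not of the order. This last point is exactly where arrival-order independence of $\BB$ is invoked; without it the argument fails, as Appendix~\ref{sec:amend} shows.
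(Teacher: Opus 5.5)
Your proposal is correct and follows essentially the same route as the paper. You dispose of $i_t=\bot$ and $d_t(i_t)\le 1$ directly, then expose the prefix order backwards. You use arrival-order independence to fix $\cM_s$, so that $\Pr[(i_t,j_s)\in\cM_s\mid\cdot]\le 1/s$, and chain these bounds to obtain $\prod_{s=K+1}^{t-1}(1-1/s)=K/(t-1)$. The paper makes your ``chain multiplicatively'' step explicit through the filtration $\mathcal F_s$ and the nested events $C_s$, noting that $C_{s+1}$ is $\mathcal F_s$-measurable. It also treats $K=0$ separately, which your product already covers because it then contains the factor $1-1/1=0$.
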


\begin{proof}[\textbf{Proof of Theorem~\ref{thm:main-2}}]
By conditional expectation and
Lemmas~\ref{lem:ge1} and~\ref{lem:warm1},
\begin{align*}
\E[w(\cM)]
&=
\sum_{t=K+1}^n
\E\bigl[\E[R_t\mid\mathcal H_t]\bigr] 
\ge
\sum_{t=K+1}^n\E[\widetilde W_t] 
\ge
\frac{\alp\,\OPT}{n}
\sum_{t=K+1}^np_t.
\end{align*}
Thus, for every input instance,
\begin{equation}
\E[w(\cM)]
\ge
\frac{\alp\,\OPT}{n}
\sum_{t=K+1}^n
\Bigl[
q_{n,d}(t)+
\bigl(1-q_{n,d}(t)\bigr)b_t
\Bigr].
\label{eq:sda-finite-bound}
\end{equation}
Notice that this argument does not require the tentative weight and the
availability event to be independent.

We first consider $\ta\in(0,1)$ and keep $d$ fixed as $n\to\infty$.
Uniformly over $t>K$,
\[
q_{n,d}(t)
=
\left(1-\frac{t}{n}\right)^{d-1}
+
O_d(1/n),
\]
and
\[
b_t
=
\frac{\ta}{t/n}
+
O_{\ta}(1/n).
\]
The first estimate follows from the fixed-length product representation
of $q_{n,d}(t)$, while the second follows from
$K=\lfloor n\ta\rfloor$ and $t-1\ge K$.
Therefore, the normalized sum in~\eqref{eq:sda-finite-bound} converges
to
\begin{align*}
\lim_{n\to\infty}
\frac{1}{n}\sum_{t=K+1}^np_t
&=
\int_{\ta}^1
\Bigl[
(1-\zeta)^{d-1}
+
\bigl(1-(1-\zeta)^{d-1}\bigr)
\frac{\ta}{\zeta}
\Bigr]
\,\sd\zeta =
\kap(d,\ta),
\end{align*}
where $\kap(d,\ta)$ is defined in~\eqref{eqn:kap}.
Hence,
\[
\E[w(\cM)]
\ge
\bigl(\alp\,\kap(d,\ta)-o(1)\bigr)\OPT,
\]
where the $o(1)$ term is independent of the particular input instance.

When $\ta=0$, we have $K=0$, $b_t=0$ for $t\ge2$, and
$q_{n,d}(1)=1$. Hence $p_t=q_{n,d}(t)$ for every $t$, including
$t=1$. Using the binomial summation identity,
\begin{align*}
\frac{1}{n}\sum_{t=1}^np_t
&=
\frac{
\sum_{t=1}^n\binom{n-t}{d-1}
}{
n\binom{n-1}{d-1}
} =
\frac{\binom{n}{d}}
     {n\binom{n-1}{d-1}}
=
\frac{1}{d}
=
\kap(d,0).
\end{align*}
Thus, the guarantee is at least $\alp/d$ even for finite $n$.

When $\ta=1$, the matching phase is empty and the guarantee is
\[
0=\alp\,\kap(d,1).
\]
Together, these cases establish the theorem for all
$\ta\in[0,1]$.
\end{proof}

\paragraph{Unrestricted offline degree.}
For $K\ge1$, setting $d=n$ in~\eqref{eq:sda-finite-bound} gives
\[
q_{n,n}(t)=0
\qquad
\text{for every }t>K.
\]
Therefore,
\[
\E[w(\cM)]
\ge
\alp\,\OPT\,
\frac{K}{n}
\sum_{t=K+1}^n\frac{1}{t-1}.
\]
For every fixed $\ta\in(0,1)$, this yields asymptotic competitiveness
\[
-\alp\,\ta\ln\ta
\]
without any degree restriction. In particular, choosing
$\ta=1/\mathrm{e}$ gives the guarantee $\alp/\mathrm{e}$.

\subsection{Proof of Lemma~\ref{lem:warm1}}
\label{sec:14-a}

\begin{proof}
Fix $t>K$, and let $\widetilde{\OPT}_t$ denote the maximum matching
weight in $\cG_t$ under the reweighted objective $\widetilde w_t$.

Conditioned on $\cS_t$ and $\xi_t$, the matching $\cM_t$ returned by
$\BB$ is fixed, while $j_t$ is uniformly distributed over $\cS_t$.
Consequently,
\[
\E[\widetilde W_t\mid\cS_t,\xi_t]
=
\frac{\widetilde w_t(\cM_t)}{t}.
\]
The approximation guarantee of $\BB$ implies
\[
\E[\widetilde w_t(\cM_t)\mid\cS_t]
\ge
\alp\,\widetilde{\OPT}_t.
\]
Taking expectations gives
\begin{equation}
\E[\widetilde W_t]
\ge
\frac{\alp}{t}
\E[\widetilde{\OPT}_t].
\label{eq:sda-prefix-approx}
\end{equation}

Let $\cM^*$ be a fixed maximum-weight matching in the full graph under
the original weights. Restricting $\cM^*$ to those edges whose online
endpoints belong to $\cS_t$ gives a feasible matching in $\cG_t$.
Therefore,
\begin{equation}
\widetilde{\OPT}_t
\ge
\sum_{(i,j)\in\cM^*}
w(i,j)\,
\mathbf{1}_{\{j\in\cS_t\}}\,
a_t(i).
\label{eq:sda-fixed-benchmark}
\end{equation}

Fix an edge $(i,j)\in\cM^*$, and write
\[
D_i:=\deg_G(i)\le d.
\]
Conditioned on $j\in\cS_t$, the set
$\cS_t\setminus\{j\}$ is a uniformly random $(t-1)$-element subset of
$J\setminus\{j\}$. Hence,
\begin{align*}
\Pr[d_t(i)=1\mid j\in\cS_t]
&=
\frac{\binom{n-t}{D_i-1}}
     {\binom{n-1}{D_i-1}} 
\ge
\frac{\binom{n-t}{d-1}}
     {\binom{n-1}{d-1}} =
q_{n,d}(t).
\end{align*}
The inequality follows because enlarging the fixed set of other
neighbors from $D_i-1$ to $d-1$ can only decrease the probability of
avoiding all of them. Importantly, this calculation concerns a fixed
edge of $\cM^*$, rather than an endpoint selected adaptively by $\BB$.

Since
\[
\Pr[j\in\cS_t]=\frac{t}{n}
\]
and $d_t(i)\ge1$ whenever $j\in\cS_t$, we obtain
\begin{align*}
\E\Bigl[
\mathbf{1}_{\{j\in\cS_t\}}
a_t(i)
\Bigr]
&=
\frac{t}{n}
\Bigl[
b_t+
(1-b_t)
\Pr[d_t(i)=1\mid j\in\cS_t]
\Bigr] \\
&\ge
\frac{t}{n}
\Bigl[
b_t+
(1-b_t)q_{n,d}(t)
\Bigr] =
\frac{t}{n}\,p_t.
\end{align*}

Taking expectations in~\eqref{eq:sda-fixed-benchmark} and summing over
the edges of $\cM^*$ yields
\[
\E[\widetilde{\OPT}_t]
\ge
\frac{t}{n}\,p_t
\sum_{(i,j)\in\cM^*}w(i,j)
=
\frac{t}{n}\,p_t\,\OPT.
\]
Substituting this into~\eqref{eq:sda-prefix-approx} gives
\[
\E[\widetilde W_t]
\ge
\frac{\alp\,\OPT}{n}\,p_t,
\]
which completes the proof.
\end{proof}

\subsection{Proof of Lemma~\ref{lem:ge1}}
\label{sec:14-b}

\begin{proof}
Fix $t>K$ and condition on $\mathcal H_t$.

If $i_t=\bot$, then by convention
\[
\Pr[A_t\mid\mathcal H_t]
=
1
=
a_t(\bot),
\]
and the claimed probability bound is immediate.
Henceforth assume that $i_t\in I$, and write
\[
i:=i_t.
\]

\smallskip
\noindent
\textbf{Case 1: $d_t(i)=1$.}
The only neighbor of $i$ in $\cS_t$ is $j_t$.
Hence no earlier arrival can have matched $i$, and therefore
\[
\Pr[A_t\mid\mathcal H_t]
=
1
=
a_t(i).
\]
This case also covers $t=1$ whenever a tentative edge exists.

\smallskip
\noindent
\textbf{Case 2: $d_t(i)\ge2$.}
In this case,
\[
a_t(i)=b_t.
\]
If $K=0$, then $t\ge2$ and $b_t=0$, so the required lower bound is
immediate. Henceforth assume $K\ge1$.

Expose the order of
\[
\cS_t\setminus\{j_t\}
\]
backwards, beginning with round $t-1$ and proceeding down to round
$K+1$. For each
\[
s\in\{K+1,\ldots,t-1\},
\]
define
\[
\mathcal F_s:=
\sigma\bigl(
\cS_t;\,
j_t,j_{t-1},\ldots,j_{s+1};\,
\xi_t,\ldots,\xi_s
\bigr).
\]
Conditioned on $\mathcal F_s$, the set
\[
\cS_s
=
\cS_t
\setminus
\{j_t,j_{t-1},\ldots,j_{s+1}\}
\]
is fixed, while $j_s$ is uniformly distributed over the $s$ elements
of $\cS_s$.
Moreover, since $\xi_s$ is included in $\mathcal F_s$, the matching
$\cM_s$ returned by $\BB$ on the reweighted prefix graph $\cG_s$ is
fixed under this conditioning.

Since $\cM_s$ is a matching, it contains at most one edge incident to
$i$. Define
\[
B_s:=
\{(i,j_s)\notin\cM_s\},
\]
the event that round $s$ makes no tentative proposal to $i$.
It follows that
\begin{equation}
\Pr[B_s\mid\mathcal F_s]
\ge
1-\frac{1}{s}.
\label{eq:sda-backward-step}
\end{equation}

To combine these conditional bounds, define
\[
C_s:=
\bigcap_{r=s}^{t-1}B_r,
\qquad
C_t:=\Omega.
\]
The event $C_{s+1}$ is $\mathcal F_s$-measurable, since all tentative
proposals in the already exposed suffix are determined by
$\mathcal F_s$. Moreover,
\[
\mathcal H_t\subseteq\mathcal F_s.
\]
Therefore,
\begin{align*}
\Pr[C_s\mid\mathcal H_t]
&=
\E\Bigl[
\mathbf{1}_{C_{s+1}}
\Pr[B_s\mid\mathcal F_s]
\;\Big|\;
\mathcal H_t
\Bigr] \ge
\left(1-\frac{1}{s}\right)
\Pr[C_{s+1}\mid\mathcal H_t].
\end{align*}
Iterating this inequality from $s=t-1$ down to $s=K+1$ yields
\begin{align*}
\Pr[C_{K+1}\mid\mathcal H_t]
&\ge
\prod_{s=K+1}^{t-1}
\left(1-\frac{1}{s}\right) =
\frac{K}{t-1}
=
b_t,
\end{align*}
where an empty product is interpreted as $1$.

On the event $C_{K+1}$, no earlier matching-phase round makes a
tentative proposal to $i$. Hence $i$ is unmatched immediately before
round $t$. Consequently,
\[
\Pr[A_t\mid\mathcal H_t]
\ge
b_t
=
a_t(i).
\]

Combining the two cases, together with the dummy case $i_t=\bot$,
establishes
\[
\Pr[A_t\mid\mathcal H_t]
\ge
a_t(i_t).
\]

Finally, $W_t$ is $\mathcal H_t$-measurable. Therefore,
\begin{align*}
\E[R_t\mid\mathcal H_t]
&=
W_t\Pr[A_t\mid\mathcal H_t] \ge
W_t a_t(i_t).
\end{align*}
If $i_t\in I$, then by definition
\[
W_ta_t(i_t)
=
\widetilde W_t.
\]
If $i_t=\bot$, then both sides are zero by convention.
Thus,
\[
\E[R_t\mid\mathcal H_t]
\ge
\widetilde W_t,
\]
which completes the proof.
\end{proof}


\section{Conclusion and Future Work}

We developed a degree-parameterized analysis of sampling-based algorithms for edge-weighted online matching under random arrival order. For \emph{Deterministic Greedy Sampling}, we gave a tight competitiveness characterization as a function of the offline degree and sampling fraction, while preserving linear per-arrival time. We also derived worst-case bounds on the variance of the number of matched
offline agents, including an order-tight characterization near full sampling
and a linear-in-$\theta$ upper bound showing that the worst-case variance
vanishes as the sampling fraction approaches zero. For \emph{Black-Box Sampling--Matching}, we proved a transfer theorem that converts an offline approximation guarantee into online competitiveness through an explicit degree- and sampling-dependent factor.

These results show how bounded offline degree changes the behavior of classical sampling--matching frameworks and how the sampling fraction can be tuned to interpolate between sparse and dense compatibility structures. Several directions remain open. One is to identify structural assumptions under which weighted-reward variance can be controlled without being dominated by edge-weight composition. Another is to derive solver-specific variance guarantees for black-box frameworks, possibly under stability or monotonicity assumptions on the offline solver. It would also be useful to extend the degree-parameterized viewpoint to richer sparsity notions, degree regimes growing with $n$, and alternative arrival models.

\begin{acks}
This work was supported in part by the National Science Foundation under
CAREER Award No.~2538522.
A preliminary version of this work has been accepted to the 22nd Conference
on Web and Internet Economics (WINE 2026).
\end{acks}

\bibliographystyle{ACM-Reference-Format}
\bibliography{EC_21}

@inproceedings{feng2025degree,
  title={Degree-Bounded Online Bipartite Matching: OCS vs. Ranking},
  author={Feng, Yilong and Li, Haolong and Wu, Xiaowei and Zhou, Shengwei},
  booktitle={International Conference on Web and Internet Economics},
  pages={747--764},
  year={2025},
  organization={Springer}
}

@article{bhangale2025optimal,
  title={Optimal Online Bipartite Matching in Degree-2 Graphs},
  author={Bhangale, Amey and Chakraborty, Arghya and Harsha, Prahladh},
  journal={arXiv preprint arXiv:2511.16025},
  year={2025}
}

@inproceedings{azar2017online,
  title={Online lower bounds via duality},
  author={Azar, Yossi and Cohen, Ilan Reuven and Roytman, Alan},
  booktitle={Proceedings of the Twenty-Eighth Annual ACM-SIAM Symposium on Discrete Algorithms},
  pages={1038--1050},
  year={2017},
  organization={SIAM}
}

@inproceedings{buchbinder2007online,
  title={Online primal-dual algorithms for maximizing ad-auctions revenue},
  author={Buchbinder, Niv and Jain, Kamal and Naor, Joseph},
  booktitle={European Symposium on Algorithms},
  pages={253--264},
  year={2007},
  organization={Springer}
}

@article{huang2024online,
  title={Online matching: A brief survey},
  author={Huang, Zhiyi and Tang, Zhihao Gavin and Wajc, David},
  journal={ACM SIGecom Exchanges},
  volume={22},
  number={1},
  pages={135--158},
  year={2024},
  publisher={ACM New York, NY, USA}
}

@article{kuhn1955hungarian,
  title={The Hungarian method for the assignment problem},
  author={Kuhn, Harold W},
  journal={Naval research logistics quarterly},
  volume={2},
  number={1-2},
  pages={83--97},
  year={1955},
  publisher={Wiley Online Library}
}

@article{goldberg1990finding,
  title={Finding minimum-cost circulations by successive approximation},
  author={Goldberg, Andrew V and Tarjan, Robert E},
  journal={Mathematics of Operations Research},
  volume={15},
  number={3},
  pages={430--466},
  year={1990},
  publisher={INFORMS}
}

@article{abdel1982secretary,
  title={The secretary problem with an unknown number of candidates},
  author={Abdel-Hamid, AR and Bather, JA and Trustrum, GB},
  journal={Journal of Applied Probability},
  volume={19},
  number={3},
  pages={619--630},
  year={1982},
  publisher={Cambridge University Press}
}

@article{aamand2022optimal,
  title={(Optimal) Online Bipartite Matching with Degree Information},
  author={Aamand, Anders and Chen, Justin and Indyk, Piotr},
  journal={Advances in Neural Information Processing Systems},
  volume={35},
  pages={5724--5737},
  year={2022}
}

@inproceedings{krysta2012online,
  title={Online mechanism design (randomized rounding on the fly)},
  author={Krysta, Piotr and V{\"o}cking, Berthold},
  booktitle={International Colloquium on Automata, Languages, and Programming},
  pages={636--647},
  year={2012},
  organization={Springer}
}

@inproceedings{reiffenhauser2019optimal,
  title={An optimal truthful mechanism for the online weighted bipartite matching problem},
  author={Reiffenhauser, Rebecca},
  booktitle={Proceedings of the Thirtieth Annual ACM-SIAM Symposium on Discrete Algorithms},
  pages={1982--1993},
  year={2019},
  organization={SIAM}
}

@inproceedings{xu2024tight,
  title={Tight Competitive and Variance Analyses of Matching Policies in Gig Platforms},
  author={Xu, Pan},
  booktitle={Proceedings of the ACM Web Conference 2024},
  pages={5--13},
  year={2024}
}

@inproceedings{dinitz2024controlling,
  title={Controlling Tail Risk in Online Ski-Rental},
  author={Dinitz, Michael and Im, Sungjin and Lavastida, Thomas and Moseley, Benjamin and Vassilvitskii, Sergei},
  booktitle={Proceedings of the 2024 Annual ACM-SIAM Symposium on Discrete Algorithms (SODA)},
  pages={4247--4263},
  year={2024},
  organization={SIAM}
}

@article{naor2018near,
  title={Near-optimum online ad allocation for targeted advertising},
  author={Naor, Joseph and Wajc, David},
  journal={ACM Transactions on Economics and Computation (TEAC)},
  volume={6},
  number={3-4},
  pages={1--20},
  year={2018},
  publisher={ACM New York, NY, USA}
}

@article{raghavan1988probabilistic,
  title={Probabilistic construction of deterministic algorithms: approximating packing integer programs},
  author={Raghavan, Prabhakar},
  journal={Journal of Computer and System Sciences},
  volume={37},
  number={2},
  pages={130--143},
  year={1988},
  publisher={Elsevier}
}

@inproceedings{xu2022exploring,
  title={Exploring the Tradeoff between Competitive Ratio and Variance in Online-Matching Markets},
  author={Xu, Pan},
  booktitle={The 18th Conference on Web and Internet Economics},
  year={2023}
}

@inproceedings{albers2022online,
  title={Online Ad Allocation in Bounded-Degree Graphs},
  author={Albers, Susanne and Schubert, Sebastian},
  booktitle={International Conference on Web and Internet Economics},
  pages={60--77},
  year={2022},
  organization={Springer}
}

@article{albers2022tight,
  title={Tight bounds for online matching in bounded-degree graphs with vertex capacities},
  author={Albers, Susanne and Schubert, Sebastian},
  journal={arXiv preprint arXiv:2206.15336},
  year={2022}
}

@inproceedings{cohen2018randomized,
  title={Randomized online matching in regular graphs},
  author={Cohen, Ilan Reuven and Wajc, David},
  booktitle={Proceedings of the Twenty-Ninth Annual ACM-SIAM Symposium on Discrete Algorithms},
  pages={960--979},
  year={2018},
  organization={SIAM}
}

@article{fahrbach2022edge,
  title={Edge-weighted online bipartite matching},
  author={Fahrbach, Matthew and Huang, Zhiyi and Tao, Runzhou and Zadimoghaddam, Morteza},
  journal={Journal of the ACM},
  volume={69},
  number={6},
  pages={1--35},
  year={2022},
  publisher={ACM New York, NY}
}

@inproceedings{ashlagi2019edge,
  title={Edge weighted online windowed matching},
  author={Ashlagi, Itai and Burq, Maximilien and Dutta, Chinmoy and Jaillet, Patrick and Saberi, Amin and Sholley, Chris},
  booktitle={Proceedings of the 2019 ACM Conference on Economics and Computation},
  pages={729--742},
  year={2019}
}

@article{huang2019online,
  title={Online vertex-weighted bipartite matching: Beating $1-1/e$ with random arrivals},
  author={Huang, Zhiyi and Tang, Zhihao Gavin and Wu, Xiaowei and Zhang, Yuhao},
  journal={ACM Transactions on Algorithms (TALG)},
  volume={15},
  number={3},
  pages={1--15},
  year={2019},
  publisher={ACM New York, NY, USA}
}

@inproceedings{zhou2021primal,
  title={A Primal-Dual Online Algorithm for Online Matching Problem in Dynamic Environments},
  author={Zhou, Yu-Hang and Hu, Peng and Liang, Chen and Xu, Huan and Huzhang, Guangda and Feng, Yinfu and Da, Qing and Wang, Xinshang and Zeng, An-Xiang},
  booktitle={Proceedings of the AAAI Conference on Artificial Intelligence},
  volume={35},
  number={12},
  pages={11160--11167},
  year={2021}
}

@article{DBLP:journals/corr/abs-2002-10697,
  author    = {Faez Ahmed and
               John Dickerson and
               Mark Fuge},
  title     = {Forming Diverse Teams from Sequentially Arriving People},
  journal   = {CoRR},
  volume    = {abs/2002.10697},
  year      = {2020},
  archivePrefix = {arXiv}
}

@article{goyal2020online,
  title={Online allocation of reusable resources via algorithms guided by fluid approximations},
  author={Goyal, Vineet and Iyengar, Garud and Udwani, Rajan},
  journal={arXiv preprint arXiv:2010.03983},
  year={2020}
}

@article{gong2021online,
  title={Online assortment optimization with reusable resources},
  author={Gong, Xiao-Yue and Goyal, Vineet and Iyengar, Garud N and Simchi-Levi, David and Udwani, Rajan and Wang, Shuangyu},
  journal={Management Science},
  year={2021},
  publisher={INFORMS}
}

@article{feng2019linear,
  title={Linear programming based online policies for real-time assortment of reusable resources},
  author={Feng, Yiding and Niazadeh, Rad and Saberi, Amin},
  journal={Chicago Booth Research Paper},
  number={20-25},
  year={2019}
}

@article{ma2021fairness,
  title={Fairness Maximization among Offline Agents in Online-Matching Markets},
  author={Ma, Will and Xu, Pan and Xu, Yifan},
  journal={arXiv preprint arXiv:2109.08934},
  year={2021}
}

@article{teac21,
  title={Allocation problems in ride-sharing platforms: Online matching with offline reusable resources},
  author={Dickerson, John P and Sankararaman, Karthik A and Srinivasan, Aravind and Xu, Pan},
  journal={ACM Transactions on Economics and Computation (TEAC)},
  volume={9},
  number={3},
  pages={1--17},
  year={2021},
  publisher={ACM New York, NY, USA}
}

@inproceedings{kaplan2022online,
  title={Online Weighted Matching with a Sample},
  author={Kaplan, Haim and Naori, David and Raz, Danny},
  booktitle={Proceedings of the 2022 Annual ACM-SIAM Symposium on Discrete Algorithms (SODA)},
  pages={1247--1272},
  year={2022},
  organization={SIAM}
}

@inproceedings{korula2009algorithms,
  title={Algorithms for secretary problems on graphs and hypergraphs},
  author={Korula, Nitish and P{\'a}l, Martin},
  booktitle={International Colloquium on Automata, Languages, and Programming},
  pages={508--520},
  year={2009},
  organization={Springer}
}

@inproceedings{huang2021online,
  title={Online stochastic matching, poisson arrivals, and the natural linear program},
  author={Huang, Zhiyi and Shu, Xinkai},
  booktitle={Proceedings of the 53rd Annual ACM SIGACT Symposium on Theory of Computing},
  pages={682--693},
  year={2021}
}

@inproceedings{karande2011online,
  title={Online bipartite matching with unknown distributions},
  author={Karande, Chinmay and Mehta, Aranyak and Tripathi, Pushkar},
  booktitle={Proceedings of the forty-third annual ACM symposium on Theory of computing},
  pages={587--596},
  year={2011},
  organization={ACM}
}

@inproceedings{mahdian2011online,
  title={Online bipartite matching with random arrivals: an approach based on strongly factor-revealing lps},
  author={Mahdian, Mohammad and Yan, Qiqi},
  booktitle={Proceedings of the forty-third annual ACM symposium on Theory of computing},
  pages={597--606},
  year={2011},
  organization={ACM}
}

@inproceedings{xu2020trade,
  title={Trading the System Efficiency for the Income Equality of Drivers in Rideshare},
  author={Xu, Yifan and Xu, Pan},
  year={2020},
  pages     = {4199--4205},
  booktitle={Proceedings of the Twenty-Ninth International Joint Conference on Artificial Intelligence}
}

@article{Manshadi2020OnlinePF,
  title={Online Policies for Efficient Volunteer Crowdsourcing},
  author={Vahideh Manshadi and Scott Rodilitz},
  journal={Proceedings of the 21st ACM Conference on Economics and Computation},
  year={2020}
}

@article{brubach2020online,
  title={Online stochastic matching: New algorithms and bounds},
  author={Brubach, Brian and Sankararaman, Karthik Abinav and Srinivasan, Aravind and Xu, Pan},
  journal={Algorithmica},
  volume    = {82},
  number    = {10},
  pages     = {2737--2783},
  year={2020},
  publisher={Springer}
}

@inproceedings{ho2012online,
  title={Online task assignment in crowdsourcing markets},
  author={Ho, Chien-Ju and Vaughan, Jennifer Wortman},
  booktitle={Twenty-sixth AAAI conference on artificial intelligence},
  year={2012}
}

@inproceedings{dickerson2018assigning,
  title={Assigning tasks to workers based on historical data: Online task assignment with two-sided arrivals},
  author={Dickerson, John P and Sankararaman, Karthik Abinav and Srinivasan, Aravind and Xu, Pan},
  booktitle={Proceedings of the 17th International Conference on Autonomous Agents and MultiAgent Systems},
  pages={318--326},
  year={2018},
  organization={International Foundation for Autonomous Agents and Multiagent Systems}
}

@article{buchbinder2009design,
  title={The design of competitive online algorithms via a primal--dual approach},
  author={Buchbinder, Niv and Naor, Joseph Seffi and others},
  journal={Foundations and Trends{\textregistered} in Theoretical Computer Science},
  volume={3},
  number={2--3},
  pages={93--263},
  year={2009},
  publisher={Now Publishers, Inc.}
}

@inproceedings{kess-stoc,
author = {Kesselheim, Thomas and T\"{o}nnis, Andreas and Radke, Klaus and V\"{o}cking, Berthold},
title = {Primal Beats Dual on Online Packing LPs in the Random-Order Model},
year = {2014},
isbn = {9781450327107},
publisher = {Association for Computing Machinery},
address = {New York, NY, USA},
url = {https://doi.org/10.1145/2591796.2591810},
doi = {10.1145/2591796.2591810},
booktitle = {Proceedings of the Forty-Sixth Annual ACM Symposium on Theory of Computing},
pages = {303–312},
numpages = {10},
location = {New York, New York},
series = {STOC ’14}
}

@inproceedings{kess13,
  title={An optimal online algorithm for weighted bipartite matching and extensions to combinatorial auctions},
  author={Kesselheim, Thomas and Radke, Klaus and T{\"o}nnis, Andreas and V{\"o}cking, Berthold},
  booktitle={European Symposium on Algorithms},
  pages={589--600},
  year={2013},
  organization={Springer}
}

@inproceedings{nanda2020balancing,
  title={Balancing the tradeoff between profit and fairness in rideshare platforms during high-demand hours},
  author={Nanda, Vedant and Xu, Pan and Sankararaman, Karthik Abhinav and Dickerson, John and Srinivasan, Aravind},
  booktitle={Proceedings of the AAAI Conference on Artificial Intelligence},
  volume={34},
  number={02},
  pages={2210--2217},
  year={2020}
}

@article{mehta2012online,
  author    = {Aranyak Mehta},
  title     = {Online Matching and Ad Allocation},
  journal   = {Foundations and Trends in Theoretical Computer Science},
  volume    = {8},
  number    = {4},
  pages     = {265--368},
  year      = {2013},
}

@inproceedings{kvv,
  author    = {Richard M. Karp and
               Umesh V. Vazirani and
               Vijay V. Vazirani},
  title     = {An Optimal Algorithm for On-line Bipartite Matching},
  booktitle = {Proceedings of the 22nd Annual {ACM} Symposium on Theory of Computing},
  series = {STOC '90},
  pages     = {352--358},
  year      = {1990},
}

 \clearpage
\appendix

 \section{Why Knowing $n$ Is Essential in the Secretary Problem}\label{app:15-a}
 The result in the following claim is implicitly suggested by the work of~\citet{abdel1982secretary}. For completeness, we include a formal proof.
\begin{claim}\label{lem:8-1-1}
Every online policy is zero-competitive for the classical Secretary Problem if the total number $n$ of arriving agents is unknown.
\end{claim}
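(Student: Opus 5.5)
We plan to prove Claim~\ref{lem:8-1-1} with a hard family of instances whose values grow geometrically. The same family should defeat every online policy, deterministic or randomized, once $n$ is hidden. Fix an arbitrary online policy $\mathcal{A}$ and a target ratio $c>0$; the goal is an instance on which $\mathcal{A}$ achieves ratio below $c$. The adversary first fixes a maximal horizon $N$ and a fast-growing sequence of values, say $v_k:=M^k$ with $M$ large. It then chooses the realized number of agents $n\in[N]$ without telling the policy. Since the policy cannot observe $n$, its behavior on the first $t$ arrivals is the same for every instance with $n\ge t$, which is the lever of the whole argument.

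To make the coupling between instances explicit and avoid random-order subtleties, we will use the standard equivalent presentation in which values arrive in a prescribed increasing order. For each $k\le N$, the instance $\mathcal{I}_k$ consists of $k$ agents with values $v_1<v_2<\cdots<v_k$ arriving in this order. All instances share a common prefix, and $\mathcal{A}$ cannot distinguish $\mathcal{I}_k$ from $\mathcal{I}_{k'}$ before round $\min\{k,k'\}+1$.

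Let $\pi_t$ denote the probability, over the policy's internal randomness, that $\mathcal{A}$ stops exactly at round $t$ on this common sequence. These probabilities do not depend on $k$, and $\sum_{t}\pi_t\le 1$. On $\mathcal{I}_k$ the benchmark is $v_k$, and the policy's expected reward is
\[
\sum_{t<k}\pi_t v_t+\pi_k v_k\le v_{k-1}+\pi_k v_k .
\]
Dividing by the benchmark gives a ratio of at most $\pi_k+1/M$. Because $\sum_k\pi_k\le1$, some $k\le N$ satisfies $\pi_k\le 1/N$. Choosing $N>2/c$ and $M>2/c$ therefore yields an instance with ratio below $c$. Since $c$ was arbitrary, the policy is zero-competitive.

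The main obstacle is reconciling this argument with the uniformly random arrival order of the Secretary Problem; it is immediate in the adversarial-order version. To handle random order, we will keep the increasing geometric values but argue through relative ranks. Under random order, the event that the $t$-th arrival is the running maximum, together with the pattern of relative ranks observed so far, has a distribution that does not depend on $n$ for $t\le n$. The stopping rule $\mathcal{A}$ can only be a function of this observed prefix. Consequently the probability that $\mathcal{A}$ stops by time $t$ is a fixed function $F(t)$, common to all $n\ge t$.

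The maximum of an $n$-instance appears at a uniformly random position, and to win the policy must stop exactly at that position. So we will bound the success probability on the $n$-instance by comparing stopping mass across dyadic scales $n\in\{2^1,2^2,\dots,2^L\}$. We will argue that the stopping mass placed in the block $(2^{\ell-1},2^{\ell}]$ upper bounds success at scale $2^\ell$, up to a constant factor plus the contribution of earlier stops. Those earlier stops capture at most a $1/M$ fraction of the benchmark, thanks to the geometric values; we will try to make this precise by conditioning on the position of the top-ranked agent. Since the total stopping mass is at most one, some scale receives mass $O(1/L)$, which makes the ratio arbitrarily small as $L\to\infty$.
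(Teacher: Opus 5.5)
Your adversarial-order argument is correct but does not bear on the claim. Presenting $v_1<\cdots<v_k$ in increasing order is not an equivalent form of the random-order Secretary Problem. Moreover, the argument never uses that $n$ is hidden: pad $\mathcal{I}_k$ with $n-k$ zero-valued agents and the same computation gives zero-competitiveness even with $n$ known.

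Everything therefore rests on your random-order sketch, and there the step ``those earlier stops capture at most a $1/M$ fraction of the benchmark, thanks to the geometric values'' is false. For the rank-based policies your sketch considers, a stop at time $t$ on a running maximum selects the overall maximum with probability exactly $t/n$. This holds because the set of the first $t$ values is independent of their internal order. So early stops earn the full benchmark with non-negligible probability; the geometric spacing only neutralizes stops on non-maximal candidates. For example, the rule that skips $n_0/\mathrm{e}$ arrivals and then takes the next running maximum, run on $n=2n_0$ agents, wins through stops at times $t\le n_0$ with probability about $1/(2\mathrm{e})$. Consequently, choosing the scale whose block $(2^{\ell-1},2^{\ell}]$ has mass $O(1/L)$ does not bound success at that scale, since the preceding blocks may be heavy.

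The repair is to weight earlier blocks rather than discard them. Let $r_t$ be the probability of stopping at $t$ on a running maximum; for rank-based policies it does not depend on $n\ge t$, and $\sum_t r_t\le 1$. Success is $P(n)=\frac{1}{n}\sum_{t\le n}t\,r_t$. With block masses $m_j$, this gives $P(2^\ell)\le\sum_{j\le\ell}2^{j-\ell}m_j$, and averaging over $\ell\in\{0,\ldots,L\}$ yields some $\ell$ with $P(2^\ell)\le 2/(L+1)$. That repaired argument would be a genuine alternative to the paper's. The paper writes the same quantity as $\frac{1}{n}\sum_{t\le n}s_tq_t$ (so $r_t=s_tq_t/t$) and proves $P(n,\mathbf{q})\to 0$ by a Ces\`aro/Kronecker argument, split on whether $\lim_t s_t$ vanishes. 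Your averaging gives an explicit rate for the best scale, while the paper obtains the full limit.

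Second, you assert that the stopping rule ``can only be a function of'' the relative-rank prefix. For your nested value sets $\{M,M^2,\ldots,M^n\}$ this is false, because the observed values leak $n$. Seeing $M^k$ certifies $n\ge k$. After a few arrivals, a value-aware policy can recover $\log M$ with high probability, as the generator of the additive group spanned by the observed log-values. It can then read off the exponents, estimate $n$ within a constant factor, and run the classical cutoff rule. Mixed with accepting the first arrival to cover small $n$, this policy is constant-competitive on your entire family. The paper works from the outset with policies that respond only to whether the current candidate is the best so far (its $q_t$ parametrization). You need to make the same restriction explicit, or choose values that carry no information about $n$, before the relative-rank computation applies.
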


Observe that for the Secretary Problem, maximizing the value associated with the picked candidate is equivalent to maximizing the probability of picking the best candidate (with the highest value), considering the metric of competitiveness. Thus, we can focus on policies that consider only the best candidate arriving so far, ignoring any second-best or worse candidates.\footnote{This can be seen as follows: Consider an instance where any two distinct candidates have values differing by a multiplicative factor of at least $1/\epsilon$. We claim that for any policy, the scenario when it picks any non-best candidate is equivalent to it getting nothing with a loss of no more than $\ep$ in the resulting competitiveness.}

\begin{proof}
Consider a given policy $\alg(\mathbf{q})$ parameterized by a vector $\mathbf{q} = (q_t)_{t}$, where for each $t = 1, 2, \ldots$, $q_t \in [0, 1]$ denotes the probability that $\alg$ will accept the candidate arriving at time $t$ given it is the best (i.e., its value is the largest so far). We say $\alg$ enters state $t$ if (\tbf{E1}) there is a candidate arriving at time $t$ (i.e., $n \ge t$) and (\tbf{E2}) $\alg$ does not pick any candidate before $t$. Note that (\tbf{E1}) is not a random event but is determined by the value $n$ pre-arranged by an adversary, while (\tbf{E2})  is a random event governed by $\alg$ itself. Let $s_t$ be the resulting probability that $\alg$ enters state $t$ given $n \ge t$. Therefore, we have

\begin{align}\label{seq:8-1-1}
&s_1 = 1,~s_2 = 1 - q_1,~s_3 = (1 - q_1) \, (1 - q_2/2); ~~s_{k+1} = s_k \, \left(1 - q_{k}/k\right), \forall k = 1, 2, \ldots.
\end{align}

Suppose $n \ge t$. Observe that when $\alg$ enters state $t$ (with probability $s_t$) and finds that the candidate arriving at time $t$, denoted by $a_t$, is the best so far (with probability $1/t$) and picks $a_t$ (with probability $q_t$),  the agent $a_t$ being the best among the whole pool of $n$ candidates occurs with probability $t/n$. Therefore, we claim that given $n \ge t$, $\alg$ picks the best among the pool at time $t$ with probability equal to $s_t \cdot (1/t) \cdot (t/n) \cdot q_t=s_t q_t/n$. Let $P(n, \bfq)$ denote the probability that the policy $\alg(\bfq)$ picks the best candidate among an input of $n$ candidates. Thus,
\begin{align*}
P(n, \bfq)=\sum_{t=1}^n s_t \, q_t/n=\frac{1}{n} \sum_{t=1}^n s_t \, q_t,
\end{align*}
where the sequence $(s_k)_k$ is defined in~\eqref{seq:8-1-1}. Let $\mathbb{N}$ denote the set of all positive natural numbers. The problem of identifying an optimal policy for the Secretary Problem with unknown $n$ (the number of arriving agents) can be formulated as the below maxmin program:
\begin{align}\label{pro:8-1-1}
\max_{\bfq=(q_t): q_t \in [0,1] \forall t \in \mathbb{N}}~~{V(\bfq):= \inf_{n \in \mathbb{N}} \bB{P(n, \bfq):=\frac{1}{n} \sum_{t=1}^n s_t \cdot q_t}}.
\end{align}

Observe that any policy must specify the sequence $\mathbf{q} = (q_t)$ in the absence of any knowledge of $n$. By Lemma~\ref{lem:8-1-2}, we see that for any given sequence $\mathbf{q} = (q_t)$ with $q_t \in [0, 1]$ for all $t \in \mathbb{N}$, $P(n, \mathbf{q}) \to 0$ as $n \to \infty$. This implies that for any $\mathbf{q}$, $V(\mathbf{q}) = 0$, suggesting that any policy characterized by $\mathbf{q}$ can pick the best candidate with probability approaching zero when $n$ is sufficiently large.
\end{proof}

\begin{lemma}\label{lem:8-1-2}
For any given sequence $\bfq=(q_t)$ with $q_t \in [0,1]$ for all $t \in \mathbb{N}$, $P(n, \bfq) \to 0$ when $n \to \infty$. 
\end{lemma}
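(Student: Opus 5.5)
We need to show that for any $\bfq$, $P(n,\bfq)=\frac1n\sum_{t=1}^n s_tq_t\to 0$. The plan is to exploit the recursion~\eqref{seq:8-1-1} together with telescoping, which turns the average of $s_tq_t$ into a weighted average of the increments $s_t-s_{t+1}$.

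\emph{Telescoping identity.} From $s_{t+1}=s_t(1-q_t/t)$ we get $s_tq_t = t\,(s_t-s_{t+1})$. Hence
\[
\frac1n\sum_{t=1}^n s_tq_t=\frac1n\sum_{t=1}^n t\,(s_t-s_{t+1}),
\]
a Cesàro-type average in which the nonnegative increments $s_t-s_{t+1}$ are weighted by $t/n\le 1$. Since $(s_t)$ is nonincreasing and lies in $[0,1]$, it converges to some limit $s_\infty\ge0$, and the increments satisfy $\sum_{t\ge1}(s_t-s_{t+1})=1-s_\infty\le 1$.

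\emph{Weighted tail argument.} Fix $\ep>0$ and choose $T$ such that $\sum_{t>T}(s_t-s_{t+1})<\ep$; this is possible because the series of increments converges. Split the sum at $T$. For the head,
\[
\frac1n\sum_{t\le T} t\,(s_t-s_{t+1})\le \frac{T}{n}.
\]
For the tail,
\[
\frac1n\sum_{T<t\le n} t\,(s_t-s_{t+1})\le\sum_{t>T}(s_t-s_{t+1})<\ep,
\]
using $t/n\le1$. Therefore $\limsup_n P(n,\bfq)\le\ep$, and since $\ep$ was arbitrary, the claim follows. This is essentially Kronecker's lemma applied to a summable nonnegative sequence.

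\emph{Main obstacle.} The only subtle point is recognizing that $s_tq_t$ cannot simply be bounded by $1$, since that gives only $P\le1$. The telescoping identity $s_tq_t=t(s_t-s_{t+1})$ is the key step: it converts the problem into controlling a $t/n$-weighted sum of summable increments. Everything else is routine. If one wants to avoid limits of $s_t$, the same argument works directly from the bound $\sum_t(s_t-s_{t+1})\le s_1=1$ for every partial sum.
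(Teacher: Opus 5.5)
Your proof is correct, but it takes a different route from the paper's. The paper splits on $s^*=\lim_t s_t$:
\begin{itemize}
\item if $s^*=0$, it bounds $s_tq_t\le s_t$ and uses that the Cesàro mean of a null sequence vanishes;
\item if $s^*>0$, it writes $s_{t+1}=\exp\bigl(\sum_{k\le t}\ln(1-q_k/k)\bigr)$ to get $\sum_k q_k/k<\infty$, then bounds $s_tq_t\le q_t$ and uses $\frac1n\sum_{t=M+1}^n q_t\le\sum_{t=M+1}^n q_t/t$.
\end{itemize}
Your identity $s_tq_t/t=s_t-s_{t+1}$ shows directly that $\sum_t s_tq_t/t\le 1$ for every $\bfq$. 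Each of the paper's two cases only establishes summability of a cruder majorant, $s_t$ (in the Cesàro sense) or $q_t/t$. So your single Kronecker-type estimate covers both regimes, with no case split and no appeal to the product/logarithm representation.

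The paper's split does give an interpretive reading of the two ways a policy can fail: either it eventually stops almost surely, or it accepts so rarely that $\sum_k q_k/k$ converges. Your argument is shorter and works with the exact quantity rather than upper bounds. In fact, summing by parts in your identity gives
\[
P(n,\bfq)=\frac1n\sum_{t=1}^n s_t-s_{n+1},
\]
and the claim is immediate because both terms tend to $s^*$.
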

We give a direct proof based on the definition of limits.
\begin{proof}
Observe that the sequence $\bfs:=(s_t)$ is non-increasing over $t$ by definition in~\eqref{seq:8-1-1}. Thus, $\lim_{t \to \infty} s_t$ exists. Let $s^*=\lim_{t \to \infty} s_t \in [0,1]$.  Consider the following two cases:

\tbf{Case 1}: $s^*=0$. By the definition of limits, for any given $\ep>0$, there exists an $M>0$ such that for any $t \ge M$, $s_t \le \ep$. Thus, for any $n >M$, 
\begin{align*}
P(n, \bfq)=\frac{1}{n} \sum_{t=1}^n s_t \, q_t \le \frac{1}{n} \sum_{t=1}^n s_t=
 \frac{1}{n} \sum_{t=1}^M s_t+ \frac{1}{n} \sum_{t=M+1}^n s_t \le \frac{M}{n}+\frac{n-M}{n} \ep \le  \frac{M}{n}+\ep.
\end{align*}
Therefore, we establish the claim.

\tbf{Case 2}: $s^* >0$.\footnote{This is possible: Consider the case when $q_1=0$ and $q_k=1/k$ for all $k \ge 2$, for example.} By definition, for each $t \ge 1$, 
\begin{align*}
s_{t+1}=\prod_{k=1}^t \bp{1-q_k/k}=\exp\bp{\sum_{k=1}^t \ln (1-q_k/k)}.
\end{align*}
$s^*>0$ suggests that the infinity series of $\sum_{k=1}^\infty q_k/k$ converges. By definition, for any given $\ep>0$, there exists an $M>0$ such that $\sum_{k=M}^{\infty} q_k/k \le \ep$. For any $n>M$, 

\begin{align*}
P(n, \bfq)=\frac{1}{n} \sum_{t=1}^n s_t \, q_t \le \frac{1}{n} \sum_{t=1}^n q_t=
 \frac{1}{n} \sum_{t=1}^M q_t+ \frac{1}{n} \sum_{t=M+1}^n q_t \le \frac{M}{n}+\sum_{k=M+1}^n q_k/k \le  \frac{M}{n}+\ep.
\end{align*}
Therefore, we establish the claim.
\end{proof}

\xhdr{Remarks on Claim~\ref{lem:8-1-1} and the Proof}.

(1) The proof above is more than a proof. In fact, it offers a general framework from which we can solve an optimal policy for the Secretary Problem under various settings. For example, for the classical setting when $n$ is known, the optimal solution to Program~\ref{pro:8-1-1} recovers the well-known optimal $1/\sfe$-competitive policy. Specifically, in that case, Program~\ref{pro:8-1-1} is updated to 
\begin{align}\label{pro:8-1-2}
\max_{{\bfq=(q_t) \in [0,1]^n}}~~{V(\bfq)={\frac{1}{n} \sum_{t=1}^n s_t \, q_t}},
\end{align}
where the domain of $\bfq$ is updated to $[0,1]^n$. For any given $n$, we can verify that an optimal solution to the above program is as follows: $q_t=0$ for all $1 \le t \le \tau$ and $q_t=1$ for all $\tau < t \le n$, where $\tau/n \approx 1/\sfe$ when $n \to \infty$.

(2) For the setting where $n$ is unknown but follows a known or unknown distribution, Program~\ref{pro:8-1-1} can also be useful. For example, assuming $n \sim \text{Pois}(\lambda)$, a Poisson distribution with a known parameter $\lambda > 0$, finding an optimal policy is equivalent to solving the following updated program:
\begin{align}\label{pro:8-1-3}
\max_{\bfq=(q_t): q_t \in [0,1] \forall t \in \mathbb{N}}~~{V_\lam(\bfq):=\sum_{n=1}^{\infty} \left(\sfe^{-\lambda} \frac{\lambda^{n}}{n!}\right) \,\left(\frac{1}{n} \sum_{t=1}^n s_t \, q_t\right)}.
\end{align}
We can potentially apply the framework above to design robust policies that perform relatively well even when errors exist in the estimation of the parameters of the distribution of $n$ (e.g., errors in $\lambda$ given $n \sim \text{Pois}(\lambda)$). This suggests a possible direction for robust policies under misspecified arrival-count distributions.  
\section[Weighted-Variance Inflation Counterexample]
{A Counterexample Demonstrating Weighted Variance Inflation Beyond a Quadratic Factor of Edge-Weight Variation}
\label{app:why}

\begin{example}
Consider a generic algorithm $\ALG$ for the Edge-Weighted Online Matching problem under Random Order (\om).
Let $X$ denote the (random) number of matches produced by $\ALG$, let $Y$ denote the resulting total weight, and let $\bar{\omega}$ denote the maximum edge weight (or, more generally, a bound on edge-weight variation).
Although the inequality $Y \le \bar{\omega}\, X$ holds almost surely, the variance of $Y$ can nevertheless be much larger than $\bar{\omega}^2 \, \Var[X]$.
This shows that even when the number of matches exhibits limited variability, the total weight can fluctuate significantly due to heterogeneity in edge weights.

Consider the following simple distribution.
The random variable $X$ takes values $1$ and $k$, where $k \ge 2$, each with probability $1/2$.
Correspondingly, the random variable $Y$ takes values $1$ and $\bar{\omega}\, k$, each with probability $1/2$.
This models a situation in which $\ALG$ either matches a single edge of unit weight (with probability $1/2$), or matches $k$ edges, each with weight $\bar{\omega} > 1$ (with probability $1/2$).

We can verify that
\[
\Pr[Y \le \bar{\omega}\, X] = 1, \quad 
\Var[Y]
= \frac{(\bar{\omega}\, k - 1)^2}{4}
> \frac{(\bar{\omega}\, k - \bar{\omega})^2}{4}
= \bar{\omega}^2 \, \frac{(k-1)^2}{4}
= \bar{\omega}^2 \,\Var[X]. 
\]
\end{example}

This example highlights the additional technical challenges introduced by edge-weight heterogeneity when attempting to upper bound the variance of the total weight.
In this paper, our primary focus is on controlling the variance arising \emph{exclusively} from random factors associated with the input instance (namely, the random arrival order of online agents) and the algorithm itself. For this reason, in the variance analysis we restrict attention to the unweighted projection of the realized matching, namely its cardinality.

\section{Basic Properties of $\kappa(d,\theta)$, $\eta(d,\theta)$, and $\sigma(d,\theta)$}
\label{app:lem-sig}
The following lemma records basic properties of these functions used throughout the analysis.

\begin{lemma}
\label{lem:sig}
\begin{enumerate}
    \item For any integer $d \ge 1$ and any $\theta \in [0,1]$, we have
    \[
        \sigma(d, \theta) \;\ge\; \frac{1}{2}\kappa(d, \theta)
        \;\ge\; \frac{1}{2}\eta(d, \theta).
    \]
    \item For each $d \ge 1$, the function $\sigma(d, \theta)$ admits a unique maximizer over $\theta \in [0,1]$, denoted by $\theta^*_{\sigma}(d)$.
    \item For each $d \ge 1$, the function $\kappa(d, \theta)$ admits a unique maximizer over $\theta \in [0,1]$, denoted by $\theta^*_{\kappa}(d)$. Moreover, the sequence $\{\theta^*_{\kappa}(d)\}_{d \ge 1}$ is increasing in $d$, while
    \[
        \kappa^*(d) := \kappa\bigl(d, \theta^*_{\kappa}(d)\bigr)
    \]
    is decreasing in $d$.
\end{enumerate}
\end{lemma}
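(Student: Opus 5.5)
The plan is to get everything from an explicit form of $\kappa$ in $d$ and from its first two $\theta$-derivatives. Throughout, write $x:=1-\theta$.

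\emph{Closed form in $d$.} The integrand of~\eqref{eqn:kap} can be rewritten as $\theta/\zeta+(1-\zeta)^{d-1}(1-\theta/\zeta)$. Differencing in $d$ gives
\[
\kappa(d+1,\theta)-\kappa(d,\theta)=-\int_\theta^1(1-\zeta)^{d-1}(\zeta-\theta)\,\sd\zeta=-\frac{x^{d+1}}{d(d+1)},
\]
using the substitution $u=1-\zeta$. Since $\kappa(1,\theta)=x$, this yields
\[
\kappa(d,\theta)=x-\sum_{k=1}^{d-1}\frac{x^{k+1}}{k(k+1)}.
\]
As a consistency check, letting $d\to\infty$ recovers $-\theta\ln\theta$. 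The same identity shows that $\kappa(\cdot,\theta)$ is strictly decreasing in $d$ for $\theta<1$, which I will use in part~(3).

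\emph{Part (1).} For the first inequality, $\sigma\ge\kappa/2$ is equivalent to $\kappa\ge\theta(1-\theta)$. Because $\zeta\ge\theta$, the integrand is at least $\theta/\zeta$, so $\kappa\ge-\theta\ln\theta\ge\theta(1-\theta)$, where the last step uses $-\ln\theta\ge1-\theta$.

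For the second inequality, $\kappa\ge\eta$ is equivalent to
\[
F_d(x):=\kappa(d,\theta)\,(1-x^d)-(1-x)x\ge0 .
\]
I would prove this by induction on $d$. The base case $d=1$ holds with equality. For the inductive step, I use the closed form to write $F_{d+1}-F_d$ as an explicit polynomial in $x$, namely
\[
-\frac{x^{d+1}(1-x^{d+1})}{d(d+1)}+\kappa(d,\theta)\,x^d(1-x).
\]
I then show that this is nonnegative, or, failing that, that $F_d$ itself is nonnegative after expanding it as a polynomial with sign-controlled coefficients. For the first route it suffices to check
\[
\kappa(d,\theta)(1-x)\ge\frac{x(1-x^{d+1})}{d(d+1)},
\]
and here I would use $\kappa(d,\theta)\ge\kappa(\infty,\theta)=-\theta\ln\theta$ together with elementary bounds. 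I expect this polynomial inequality to be the main obstacle. If the crude $\kappa\ge-\theta\ln\theta$ bound is too weak near $x\to1$, I would instead keep the first few terms of the closed form, which is exact there.

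\emph{Parts (2) and (3).} Differentiating~\eqref{eqn:kap} in $\theta$ gives
\[
\partial_\theta\kappa=-1+\int_\theta^1\frac{1-(1-\zeta)^{d-1}}{\zeta}\,\sd\zeta,
\qquad
\partial_\theta^2\kappa=-\frac{1-x^{d-1}}{\theta}=-\sum_{k=0}^{d-2}x^k .
\]
\begin{itemize}
\item For $d\ge2$, $\kappa$ is strictly concave on $(0,1]$, so it has a unique maximizer. For $d=1$, $\kappa=1-\theta$ is strictly decreasing, so the maximizer is $\theta=0$.
\item For $\sigma$ we have $\partial_\theta^2\sigma=\partial_\theta^2\kappa+1=-\sum_{k=1}^{d-2}x^k$. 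This is strictly negative for $d\ge3$, so $\sigma$ is strictly concave there and the maximizer is unique.
\item For $d=2$, direct computation gives $\sigma(2,\theta)=x/2$, and for $d=1$ it gives $\sigma(1,\theta)=x(1-\theta/2)$. Both are strictly decreasing, so the unique maximizer is $\theta=0$, matching the main text.
\end{itemize}

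For the monotonicity claims in~(3), when the maximizer is interior $\theta^*_\kappa(d)$ solves
\[
G_d(\theta):=\int_\theta^1\frac{1-(1-\zeta)^{d-1}}{\zeta}\,\sd\zeta=1 .
\]
Here $G_d$ is strictly decreasing in $\theta$ and strictly increasing in $d$, so the root increases with $d$. When $d$ is small enough that $G_d(0)\le1$, the maximizer is $0$; this is still consistent with the sequence being increasing. Finally, $\kappa^*(d+1)=\kappa(d+1,\theta^*_{\kappa}(d+1))<\kappa(d,\theta^*_{\kappa}(d+1))\le\kappa^*(d)$, by the strict decrease of $\kappa$ in $d$ from the first step, so $\kappa^*$ is decreasing.
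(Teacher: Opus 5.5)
Your parts (2) and (3) follow the paper's argument: concavity of $\kappa$ and $\sigma$ from the $\theta$-derivative, the maximizer of $\kappa$ as the root of $\int_\theta^1 \frac{1-(1-\zeta)^{d-1}}{\zeta}\,d\zeta=1$ moving right as $d$ grows, and $\kappa^*$ decreasing because $\kappa(\cdot,\theta)$ is. Your difference identity $\kappa(d+1,\theta)-\kappa(d,\theta)=-x^{d+1}/(d(d+1))$ improves on the paper here, since the paper only asserts that $\kappa$ is monotone in $d$. You also correctly note that $\theta^*_\kappa(1)=\theta^*_\kappa(2)=0$, so the monotonicity of $\theta^*_\kappa$ holds only in the weak sense. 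For $\sigma\ge\kappa/2$, your pointwise bound (the integrand is at least $\theta/\zeta$ because $\zeta\ge\theta$) reaches $\kappa\ge-\theta\ln\theta$ more directly than the paper's series-plus-tail estimate.

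The step you leave open, $\kappa\ge\eta$, is where the proposal is incomplete, and the tool you name for it cannot work. Your induction correctly reduces to
\[
\kappa(d,\theta)\,(1-x)\;\ge\;\frac{x\,(1-x^{d+1})}{d(d+1)} .
\]
As $\theta\to0$, both sides equal $\theta/d+O(\theta^2)$. Substituting $\kappa\ge-\theta\ln\theta$ instead makes the left side $\theta^2\ln(1/\theta)=o(\theta)$, so no ``elementary bounds'' can rescue that route near $x=1$.

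The exact closed form handles it in one line. For $x<1$, divide by $1-x$ (at $x=1$ both sides vanish) and insert $\kappa=x-\sum_{k=1}^{d-1}\frac{x^{k+1}}{k(k+1)}$. The difference of the two sides then becomes $x\,Q(x)$ with
\[
Q(x)=1-\frac{1}{d(d+1)}-\sum_{k=1}^{d-1}\Bigl(\frac{1}{k(k+1)}+\frac{1}{d(d+1)}\Bigr)x^k-\frac{x^d}{d(d+1)} .
\]
All non-constant coefficients of $Q$ are negative, so $Q$ is non-increasing on $[0,1]$, and $Q(1)=0$ by telescoping. Hence $Q\ge0$ and your induction closes.

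The paper avoids this polynomial work entirely. Let $D$ be the number of tosses of a $\theta$-coin up to and including the first success, truncated at $d$. Then $\kappa(d,\theta)=(1-\theta)\,\mathbb{E}[1/D]$ and $\eta(d,\theta)=(1-\theta)/\mathbb{E}[D]$, so $\kappa\ge\eta$ is just Jensen's inequality. That route is shorter and shows that the gap between $\kappa$ and $\eta$ is exactly a Jensen gap. Your closed form, in exchange, gives explicit control of the $d$-dependence (including the $d\to\infty$ limit and strict decrease in $d$), which the paper uses without proof.
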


We decompose the proof into three auxiliary lemmas.

\begin{lemma}
\label{lem:sig-1}
For any integer $d \ge 1$ and any $\ta \in [0,1]$, we have
\[
\sig(d,\ta) \ge \frac{\kap(d,\ta)}{2} \ge \frac{\eta(d,\ta)}{2}.
\]
\end{lemma}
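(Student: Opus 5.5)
The plan is to split the chain into its two inequalities. The first is elementary. The second, $\kappa \ge \eta$, is a genuine one-variable inequality that I will reduce to a statement about a concave function.

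\textbf{Step 1: $\sigma(d,\theta)\ge \tfrac12\kappa(d,\theta)$.} By \eqref{eqn:sig} this is equivalent to $\kappa(d,\theta)\ge \theta(1-\theta)$. For $\zeta\in[\theta,1]$ the integrand in \eqref{eqn:kap} is a convex combination of $1$ and $\theta/\zeta$, with weight $(1-\zeta)^{d-1}\in[0,1]$ on the value $1$. Since $\theta/\zeta\le 1$, the integrand is at least $\theta/\zeta\ge\theta$. Integrating over $[\theta,1]$ gives $\kappa(d,\theta)\ge -\theta\ln\theta\ge \theta(1-\theta)$. The boundary values $\theta\in\{0,1\}$ are checked directly.

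\textbf{Step 2: $\kappa(d,\theta)\ge \eta(d,\theta)$.} Set $x:=1-\theta$ and $S(x):=\sum_{k=0}^{d-1}x^k$, so that $\eta=x/S(x)$.

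\emph{Endpoints.} At $\theta=0$ both sides equal $1/d$; for $\kappa$ this uses $\int_0^1(1-\zeta)^{d-1}\,d\zeta=1/d$. At $\theta=1$ both sides vanish. For $d=1$ both sides equal $1-\theta$ identically, so assume $d\ge2$.

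\emph{Concavity of $\kappa$.} Differentiating under the integral gives
\[
\partial_\theta\kappa=-1+\int_\theta^1\frac{1-(1-\zeta)^{d-1}}{\zeta}\,d\zeta,
\qquad
\partial_\theta^2\kappa=-\frac{1-(1-\theta)^{d-1}}{\theta}\le 0 .
\]
So $\kappa$ is concave in $\theta$, with the explicit slopes
\[
\partial_\theta\kappa(0)=-1+H_{d-1},\qquad \partial_\theta\kappa(1)=-1,
\]
where $H_{d-1}=\int_0^1\frac{1-(1-\zeta)^{d-1}}{\zeta}\,d\zeta$ is the harmonic number.

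Concavity alone is not enough. The chord bound $\kappa\ge(1-\theta)/d$ lies below $\eta$, since $S\le d$. The first naive pointwise bound I checked fails as well: using $\theta/\zeta\ge\theta$ yields $\kappa\ge\theta(1-\theta)+(1-\theta)^{d+1}/d$, which is too weak by Bernoulli's inequality. So the comparison has to keep the $\theta/\zeta$ structure.

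\emph{Reduction to a polynomial inequality.} I will expand both sides in powers of $x$ near $\theta=0$. Writing $u=1-\zeta$ and $1/\zeta=\sum_{r\ge0}u^r$, the function $\kappa$ becomes
\[
\kappa=\theta\ln(1/\theta)+\int_0^{x}\Bigl(1-\frac{\theta}{1-u}\Bigr)u^{d-1}\,du .
\]
The logarithmic part is bounded below by $\sum_{k=1}^{N}\theta x^k/k$ for a truncation $N$ of order $d$. The inequality $\kappa\ge\eta$ then becomes, after multiplying by $S(x)>0$, a polynomial inequality $P_d(x)\ge 0$ on $[0,1]$. I would verify $P_d(x)\ge0$ by showing its coefficients in the basis $\{x^k(1-x)^{\ell}\}$ are nonnegative, i.e., a Bernstein-type positivity argument.

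\emph{Fallback.} Compare the two functions via $h(\theta):=\kappa-\eta$. We know $h(0)=h(1)=0$, and
\[
h'(0)=H_{d-1}-1-\partial_\theta\eta(0),
\]
which I would check is nonnegative. I would then show that $h$ has no interior zero, using the concavity of $\kappa$ together with a convexity or sign property of $\eta$ on a subinterval.

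\textbf{Main obstacle.} The hard step is Step 2, because $\kappa$ and $\eta$ agree at both endpoints. Every estimate must therefore be sharp at $\theta=0$ (to first order) and at $\theta=1$. The delicate regime is $\theta\to0$, where $\eta=\tfrac1d+O(\theta)$ and the first-order coefficients must be compared exactly. I would handle this regime first, by comparing the derivatives of $\kappa$ and $\eta$ at $0$, and then handle the bulk of the interval via the polynomial or Bernstein argument.
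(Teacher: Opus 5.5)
Your Step~1 is correct, and a bit slicker than the paper's route. The paper first computes the closed form $\kappa(d,\theta)=\frac{(1-\theta)^d}{d}+\theta\sum_{i=1}^{d-1}\frac{(1-\theta)^i}{i}$ and then compares it with the series for $-\theta\ln\theta$. Your convex-combination observation gets $\kappa\ge-\theta\ln\theta$ in one line.

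Step~2, however, contains no proof. Each of your routes ends in ``I would verify'', ``I would check'', or ``I would then show''. The deferred item is exactly the nontrivial content of $\kappa\ge\eta$. Two parts of the plan are also off.

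First, the reduction to a polynomial does not work as stated. The integral term $\int_0^x\bigl(1-\frac{\theta}{1-u}\bigr)u^{d-1}\,du$ equals $\frac{x^d}{d}-\theta\sum_{k\ge d}\frac{x^k}{k}$, an infinite series. So truncating only the logarithm never yields a polynomial. What you need is the exact cancellation of the two tails, which gives the identity $\kappa=\frac{x^d}{d}+(1-x)\sum_{k=1}^{d-1}\frac{x^k}{k}$.

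Second, your diagnosis of the delicate endpoint is reversed. At $\theta=0$ you have $h'(0)=H_{d-1}-1-\frac{d-3}{2d}>0$ strictly. At $\theta=1$, both functions have slope $-1$: in the variable $x$, $\kappa=x-\frac{x^2}{2}+O(x^3)$ and $\eta=x-x^2+O(x^3)$ for $d\ge2$, so $h$ has a double zero at $\theta=1$. Your polynomial $P_d=S(x)\kappa-x=S(x)h$ therefore has a factor $x^2(1-x)$ (e.g.\ $P_2=\frac12x^2(1-x)$). Nothing in the proposal shows that the remaining factor is positive for every $d$, and your fallback likewise leaves ``no interior zero'' unproved.

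The missing idea is probabilistic. Put $p_k:=x^{k-1}(1-x)$ for $1\le k\le d-1$ and $p_d:=x^{d-1}$. These are the probabilities of the truncated geometric variable $D$ that counts tosses of a $\theta$-coin, at most $d$, until the first success. Then $S(x)=\sum_k kp_k=\mathbb{E}[D]$ and $\kappa=x\sum_k p_k/k=x\,\mathbb{E}[1/D]$. Hence $\kappa\ge\eta=x/S(x)$ is just Jensen's inequality $\mathbb{E}[1/D]\ge1/\mathbb{E}[D]$, which is the paper's argument. In your polynomial language, it supplies exactly the certificate your Bernstein plan lacked:
\[
S(x)\,\kappa-x=\frac{x}{2}\sum_{k,l=1}^{d}p_kp_l\,\frac{(k-l)^2}{kl}\;\ge\;0,
\]
a manifestly nonnegative combination of monomials $x^a(1-x)^b$.
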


\begin{lemma}
\label{lem:sig-2}
For any integer $d \ge 1$, the function $\sig(d,\ta)$ admits a unique maximizer over $\ta \in [0,1]$, denoted by $\ta^*_\sig(d)$.
\end{lemma}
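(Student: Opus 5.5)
The plan is to obtain a closed form for $\partial_\theta\sigma(d,\theta)$ and $\partial^2_\theta\sigma(d,\theta)$, and then split on $d$. For $d\ge 3$ I will show strict concavity, which gives uniqueness. For $d\in\{1,2\}$ I will show strict monotonicity, which forces the maximizer to be $\theta=0$.

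\emph{Step 1: derivatives.} Write $g(\zeta):=1-(1-\zeta)^{d-1}$ and $F(\theta):=\int_\theta^1 g(\zeta)/\zeta\,\mathrm{d}\zeta$. The integrand $g(\zeta)/\zeta$ is a polynomial in $\zeta$: it equals $\sum_{k=0}^{d-2}(1-\zeta)^k$, and it is $0$ when $d=1$. Hence $F$ is smooth on $[0,1]$. Splitting~\eqref{eqn:kap} gives
\[
\kappa(d,\theta)=\frac{(1-\theta)^d}{d}+\theta F(\theta).
\]
Differentiating, the two boundary terms combine, since $-(1-\theta)^{d-1}-g(\theta)=-1$. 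This yields $\partial_\theta\kappa=F(\theta)-1$ and $\partial^2_\theta\kappa=-g(\theta)/\theta$. Therefore
\[
\partial_\theta\sigma(d,\theta)=F(\theta)+\theta-\tfrac32,\qquad
\partial^2_\theta\sigma(d,\theta)=1-\frac{g(\theta)}{\theta}.
\]

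\emph{Step 2: sign of the second derivative.} For $d\ge3$ and $\theta\in(0,1)$, strict convexity of $(1-\theta)^{d-1}$ gives $(1-\theta)^{d-1}<1-\theta$. Equivalently $g(\theta)>\theta$, so $\partial^2_\theta\sigma<0$ on $(0,1)$. Thus $\sigma(d,\cdot)$ is strictly concave and continuous on the compact interval $[0,1]$, so it attains its maximum at exactly one point.

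\emph{Step 3: the cases $d=1,2$.} For $d=2$ we have $g(\theta)=\theta$, so $F(\theta)=1-\theta$ and $\sigma(2,\theta)=(1-\theta)/2$. This is strictly decreasing, so the unique maximizer is $\theta=0$. For $d=1$ we have $g\equiv0$ and $F\equiv0$, so $\partial_\theta\sigma=\theta-\tfrac32<0$. Again $\sigma(1,\cdot)$ is strictly decreasing and the unique maximizer is $0$. This also matches the values $\sigma^*(1)=1$ and $\sigma^*(2)=1/2$ stated earlier.

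The only step requiring care is Step 1: checking that the boundary terms collapse to the constant $-1$, and that $F$ is well-behaved at $\theta=0$. Both follow from the polynomial form of $g(\zeta)/\zeta$. Once Step 1 is in hand, the concavity argument is immediate. If desired, the maximizer for $d\ge3$ can be further characterized as the unique root of $F(\theta)+\theta=3/2$ when $F(0)>3/2$, and as $\theta=0$ otherwise.
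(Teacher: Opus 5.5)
Your proof is correct and follows essentially the same route as the paper. The paper also computes $\partial_\theta\sigma(d,\theta)=\int_\theta^1 \frac{1-(1-\zeta)^{d-1}}{\zeta}\,\mathrm{d}\zeta+\theta-\frac{3}{2}=\sum_{\ell=2}^{d-1}\frac{(1-\theta)^\ell}{\ell}-\frac{1}{2}$, notes that this is decreasing for $d\ge 3$ (giving concavity), and handles $d=1,2$ by monotonicity; your second-derivative check $1-g(\theta)/\theta<0$ is only a cosmetic variant that makes the strict concavity needed for uniqueness explicit.
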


\begin{lemma}
\label{lem:sig-3}
For each integer $d \ge 1$, the function $\kap(d,\ta)$ admits a unique maximizer over $\ta \in [0,1]$, denoted by $\ta^*_\kap(d)$.
Moreover, the sequence $\{\ta^*_\kap(d)\}_{d \ge 1}$ is increasing in $d$, while
\[
\kap^*(d) := \kap\bigl(d,\ta^*_\kap(d)\bigr)
\]
is decreasing in $d$.
\end{lemma}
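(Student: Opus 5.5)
We prove Lemma~\ref{lem:sig-3} by a direct calculus argument on the closed form of $\kappa$, using first-order conditions in $\theta$ and pointwise monotonicity in $d$. First rewrite~\eqref{eqn:kap} as
\[
\kappa(d,\theta)=-\theta\ln\theta+\int_\theta^1 (1-\zeta)^{d-1}\Bigl(1-\frac{\theta}{\zeta}\Bigr)\,\sd\zeta ,
\]
with the convention $0\ln 0=0$. Differentiate in $\theta$ by the Leibniz rule. The integrand of~\eqref{eqn:kap} at $\zeta=\theta$ equals $(1-\theta)^{d-1}+\bigl(1-(1-\theta)^{d-1}\bigr)=1$, so
\[
\frac{\partial\kappa}{\partial\theta}(d,\theta)=-1+g_d(\theta),
\qquad
g_d(\theta):=\int_\theta^1\frac{1-(1-\zeta)^{d-1}}{\zeta}\,\sd\zeta .
\]
The function $g_d$ is finite on $[0,1]$ because the integrand is bounded by $d-1$ near $0$. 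Differentiating once more gives
\[
\frac{\partial^2\kappa}{\partial\theta^2}(d,\theta)=-\frac{1-(1-\theta)^{d-1}}{\theta},
\]
which is strictly negative on $(0,1]$ whenever $d\ge2$. Hence $\kappa(d,\cdot)$ is strictly concave on $[0,1]$ for $d\ge2$. For $d=1$ we have $\kappa(1,\theta)=1-\theta$ directly, whose unique maximizer is $\theta=0$.

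For $d\ge2$, strict concavity immediately gives a unique maximizer $\theta^*_\kappa(d)$. To locate it, note that $g_d$ is strictly decreasing, with $g_d(1)=0$ and $g_d(0)=\int_0^1\frac{1-(1-\zeta)^{d-1}}{\zeta}\,\sd\zeta=H_{d-1}$, the harmonic number (substitute $u=1-\zeta$ and expand the geometric sum). The two cases are:
\begin{itemize}
\item If $d=2$, then $g_2(0)=1$, so $\partial_\theta\kappa(2,0)=0$ and the derivative is negative afterwards; thus $\theta^*_\kappa(2)=0$.
\item If $d\ge3$, then $H_{d-1}>1$, so $\theta^*_\kappa(d)\in(0,1)$ is the unique root of $g_d(\theta)=1$.
\end{itemize}

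For monotonicity of the maximizer, observe that the integrand of $g_d$ is strictly increasing in $d$ for every $\zeta\in(0,1)$. Hence $g_{d+1}>g_d$ pointwise on $[0,1)$. Since both functions are strictly decreasing in $\theta$, the root of $g_{d+1}=1$ lies strictly to the right of the root of $g_d=1$. This gives $\theta^*_\kappa(d+1)>\theta^*_\kappa(d)$ for $d\ge2$, and $\theta^*_\kappa(3)>0=\theta^*_\kappa(2)=\theta^*_\kappa(1)$. So the sequence is nondecreasing, and strictly increasing from $d=2$ on. I would state explicitly that ``increasing'' is in this sense, since $d=1$ and $d=2$ share the maximizer $0$.

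For the optimal values, use the first rewriting: for $\zeta\in[\theta,1]$ the weight $1-\theta/\zeta$ is nonnegative and $(1-\zeta)^{d-1}$ is decreasing in $d$. Hence $\kappa(d+1,\theta)\le\kappa(d,\theta)$ for every $\theta$, with strict inequality for $\theta<1$. Then
\[
\kappa^*(d+1)=\kappa\bigl(d+1,\theta^*_\kappa(d+1)\bigr)<\kappa\bigl(d,\theta^*_\kappa(d+1)\bigr)\le\kappa^*(d),
\]
where strictness holds because $\theta^*_\kappa(d+1)<1$: the derivative at $\theta=1$ equals $-1$, so $1$ is never the maximizer.

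The only delicate step is the boundary behavior at $\theta=0$ and the low degrees $d=1,2$. This requires the identity $g_d(0)=H_{d-1}$, justifying Leibniz differentiation at $\theta=0$ (where $\theta\ln\theta$ is not differentiable, although the combined integrand is bounded), and handling the nonstrict monotonicity between $d=1$ and $d=2$. I would treat these by computing $\kappa(1,\cdot)$ and $\kappa(2,\cdot)$ explicitly.
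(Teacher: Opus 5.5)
Your proof is correct and follows essentially the same route as the paper. Both arguments use the derivative $\partial_\theta\kappa(d,\theta)=\int_\theta^1\frac{1-(1-\zeta)^{d-1}}{\zeta}\,\mathrm{d}\zeta-1$, which is strictly decreasing in $\theta$ (nonnegative at $\theta=0$ for $d\ge2$, equal to $-1$ at $\theta=1$) and increasing in $d$; this gives a unique root that moves right with $d$, and pointwise monotonicity of $\kappa(d,\theta)$ in $d$ then gives the monotonicity of $\kappa^*(d)$. Your version is more careful than the paper's: you compute $g_d(0)=H_{d-1}$, justify the pointwise decrease in $d$ together with the strictness of $\kappa^*(d+1)<\kappa^*(d)$, and correctly note that $\theta^*_\kappa(1)=\theta^*_\kappa(2)=0$, so ``increasing'' holds only in the weak sense at the start of the sequence, a point the paper glosses over.
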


\begin{proof}[\tbf{Proof of Lemma~\ref{lem:sig-1}}]
We first show that $\sig(d,\ta) \ge \kap(d,\ta)/2$.
Recall that
\[
\sig(d,\ta) = \kap(d,\ta) - \frac{\ta(1-\ta)}{2}.
\]
Thus, it suffices to prove that $\kap(d,\ta) \ge \ta(1-\ta)$.

When $d=1$, we have $\kap(1,\ta)=1-\ta \ge \ta(1-\ta)$.
Hence, assume $d\ge 2$. Observe that
\begingroup
\allowdisplaybreaks
\begin{align*}
\kap(d,\ta)
&= \int_\ta^1 \sd z \, \Bigl( (1-z)^{d-1}
    + \bigl(1-(1-z)^{d-1}\bigr)\frac{\ta}{z} \Bigr) \\
&= \frac{(1-\ta)^d}{d}
  + \ta \int_\ta^1 \sd z \,\frac{1-(1-z)^{d-1}}{z} \\
&= \frac{(1-\ta)^d}{d}
  + \ta \int_\ta^1 \sd z \,
    \bigl(1 + (1-z) + \cdots + (1-z)^{d-2}\bigr) \\
&= \frac{(1-\ta)^d}{d}
  + \ta \int_0^{1-\ta} \sd z \,
    \bigl(1 + z + \cdots + z^{d-2}\bigr) \\
&= \frac{(1-\ta)^d}{d}
  + \ta \sum_{i=1}^{d-1} \frac{(1-\ta)^i}{i}.
\end{align*}
\endgroup

For $\ta=0$, the desired inequality is immediate. Now suppose
$\ta\in(0,1]$ and set $q:=1-\ta$. Since
\[
-\ln \ta = \sum_{i=1}^{\infty}\frac{q^i}{i},
\]
we have
\[
\ta \sum_{i=d}^{\infty}\frac{q^i}{i}
\le
\frac{\ta}{d}\sum_{i=d}^{\infty}q^i
=
\frac{\ta}{d}\frac{q^d}{1-q}
=
\frac{q^d}{d}.
\]
Therefore,
\begin{align*}
\kap(d,\ta)
&= \frac{q^d}{d}
   + \ta\sum_{i=1}^{d-1}\frac{q^i}{i} \ge
\ta\sum_{i=d}^{\infty}\frac{q^i}{i}
+\ta\sum_{i=1}^{d-1}\frac{q^i}{i} =
\ta\sum_{i=1}^{\infty}\frac{q^i}{i}
=
-\ta\ln\ta \ge \ta(1-\ta),
\end{align*}
where the last inequality follows from the elementary bound
$-\ln x \ge 1-x$ for $x\in(0,1]$.

This establishes $\sig(d,\ta) \ge \kap(d,\ta)/2$.
\smallskip
Next, we show that $\kap(d,\ta) \ge \eta(d,\ta)$.
Consider the following random experiment.
We toss a biased coin with success probability $\ta$ up to $d$ times, stopping as soon as the first success appears.
Let $D$ denote the number of tosses before stopping (or $d$ if no success occurs).

We have
\begin{align*}
\E[D]
&= \sum_{i=1}^d \Pr[D \ge i]
 = \sum_{i=1}^d (1-\ta)^{i-1}, \\
\E[1/D]
&= \sum_{i=1}^{d-1} \frac{1}{i}(1-\ta)^{i-1}\ta
  + \frac{1}{d}(1-\ta)^{d-1}.
\end{align*}
By Jensen's inequality,
\[
\E[1/D] \ge \frac{1}{\E[D]}.
\]
Substituting the expressions above gives
\[
\sum_{i=1}^{d-1} \frac{1}{i}(1-\ta)^{i-1}\ta
+ \frac{1}{d}(1-\ta)^{d-1}
\ge \frac{1}{\sum_{i=1}^d (1-\ta)^{i-1}}.
\]
Multiplying both sides by $1-\ta$ yields $\kap(d,\ta) \ge \eta(d,\ta)$, completing the proof.
\end{proof}

\begin{proof}[\tbf{Proof of Lemma~\ref{lem:sig-2}}]
We first verify the cases $d=1$ and $d=2$.
In both cases, the function $\sig(d,\ta)$ is monotonically decreasing over $\ta \in [0,1]$.
Therefore, $\sig(d,\ta)$ admits a unique maximizer on $[0,1]$, which must occur at the left endpoint.

We now focus on the case $d \ge 3$.
It suffices to show that, for any fixed $d \ge 3$, the first derivative $\partial \sig(d,\ta)/\partial \ta$ is decreasing over $\ta \in [0,1]$.
This implies that $\sig(d,\ta)$ is concave on $[0,1]$, and hence admits a unique maximizer.

Recall that
\[
\sig(d,\ta) = \kap(d,\ta) - \frac{\ta(1-\ta)}{2}.
\]
Differentiating with respect to $\ta$ yields
\begin{align*}
\frac{\partial \sig(d,\ta)}{\partial \ta}
&= \frac{\partial \kap(d,\ta)}{\partial \ta} + \ta - \frac{1}{2}.
\end{align*}

Using the integral representation of $\kap(d,\ta)$, we obtain
\begin{align*}
\frac{\partial \kap(d,\ta)}{\partial \ta}
&= \int_\ta^1 \sd \zeta \,\frac{1-(1-\zeta)^{d-1}}{\zeta} - 1,
\end{align*}
and hence
\begin{align*}
\frac{\partial \sig(d,\ta)}{\partial \ta}
&= \int_\ta^1 \sd \zeta \,\frac{1-(1-\zeta)^{d-1}}{\zeta}
   + \ta - \frac{3}{2} = \sum_{\ell=2}^{d-1} \frac{(1-\ta)^\ell}{\ell} - \frac{1}{2}.
\end{align*}

The final expression is clearly decreasing in $\ta$ over $[0,1]$, since each term $(1-\ta)^\ell/\ell$ is decreasing in $\ta$.
Therefore, $\sig(d,\ta)$ is concave on $[0,1]$ for all $d \ge 3$, which implies the existence and uniqueness of the maximizer $\ta^*_\sig(d)$.
\end{proof}

\begin{proof}[\tbf{Proof of Lemma~\ref{lem:sig-3}}]
We first establish the existence and uniqueness of the maximizer of $\kap(d,\ta)$.

When $d=1$, we have $\kap(1,\ta)=1-\ta$, which is strictly decreasing over $\ta \in [0,1]$.
Thus, the unique maximizer is $\ta^*_\kap(1)=0$.

Now consider $d \ge 2$.
Differentiating $\kap(d,\ta)$ with respect to $\ta$ yields
\[
\frac{\partial \kap(d,\ta)}{\partial \ta}
= \left(\int_\ta^1 \sd \zeta \,\frac{1-(1-\zeta)^{d-1}}{\zeta} \right) - 1
=: F(d,\ta).
\]
We verify the following properties.
\begin{enumerate}
\item For any fixed $d \ge 2$, the function $F(d,\ta)$ is strictly decreasing over $\ta \in [0,1]$.
\item We have $F(d,1) = -1$, and
\[
F(d,0)
= \left(\int_0^1 \sd \zeta \,\frac{1-(1-\zeta)^{d-1}}{\zeta} \right)- 1
\ge \left(\int_0^1 \sd \zeta \,\frac{1-(1-\zeta)}{\zeta}\right) - 1
= 0.
\]
\end{enumerate}
By continuity and monotonicity, there exists a unique $\ta^*_\kap(d) \in [0,1]$ satisfying $F(d,\ta^*_\kap(d))=0$.
This establishes that $\kap(d,\ta)$ admits a unique maximizer over $\ta \in [0,1]$.

We now prove the monotonicity statements.
For any fixed $\ta \in [0,1]$, the function $F(d,\ta)$ is increasing in $d$ for $d=1,2,\ldots$.
Consequently, the unique root $\ta^*_\kap(d)$ of $F(d,\ta)=0$ is increasing in $d$.

Finally, observe that for any fixed $\ta \in [0,1]$, the function $\kap(d,\ta)$ is decreasing in $d$.
Therefore, the maximal value
\[
\kap^*(d) := \max_{\ta \in [0,1]} \kap(d,\ta)
\]
is also decreasing in $d$, completing the proof.
\end{proof}



\section{Proof of Lemma~\ref{lem:eqv}}
\label{app:lem-eqv}

We begin with the following auxiliary result, which will be used to establish Lemma~\ref{lem:eqv}.

\begin{lemma}
\label{lem:eqv-1}
Fix a graph $G=(I,J,E)$ and a sample set $\cS_K \subseteq J$.
Then the matching $\cM_K$ produced by $\smg$ after the sampling phase coincides with the set $\am$ produced by $\vir$.
\end{lemma}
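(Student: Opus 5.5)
We prove Lemma~\ref{lem:eqv-1} by a simultaneous induction over the common sorted edge list, showing that the edges $\vir$ places into $\am$ are exactly those Greedy adds to $\cM_K$, in the same order. Fix a single global tie-breaking rule for equal weights and use it both for Greedy on $\cG_K$ and for the sort in $\vir$. Greedy then scans exactly the subsequence of the sorted list $E$ consisting of edges whose online endpoint lies in $\cS_K$. So it suffices to show that, at every point of the scan of $E$, the partial set $\am$ in $\vir$ equals the partial greedy matching on the edges of $\cG_K$ processed so far.

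The induction hypothesis is: just before $\vir$ handles edge $e=(i,j)$, the current $\am$ equals the greedy matching built from the edges of $\cG_K$ that precede $e$. There are two cases.

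\emph{Case $j\notin\cS_K$.} Greedy on $\cG_K$ never sees $e$, and $\vir$ either ignores $e$ or puts it into $\bm$. In both subcases $\am$ is unchanged, so the invariant is preserved.

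\emph{Case $j\in\cS_K$.} Greedy adds $e$ exactly when neither $i$ nor $j$ is covered by the current greedy matching, that is, by $\am$. $\vir$ adds $e$ to $\am$ exactly when $j$ is unprocessed and $\am\cup\{e\}$ is a matching. The condition ``$\am\cup\{e\}$ is a matching'' already says that $i$ and $j$ are uncovered by $\am$. It remains to show that the extra requirement ``$j$ unprocessed'' is automatic. The key observation is that an online vertex $j\in\cS_K$ becomes processed only at the moment one of its edges is added to $\am$. Indeed, whenever $\vir$ marks $j$ as processed, it places that edge into $\am$, because $j\in\cS_K$; it never places it into $\bm$. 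Hence, for $j\in\cS_K$, ``processed'' is equivalent to ``covered by $\am$''. Both rules therefore take the same decision on $e$, and the invariant holds after $e$.

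At the end of the scan, $\am=\cM_K$.

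The only delicate point is this observation that processing of sampled vertices coincides with coverage by $\am$, together with the common tie-breaking convention. Without a shared tie-breaking rule, the equality holds only up to ties, so the statement should be read with that convention fixed. I expect no further obstacle. The lemma is deterministic given $\cS_K$, and the edges sent to $\bm$ never affect $\am$, since the matching test in $\vir$ involves only $\am$.
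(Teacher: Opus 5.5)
Your proof is correct, but it is organized differently from the paper's. The paper inducts on $K=|\cS_K|$. It takes the maximum-weight edge $e^*=(i,j)$ of $\cG_K$ and notes that Greedy selects it first and that it is also the first edge $\vir$ places in $\am$. It then passes to the residual graph on $I\setminus\{i\}$ and $J\setminus\{j\}$ with sample $\cS_K\setminus\{j\}$ and applies the induction hypothesis to that smaller instance.

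You instead keep a single invariant along the scan of the sorted list $E$. Your key observation is that, for $j\in\cS_K$, being processed is the same as being covered by $\am$, so the ``unprocessed'' test is implied by the matching test. This shows directly that $\vir$ decides each sampled edge exactly as Greedy does.

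Your route never changes the instance. It also makes explicit a point the paper's step ``the subsequent construction of $\am$ is identical to running $\vir$ on $G'$'' leaves implicit. Edges with unsampled online endpoints are interleaved in the scan, including edges at $i$ heavier than $e^*$, which exist in $G$ but not in $G'$. These edges only mark unsampled vertices as processed and feed $\bm$, so they never affect $\am$.

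Your shared tie-breaking convention also makes precise what the paper tacitly assumes when it speaks of ``the'' maximum-weight edge, namely distinct weights. The paper's peeling argument, for its part, follows the ``process the top edge, then recurse on the residual instance'' pattern it reuses elsewhere, for example in the local analysis behind Lemma~\ref{lem:bam-b}.
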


\begin{proof}
We prove the claim by induction on $K = |\cS_K|$.

\smallskip
\noindent
\textbf{Base case ($K=1$).}
Suppose $\cS_K=\{j\}$.
Let $e^*$ be the edge of maximum weight in $E_j$, the set of edges incident to $j$.
Since the induced graph $\cG_K=(I,\cS_K)$ is a star centered at $j$, the greedy algorithm used in $\smg$ selects exactly the edge $e^*$.
Hence, $\cM_K=\{e^*\}$.

For the virtual algorithm $\vir$, observe that prior to processing any edge incident to $j$, we have $\am=\emptyset$.
Among all edges in $E_j$, the first one encountered by $\vir$ is $e^*$, which is then added to $\am$.
No other edge incident to $j$ can be added thereafter.
Thus, $\am=\{e^*\}$, and we conclude that $\cM_K=\am$ in the base case.

\smallskip
\noindent
\textbf{Inductive step.}
Assume that the claim holds for every bipartite graph and every sample set of size at most $N$, and consider a sample set $\cS_K$ with $K=N+1$.
Let $e^*=(i,j)$ be the maximum-weight edge in the induced graph $\cG_K=(I,\cS_K)$.

Under $\smg$, the Greedy algorithm first selects $e^*$.
After this selection, no other edge incident to either $i$ or $j$ can be selected.
Hence, the remainder of the Greedy procedure is exactly the Greedy algorithm applied to the residual graph
\[
G' := G\bigl (I\setminus\{i\},\, J\setminus\{j\}\bigr)
\]
with sample set
\[
\cS' := \cS_K\setminus\{j\},
\qquad |\cS'|=N.
\]
Let $\cM'$ denote the matching produced by Greedy on the induced sampled subgraph of $G'$.
Then
\[
\cM_K=\{e^*\}\cup \cM'.
\]

Under $\vir$, the first edge added to $\am$ is also $e^*$.
After $e^*$ is added to $\am$, every subsequent edge incident to $i$ fails the matching condition in Step~(4), while every subsequent edge incident to $j$ is ignored because $j$ has already been marked as processed.
Consequently, the subsequent construction of $\am$ is identical to running $\vir$ on the residual graph $G'$ with sample set $\cS'$.
Thus,
\[
\am=\{e^*\}\cup \widetilde{\cM}^{a},
\]
where $\widetilde{\cM}^{a}$ denotes the set produced by $\vir$ on $G'$ with sample set $\cS'$.

Since $|\cS'|=N$, the inductive hypothesis gives
\[
\cM'=\widetilde{\cM}^{a}.
\]
Therefore,
\[
\cM_K=\am.
\]
This completes the induction.
\end{proof}

\begin{proof}[\tbf{Proof of Lemma~\ref{lem:eqv}}]
We begin by considering a slight modification of $\smg$ in Algorithm~\ref{alg:smg}.
For each arrival $j_t$, define
\[
\tce_t := \{\, e=(i,j_t) \in E_{j_t} : w(e) \ge p(i) \,\}.
\]
Replace Step~\eqref{alg:s5} of $\smg$ with the following rule:

\smallskip
\emph{Add the maximum-weight edge in $\tce_t$ to $\tm$ if $\tce_t \neq \emptyset$, and reject $j_t$ otherwise,}
\smallskip

\noindent
where $\tm$ is initialized as the empty set and represents the final output of the modified algorithm.
That is, whenever $\tce_t$ is nonempty, $\smg$ adds the edge $e_t=(i_t,j_t)$ of maximum weight in $\tce_t$ to $\tm$, regardless of whether $i_t$ is already matched.

We first show that for any fixed graph $G$ and sample set $\cS_K$, the resulting set $\tm$ coincides with $\bm$, the pseudo-matching produced by $\vir$.

Fix a round $t$ and the arriving online agent $j_t$.
By definition, $e_t=(i_t,j_t)$ is the edge of maximum weight in $\tce_t$.
Suppose that $(i_t,j') \in \cM_K$ for some $j' \in \cS_K$.
To avoid ambiguity, we say that an edge $e$ is \emph{considered} in $\vir$ if it reaches Step~\eqref{alg:v1}, and \emph{processed} if it reaches Step~\eqref{alg:v2}.

Note the following observations.
\begin{enumerate}
\item The edge $(i_t,j_t)$ is considered before $(i_t,j')$ in $\vir$, since it has strictly larger weight.
\item At the time $(i_t,j_t)$ is considered, the offline agent $i_t$ must be unmatched; otherwise $(i_t,j') \notin \am$, and hence $(i_t,j') \notin \cM_K$ by Lemma~\ref{lem:eqv-1}.
\end{enumerate}

Since $e_t$ is the maximum-weight edge in $\tce_t$, it follows that $e_t$ is the unique edge incident to $j_t$ that is processed by $\vir$.
Moreover, because $j_t \notin \cS_K$, the edge $e_t$ is added to $\bm$.
Hence, every edge added to $\tm$ is also added to $\bm$, implying $\tm \subseteq \bm$.
A symmetric argument shows $\bm \subseteq \tm$, and therefore $\tm = \bm$.

Finally, for each offline agent $i \in I$, let $\tilde{\cE}_i$ denote the set of edges incident to $i$ in $\tm$, and let $\tilde{\cJ}_i$ be the corresponding set of online neighbors.
The final matching $\cM$ returned by $\smg$ can equivalently be viewed as selecting a single edge $e=(i,j^*) \in \tilde{\cE}_i$, where $j^*$ is the first arriving agent among $\tilde{\cJ}_i$.

Since all agents in $J \setminus \cS_K$ arrive in uniformly random order, each $j \in \tilde{\cJ}_i$ has an equal probability of arriving first.
Equivalently, each edge in $\tilde{\cE}_i$ is selected with equal probability.
Thus, $\cM$ can be viewed as being obtained from $\tm$ by independently sampling one edge uniformly at random from $\tilde{\cE}_i$ for each $i \in I$.

Because $\tm = \bm$ and $\tilde{\cE}_i = \cE_i^b$, this sampling procedure is exactly how $\bam$ is constructed in $\vir$.
We conclude that $\cM$ and $\bam$ are statistically identical.
\end{proof}


\section{Modular Instances Attaining Tightness for Inequalities~(\ref{ineq:bm}) and~(\ref{ineq:smg-1})}
\label{app:tight}

\subsection{Z-Shaped Modular Instances Attaining Tightness of Inequality~\eqref{ineq:bm}}
\label{app:z-tight}

Consider an input graph $G = (I,J,E)$ consisting of $m/2 = |I|/2$ identical modules.
Each module, illustrated in Figure~\ref{fig:motiv-a}, is an edge-weighted $Z$-shaped bipartite graph.

\begin{figure}[ht!]
\centering
\begin{minipage}{0.45\linewidth}
\centering
\begin{tikzpicture}[scale=0.9]
    \node[circle, draw, thick, minimum size=7mm] (i1) at (0,0) {$i_1$};
    \node[circle, draw, thick, minimum size=7mm] (i2) at (0,-2) {$i_2$};
    
    \node[circle, draw, thick, minimum size=7mm] (j1) at (4,0) {$j_1$};
    \node[circle, draw, thick, minimum size=7mm] (j2) at (4,-2) {$j_2$};

    \draw[ultra thick] (i1) -- node[above, midway, blue] {$1$} (j1);
    \draw[ultra thick] (i2) -- node[above, midway, blue] {$1$} (j2);
    \draw[ultra thick] (i2) -- node[above, midway, blue] {$1+\ep$} (j1);
\end{tikzpicture}
\end{minipage}
\hfill
\begin{minipage}{0.48\linewidth}
\begin{align*}
&|I| = m, \quad |J| = m;\\
&\OPT = 2;\\
&\ta = 1-\delta;\\
&\E[w(\cM^b)] = (1+\ep)\,\delta + \delta^2.
\end{align*}
\end{minipage}
\caption{
A modular instance demonstrating the tightness of inequality~\eqref{ineq:bm}.
Each module is a $Z$-shaped edge-weighted bipartite graph.
For this module, the auxiliary set $\cM^b$ produced by $\vir$ satisfies
$\E[w(\cM^b)] = (1+\ep)\,\delta + \delta^2$.
}
\label{fig:motiv-a}
\end{figure}

By concatenating $m/2$ independent copies of this module, the instance scales to arbitrary size without altering the tightness behavior.
For each module, the offline optimal value is $\OPT=2$.
Moreover, when $\ep = o(1)$ and $\delta = o(1)$, we have
\[
\E[w(\cM^b)] = \tfrac{\OPT}{2}\,\delta \, (1+o(1))
= \tfrac{\OPT}{2}\,(1-\ta)\,(1+o(1)),
\]
which shows that inequality~\eqref{ineq:bm} is asymptotically tight.

To see this, note that the edge $(i_2,j_1)$ is added to $\cM^b$ if and only if $j_1 \notin \cS_K$, which occurs with probability $1-\ta = \delta$.
Similarly, the edge $(i_2,j_2)$ is added to $\cM^b$ if and only if both $j_1 \notin \cS_K$ and $j_2 \notin \cS_K$, which occurs with probability $\delta^2$.
Here we omit additive $O(1/n)$ terms arising from sampling without replacement.

\subsection{Star-Like Modular Instances Attaining Tightness of Inequality~\eqref{ineq:smg-1}}
\label{app:star-tight}

Consider an input graph $G = (I,J,E)$ consisting of $m = |I|$ identical modules.
Each module, illustrated in Figure~\ref{fig:motiv}, is an unweighted star graph centered at an offline node $i^*$ with $d$ online neighbors.

\begin{figure}[ht!]
\centering
\begin{minipage}{0.45\linewidth}
\centering
\begin{tikzpicture}[scale=0.9]
    \node[circle, draw, thick, minimum size=7mm] (i) at (0,0) {$i^*$};

    \node[circle, draw, thick, minimum size=7mm] (j1) at (4,0) {$j_1$};
    \node[circle, draw, thick, minimum size=7mm] (j2) at (4,-1.5) {$j_2$};
    \node[circle, draw, thick, minimum size=7mm] (jd) at (4,-4) {$j_d$};

    \draw[dotted, thick] (4,-2.5) -- (4,-3.2);

    \draw[ultra thick] (i) -- node[above, midway, blue] {$1$} (j1);
    \draw[ultra thick] (i) -- node[above, midway, blue] {$1$} (j2);
    \draw[ultra thick] (i) -- node[below, midway, blue] {$1$} (jd);
\end{tikzpicture}
\end{minipage}
\hfill
\begin{minipage}{0.48\linewidth}
\begin{align*}
&|I| = m, \quad |J| = n = m\,d;\\
&w_e = 1, \quad \forall e \in E;\\
&\E[w(\bm)] = \sum_{\ell=1}^d (1-\ta)^\ell;\\
&\E[w(\bam)] = 1-\ta.
\end{align*}
\end{minipage}
\caption{
A modular instance demonstrating the tightness of inequality~\eqref{ineq:smg-1}.
Each module is an unweighted star centered at an offline node $i^*$ with $d$ online neighbors.
For this module, the auxiliary sets $\bm$ and $\bam$ produced by $\vir$ satisfy
$\E[w(\bm)] = \sum_{\ell=1}^d (1-\ta)^\ell$ and $\E[w(\bam)] = 1-\ta$.
}
\label{fig:motiv}
\end{figure}

By concatenating $m$ independent copies of this module, the instance scales to arbitrary size without altering the tightness behavior.

\begin{lemma}
\label{lem:ineq-tig}
Consider a single module centered at an offline node $i^*$ as shown in Figure~\ref{fig:motiv}.
Let $\bm$ and $\bam$ denote the sets of edges incident to $i^*$ output by $\vir$.
Then
\[
\E[w(\bm)] = \sum_{\ell=1}^d (1-\ta)^\ell,
\qquad
\E[w(\bam)] = 1-\ta,
\qquad
\E[w(\cM^b)] = \E[w(\widehat{\cM})] \, \sum_{\ell=1}^d (1-\ta)^{\ell-1}.
\]
\end{lemma}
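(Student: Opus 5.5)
We analyze $\vir$ on a single star module: one offline node $i^*$ with unit-weight edges to $j_1,\dots,j_d$, and these online agents have no other neighbors. All $d$ edges share weight $1$, so we fix the order in which $\vir$ scans them by the tie-breaking rule, say $(i^*,j_1),\dots,(i^*,j_d)$. Throughout we use the asymptotic convention of Lemma~\ref{lem:deg} and Appendix~\ref{app:z-tight}: membership events $\{j_\ell\in\cS_K\}$ are treated as independent Bernoulli$(\ta)$, and we drop the $O(1/n)$ corrections from sampling without replacement, which vanish as $n\to\infty$ for fixed $d$.

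\textbf{Step 1: describe $\bm$.} Each $j_\ell$ has a single edge, so it is always unprocessed when $(i^*,j_\ell)$ is scanned. Hence $(i^*,j_\ell)$ passes the test in Step~\eqref{alg:v2} if and only if $\am$ contains no edge at $i^*$. That happens exactly when none of $j_1,\dots,j_{\ell-1}$ lies in $\cS_K$. Given that the test passes, the edge goes to $\bm$ iff $j_\ell\notin\cS_K$. So $(i^*,j_\ell)\in\bm$ iff $j_1,\dots,j_\ell\notin\cS_K$, with probability $(1-\ta)^\ell$. Summing over $\ell$ gives $\E[w(\bm)]=\sum_{\ell=1}^d(1-\ta)^\ell$.

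\textbf{Step 2: describe $\bam$.} The edge set $\bm$ is exactly the run of failures before the first success, as in the proof of Lemma~\ref{lem:deg}. The node $i^*$ is matched in $\bam$ iff $\bm\neq\emptyset$, that is, iff $j_1\notin\cS_K$. This has probability $1-\ta$. Every edge has unit weight, so whichever edge the uniform choice in Step~\eqref{alg:v3} picks, $w(\bam)=\mathbf 1\{\bm\ne\emptyset\}$. Therefore $\E[w(\bam)]=1-\ta$.

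\textbf{Step 3: the identity.} We factor
\[
\sum_{\ell=1}^d(1-\ta)^\ell=(1-\ta)\sum_{\ell=1}^d(1-\ta)^{\ell-1}.
\]
This gives $\E[w(\bm)]=\E[w(\bam)]\sum_{\ell=1}^d(1-\ta)^{\ell-1}$, which is the claimed equality in \eqref{ineq:smg-1}.

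\textbf{Main obstacle.} The one delicate point is justifying the independent-coin model, since the paper's sampling is without replacement. As footnoted in Lemma~\ref{lem:deg}, the exact conditional probabilities are $(n-K-r)/(n-r)$. With $d$ fixed, these equal $1-\ta+O(d/n)$. So the three identities hold up to $O(d/n)$ and are exact in the stated $n\to\infty$ regime.
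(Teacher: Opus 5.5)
Your proof is correct and follows the paper's argument. It uses the same independent-coin approximation, justified by the $O(d/n)$ without-replacement estimate, and the same observation that $i^*$ is matched in $\widehat{\mathcal{M}}$ exactly when the first scanned neighbor lies outside the sample. The only variation is in how you obtain the expected weight of $\mathcal{M}^b$: you compute it edge by edge, with each $(i^*,j_\ell)\in\mathcal{M}^b$ with probability $(1-\theta)^\ell$. The paper instead computes the expected weight of $\mathcal{M}^a$ as $1-(1-\theta)^d$ and multiplies by $(1-\theta)/\theta$. Your route is slightly more self-contained and also covers $\theta=0$ without dividing by $\theta$.
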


The lemma above implies that inequality~\eqref{ineq:smg-1} is tight for this family of instances.

\begin{proof}
Fix a constant $\ta \in [0,1]$ and let $[d] = \{j_1, j_2, \ldots, j_d\}$.
For any subset $S \subsetneq [d]$ and any $j \in [d] \setminus S$, we have
\begin{align*}
\Pr[j \in \cS_K \mid \wedge_{j' \in S} (j' \in \cS_K)]
&= \frac{K-|S|}{n-|S|}
= \ta - O(d/n), \\
\Pr[j \in \cS_K \mid \wedge_{j' \in S} (j' \notin \cS_K)]
&= \frac{K}{n-|S|}
= \ta + O(d/n).
\end{align*}
Thus, when $d = O(1)$ and $n \to \infty$, the events $\{j \in \cS_K : j \in [d]\}$ are asymptotically independent with marginal probability $\ta$.
For ease of exposition, we assume asymptotic independence in the calculations below and omit $O(d/n)$ terms.

Under this assumption,
\[
\E[w(\am)] = 1 - (1-\ta)^d,
\qquad
\E[w(\bm)] = \E[w(\am)] \, \frac{1-\ta}{\ta}
= \sum_{\ell=1}^d (1-\ta)^\ell.
\]

We now compute $\E[w(\bam)]$.
Let $j_1$ denote the first online agent incident to $i^*$ that is processed by $\vir$.
If $j_1 \in \cS_K$, which occurs with probability $\ta$, then no edge is added to $\bm$ and hence $w(\bam)=0$.
If instead $j_1 \notin \cS_K$, which occurs with probability $1-\ta$, then $\bm \neq \emptyset$ and exactly one unit-weight edge incident to $i^*$ is selected into $\bam$, yielding $w(\bam)=1$.
Therefore,
\[
\E[w(\bam)] = 0 \cdot \ta + 1 \cdot (1-\ta) = 1-\ta.
\]
\end{proof}


\section{Proof of Theorem~\ref{thm:smg-cr-tig}}
\label{app:smg-cr-tight}

\subsection{Proof of Lemma~\ref{lem:bam-b}}

\begin{proof}
We apply a local-search-based analysis to identify the worst-case scenario.
Assume without loss of generality that $e^* = (i^*=1, j^*=1)$ is the first edge processed in $\vir$, and hence has the largest weight among all edges.\footnote{The same analysis can be applied recursively to the residual instance after $\vir$ processes the first edge.}

\noindent\textbf{Case 1:} $e^* \in M^*$.
Let $\tE \subseteq E_{i^*}$ be the set of edges incident to $i^*$ that may be processed in Step~\eqref{alg:v2} of $\vir$, conditioning on the event $j^* \notin \cS_K$.
Since $e^*$ has the largest weight, we have $e^* \in \tE$.
For each $e = (i^*, j) \in \tE$, let $\chi_e = 1$ if $j \notin \cS_K$.
Let $\beta$ denote the expected weight contributed to $\bam$ from the edge incident to $i^*$, and let $\gam$ denote the weight of the edge incident to $i^*$ in the offline optimal matching $M^*$. 
Here $\gam = w(e^*) =: w^*$, and our goal is to lower bound $\beta/\gam$.
Let $|\tE| = \td \le d$.
Then
\begin{align*}
\beta
&= w(e^*) \, \E[\chi_{e^*}] \, \E\bB{\frac{1}{\sum_{e \in \tE} \chi_e} ~\Big|~ \chi_{e^*} = 1} \\
&= w^* \, (1-\ta) \, \bP{1 \, \ta + \frac{1}{2}(1-\ta)\,\ta + \cdots + \frac{1}{\td-1} (1-\ta)^{\td-2}\,\ta + \frac{1}{\td} (1-\ta)^{\td-1}} \\
&= w^* \, \kap(\td, \ta),
\end{align*}
where $\kap(\cdot,\cdot)$ is defined in~\eqref{eqn:kap}.
As before, we omit $O(d/n)$ terms by assuming $d$ is constant and $n \gg 1$; see the discussion in the proof of Lemma~\ref{lem:ineq-tig}.
For any fixed $\ta \in [0,1]$, $\kap(\td,\ta)$ is non-increasing in $\td \in \{1,2,\ldots,d\}$.
Therefore,
\begin{align}
\frac{\beta}{\gam} = \frac{\beta}{w^*} \ge \kap(d,\ta).
\label{ineq:case1}
\end{align}
Thus, in this case the ratio $\beta/\gam$ can be as low as $\kap(d,\ta)$.

\medskip

\noindent\textbf{Case 2:} $e^* \notin M^*$.
Assume without loss of generality that $\te = (i^*, j=2) \in M^*$ and $\bae = (\bai, j^*) \in M^*$.
Let $w^* := w(e^*)$, $\tw := w(\te)$, and $\baw := w(\bae)$.
By optimality of $M^*$, we have $w^* \le \tw + \baw$.
Let $\beta$ denote the expected weight gained from edges incident to $i^*$ and $\bai$ in $\bam$, and let $\gam$ denote the total weight of edges incident to $i^*$ and $\bai$ in $M^*$; thus $\gam = \tw + \baw$.

We now lower bound $\beta/\gam$.
Since $e^*=(i^*,j^*)$ is processed first, the edge $\bae$ can no longer be considered because $j^*$ becomes processed.
Hence the gain from edges incident to $\bai$ can be as low as zero (e.g., if $\bai$ has only one neighbor $j^*$).
We therefore focus on the gain from edges incident to $i^*$.
At least two edges, namely $e^*=(i^*,j^*)$ and $\te=(i^*,j=2)$, can contribute to $\bam$.
Write $\beta=\beta_1+\beta_2$, where $\beta_1$ and $\beta_2$ are the expected gains contributed by $e^*$ and $\te$, respectively.

\medskip

As in \textbf{Case 1}, let $\tE \subseteq E_{i^*}$ be the set of edges incident to $i^*$ that may be processed in Step~\eqref{alg:v2} of $\vir$, conditioning on $j^* \notin \cS_K$.
We have $e^* \in \tE$ since it has the largest weight.
We consider two subcases.

\medskip

\noindent\textbf{Case 2a:} $\te \in \tE$ and $|\tE|=\td \le d$.
Applying the same reasoning as in \textbf{Case 1}, we obtain
\begin{align}
\beta_1 &\ge w(e^*) \, \kap(d,\ta) = w^* \, \kap(d,\ta), \nonumber \\
\beta_2
&= w(\te) \, (1-\ta)^2 \, \bP{\frac{1}{2}\,\ta + \cdots + \frac{1}{\td-1}(1-\ta)^{\td-3}\,\ta + \frac{1}{\td}(1-\ta)^{\td-2}} \nonumber \\
&= \tw \, \bp{\kap(\td,\ta) - \ta(1-\ta)}
\ge \tw \, \sbp{\kap(d,\ta) - \ta(1-\ta)}.
\label{ineq:5/6/a}
\end{align}
The last inequality follows since $\kap(\cdot,\ta)$ is non-increasing in its first argument for any fixed $\ta \in [0,1]$.

Therefore, the ratio of the expected gain from edges incident to $i^*$ and $\bai$ in $\bam$ (denoted by $\beta$) to the corresponding optimal weight in $M^*$ (denoted by $\gam$) satisfies
\begingroup
\allowdisplaybreaks
\begin{align}
\frac{\beta}{\gam}
&= \frac{\beta_1+\beta_2}{w(\bae)+w(\te)}
= \frac{\beta_1+\beta_2}{\baw+\tw} \nonumber \\
&\ge \frac{w^* \, \kap(d,\ta) + \tw \, \sbp{\kap(d,\ta) - \ta(1-\ta)}}{\baw+\tw} \nonumber \\
&= \frac{\kap(d,\ta) + (\tw/w^*) \, \sbp{\kap(d,\ta) - \ta(1-\ta)}}{\baw/w^* + \tw/w^*} \nonumber \\
&\ge \frac{\kap(d,\ta) + (\tw/w^*) \, \sbp{\kap(d,\ta) - \ta(1-\ta)}}{1 + \tw/w^*}
\quad \text{(since $\baw/w^* \le 1$)} \nonumber \\
&\ge \frac{2\kap(d,\ta) - \ta(1-\ta)}{2}
\quad \text{(since $\tw/w^* \le 1$)}.
\label{ineq:case2a}
\end{align}
\endgroup

\medskip

\noindent\textbf{Case 2b:} $\te \notin \tE$ and $|\tE|=\td \le d-1$.
Recall that $\tE \subseteq E_{i^*}$ is the set of edges incident to $i^*$ that may be processed in Step~\eqref{alg:v5} of $\vir$, conditioning on $j^* \notin \cS_K$.
In this case, before processing $\te=(i^*,j=2)$, some other edge $\he=(\hi,j=2)$ must be processed, with weight $w(\he)=:\hw \ge \tw$.
Let $\aw$ denote the weight contributed by $\hi$ in the optimal matching, where $\aee=(\hi,j=3) \in M^*$.

Let $\beta$ denote the total expected weight gained from edges incident to $i^*$, $\bai$, and $\hi$ in $\bam$, and let $\gam$ denote the corresponding total weight in $M^*$.
Then $\gam=\baw+\tw+\aw$, and $\beta \ge \beta_1+\beta_2+\beta_3$, where $\beta_1,\beta_2,\beta_3$ are the expected gains contributed by $e^*$, $\he$, and $\aee$, respectively.

From the previous analysis, $\beta_1 \ge w^* \, \kap(d-1,\ta)$ since $\te \notin \tE$ and hence $|\tE| \le d-1$.
Proceeding as in \textbf{Case 2a}, we have
\begingroup
\allowdisplaybreaks
\begin{align}
\frac{\beta}{\gam}
&\ge \frac{\beta_1+\beta_2+\beta_3}{\baw+\tw+\aw} \nonumber \\
&\ge \frac{w^* \, \kap(d-1,\ta) + \tw \, \kap(d,\ta) + \aw \, \sbp{\kap(d,\ta) - \ta(1-\ta)}}{\baw+\tw+\aw}
\quad \text{(since $\hw \ge \tw$)} \nonumber \\
&= \frac{\kap(d-1,\ta) + (\tw/w^*) \, \kap(d,\ta) + (\aw/w^*) \, \sbp{\kap(d,\ta) - \ta(1-\ta)}}{\baw/w^* + \tw/w^* + \aw/w^*} \nonumber \\
&\ge \frac{\kap(d-1,\ta) + \sbp{\kap(d,\ta) - \ta(1-\ta)} + (\tw/w^*) \, \kap(d,\ta)}{2 + \tw/w^*}
\quad \text{(since $\aw \le w^*$)} \nonumber \\
&\ge \min \bP{
\frac{\kap(d-1,\ta) + \sbp{\kap(d,\ta) - \ta(1-\ta)} + \kap(d,\ta)}{3},
\frac{\kap(d-1,\ta) + \sbp{\kap(d,\ta) - \ta(1-\ta)}}{2}
} \nonumber \\
&\ge \frac{\kap(d,\ta) + \sbp{\kap(d,\ta) - \ta(1-\ta)}}{2}
\quad \text{(since $\kap(d-1,\ta) \ge \kap(d,\ta)$)}.
\label{ineq:case2b}
\end{align}
\endgroup

By recursively applying the analysis above to the remaining subgraph, we establish Lemma~\ref{lem:bam-b}.
\end{proof}

\begin{proof}[\textbf{Proof of Theorem~\ref{thm:smg-cr-tig}}]
The lower bounds on $\beta/\gam$ in \eqref{ineq:case1} (\textbf{Case 1}) and \eqref{ineq:case2a} (\textbf{Case 2a}) can be tight, as illustrated in Figure~\ref{fig:1} (for $d \ge 1$) and Figure~\ref{fig:2} (for $d \ge 2$), respectively.\footnote{This is why we claim tightness only for $d \ge 2$, excluding the case $d=1$.}
In contrast, the lower bound in \eqref{ineq:case2b} (\textbf{Case 2b}) is tight only at the corner cases $\ta \in \{0,1\}$.
Therefore, for any fixed $\ta \in (0,1)$, the worst case occurs in \textbf{Case 2a}, where
\[
\frac{\beta}{\gam} = \kap(d,\ta) - \frac{\ta(1-\ta)}{2}.
\]
Summarizing the above analysis together with Lemma~\ref{lem:bam-b} completes the proof of Theorem~\ref{thm:smg-cr-tig}.
\end{proof}

\begin{figure}[ht!]
\begin{subfigure}[b]{0.45\textwidth}
\begin{tikzpicture}
 \draw (0,0) node[minimum size=0.2mm,draw,circle, thick] {$i^*$};
 \draw (4,0) node[minimum size=0.2mm,draw,circle, thick] {$j^*$};
 \draw (4,-1.5) node[minimum size=0.2mm,draw,circle, thick] {$j_2$};
 \draw [dotted, ultra thick](3.7,-3) -- (4.3,-3);
 \draw (4,-4) node[minimum size=0.2mm,draw,circle, thick] {$j_d$};

 \draw[-, ultra thick ] (0.4,0)--(3.6,0) node [blue, above, midway, sloped] {$w^*=1$};
 \draw[-, ultra thick] (0.4,0)--(3.6,-1.5) node[blue, above, midway, sloped] {$w_2=\ep$};
 \draw[-, ultra thick] (0.4,0)--(3.6,-4) node[blue, below, midway, sloped] {$w_d=\ep$};
 \draw[-] (0.4,0)--(3.7,-3);
\end{tikzpicture}
\caption{Worst-case structure for \tbf{Case 1}.}
\label{fig:1}
\end{subfigure}
\hfill
\begin{subfigure}[b]{0.45\textwidth}
\begin{tikzpicture}
 \draw (0,1.5) node[minimum size=0.2mm,draw,circle, thick] {$\bai$};
 \draw[-, ultra thick ] (0.4,1.5)--(3.6,0) node [blue, above, midway, sloped] {$\baw=1$};

 \draw (0,0) node[minimum size=0.2mm,draw,circle, thick] {$i^*$};
 \draw (4,0) node[minimum size=0.2mm,draw,circle, thick] {$j^*$};
 \draw (4,-1.5) node[minimum size=0.2mm,draw,circle, thick] {$j_2$};
 \draw [dotted, ultra thick](3.7,-3) -- (4.3,-3);
 \draw (4,-4) node[minimum size=0.2mm,draw,circle, thick] {$j_d$};

 \draw[-, ultra thick ] (0.4,0)--(3.6,0) node [blue, above, midway, sloped] {$w^*=1+\ep$};
 \draw[-, ultra thick] (0.4,0)--(3.6,-1.5) node[blue, above, midway, sloped] {$\tw=1$};
 \draw[-, ultra thick] (0.4,0)--(3.6,-4) node[blue, below, midway, sloped] {$w_d=\ep$};
 \draw[-, thick] (0.4,0)--(3.7,-3) node[blue, above, midway, sloped] {$w_3=\ep$};
\end{tikzpicture}
\caption{Worst-case structure for \tbf{Case 2a}.}
\label{fig:2}
\end{subfigure}

\caption{
Figure~\ref{fig:1} illustrates the worst-case scenario for \textbf{Case 1}, where $w^* = w(i^*,j^*) = 1$ and $w_\ell = w(i^*,j_\ell) = \ep > 0$ for all $1 < \ell \le d$.
In this instance, $e^*=(i^*,j^*)$ contributes $\gam = 1$ to $\OPT$, while contributing $\beta = \kap(d,\ta) + O(\ep)$ in expectation to $\E[w(\bam)]$.
Figure~\ref{fig:2} illustrates the worst-case scenario for \textbf{Case 2a}, where $w^* = w(i^*,j^*) = 1+\ep$, $\baw = w(\bai,j^*) = 1$, $\tw = w(i^*,j_2) = 1$, and $w_\ell = w(i^*,j_\ell) = \ep > 0$ for all $2 < \ell \le d$.
Here the edges incident to $\bai$ and $i^*$ contribute $\gam = 2$ to $\OPT$, while contributing $\beta = 2\kap(d,\ta) - \ta(1-\ta) + O(\ep)$ in expectation to $\E[w(\bam)]$.
}
\label{fig:vir}
\end{figure}


\section{Examples Showing the Failure of Negative Dependence in $\smg$}
\label{app:neg-dep}

\begin{figure}[ht!]
\centering
\begin{minipage}[b]{0.45\linewidth}
\centering
\begin{tikzpicture}[ultra thick]
    \node[circle, draw] (j1) at (1,0) {};
    \node[above] at (1,0.2) {$j_1$};
    \node[circle, draw] (j2) at (2,0) {};
    \node[above] at (2,0.2) {$j_2$};

    \draw[dotted, ultra thick] (2.8,0) -- (3.2,0);

    \node[circle, draw] (jd1) at (5,0) {};
    \node[above] at (5,0.2) {$j_{d-1}$};
    \node[circle, draw] (jd) at (7,0) {};
    \node[above] at (7,0.2) {$j_d$};

    \node[circle, draw] (ti) at (3,-2) {};
    \node[below] at (3,-2.2) {$\ti$};

    \node[circle, draw] (bi) at (7,-2) {};
    \node[below] at (7,-2.2) {$\bi$};

    \draw (j1) -- (ti);
    \draw (j2) -- (ti);
    \draw (jd1) -- (ti);
    \draw (jd) -- (ti);

    \draw (jd) -- (bi);

    \node[blue, below] at (2,-1) {$w_1$};
    \node[blue, below] at (3,-1) {$w_2$};
    \node[blue, above] at (3.8,-1) {$w_{d-1}$};
    \node[blue, below] at (5,-1.1) {$w_d$};
    \node[blue, left]  at (7,-1.2) {$w_{d+1}$};

    \node[blue, below] at (5,-2.7) {$w_1 > w_2 > \cdots > w_{d+1}$};
\end{tikzpicture}

\vspace{1mm}
{\small (a) Correlated matching events}
\end{minipage}
\hfill
\begin{minipage}[b]{0.45\linewidth}
\centering
\begin{tikzpicture}[ultra thick]
    \node[circle, draw] (i1) at (1,0) {};
    \node[above] at (1,0.2) {$i_1$};
    \node[circle, draw] (j1) at (4,0) {};
    \node[above] at (4,0.2) {$j_1$};
    \draw (i1) -- node[blue, above, midway] {$w_1$} (j1);
    \draw (i1) -- node[blue, above, near end] {$w_2$} (3.9,-1.4);
    \draw (i1) -- node[blue, above, near end] {$w_3$} (3.9,-2.9);

    \node[circle, draw] (i2) at (1,-1.5) {};
    \node[above] at (1,-1.3) {$i_2$};
    \node[circle, draw] (j2) at (4,-1.5) {};
    \node[above] at (4,-1.3) {$j_2$};
    \draw (i2) -- node[blue, above, near start] {$w_4$} (j2);

    \node[circle, draw] (i3) at (1,-3) {};
    \node[above] at (1,-2.8) {$i_3$};
    \node[circle, draw] (j3) at (4,-3) {};
    \node[above] at (4,-2.8) {$j_3$};
    \draw (i3) -- node[blue, above, near start] {$w_5$} (j3);

    \node[blue, below] at (2.5,-3.7) {$w_1 > w_2 > w_3 > \max\{w_4,w_5\}$};
\end{tikzpicture}

\vspace{1mm}
{\small (b) Positive correlation among matches}
\end{minipage}

\caption{
Examples illustrating the challenges in bounding the variance of the total number of matched offline agents in $\widehat{\cM}$ produced by $\vir$.
\textbf{(a)} Two offline agents $\ti$ and $\bi$ share overlapping online neighborhoods with strictly ordered edge weights.
For suitable choices of $(d,\ta)$, the variance $\Var[X_{\bi}]$ can attain its maximum value $1/4$, and the joint variance $\Var[X_{\ti}+X_{\bi}]$ can strictly exceed the variance obtained when $X_{\ti}$ and $X_{\bi}$ are treated as independent Bernoulli random variables with mean $\bta$; see Lemma~\ref{lem:var-1}.
\textbf{(b)} Three offline agents $i_1,i_2,i_3$ exhibit positive correlation:
letting $X_\ell$ indicate whether $i_\ell$ is matched for $\ell \in \{1,2,3\}$, we have
$
\E[X_2 X_3] - \E[X_2]\E[X_3]
= \ta\,\bta^4 > 0$ for all $\ta \in (0,1)$, demonstrating that matching events need not be negatively correlated; see Lemma~\ref{lem:var-2}.
}
\label{fig:var}
\end{figure}

\begin{lemma}[Appendix~\ref{app:var-1}]
\label{lem:var-1}
Consider the graph shown in Figure~\ref{fig:var}(a).
Let $X_{\bi}=1$ and $X_{\ti}=1$ indicate that $\bi$ and $\ti$ are matched in $\widehat{\cM}$, respectively.
Then the following statements hold.
\begin{enumerate}
\item $\Var[X_{\bi}] = 1/4$ when
\[
d = \log_{\bta}\!\Bigl(1-\frac{1}{2\bta}\Bigr) + 1,
\qquad \text{with } \bta = 1-\ta > \tfrac12 .
\]
\item $\Var[X_{\bi}+X_{\ti}] > 2\,\bta(1-\bta)$ when
\[
d > \log_{\bta}(4\bta-3) - 1,
\qquad \text{with } \bta = 1-\ta > \tfrac34 .
\]
\end{enumerate}
\end{lemma}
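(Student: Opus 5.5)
The plan is to compute the two indicators exactly in closed form, using the same asymptotic-independence convention as in the proof of Lemma~\ref{lem:ineq-tig}. There, for fixed $d$ and $n\to\infty$, the events $\{j\in\cS_K\}$ for $j\in\{j_1,\dots,j_d\}$ behave as independent Bernoulli$(\ta)$ events up to $O(d/n)$. Write $p:=\bta$.

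\emph{Step 1: characterize when each indicator equals one.} $\vir$ processes the edges in the order $(\ti,j_1),\dots,(\ti,j_d),(\bi,j_d)$.

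For $\ti$, the edges are processed until the first sampled neighbor appears; that edge enters $\cM^a$, and every later edge at $\ti$ fails the matching test. Hence $\cM^b$ contains an edge at $\ti$ exactly when $j_1\notin\cS_K$, so $X_{\ti}=\mathbf 1\{j_1\notin\cS_K\}$ and $\E[X_{\ti}]=p$.

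For $\bi$, the edge $(\bi,j_d)$ reaches Step~\eqref{alg:v2} only if $j_d$ is still unprocessed. This happens exactly when $(\ti,j_d)$ was blocked, i.e., when some $j_\ell$ with $\ell\le d-1$ lies in $\cS_K$. In that case the edge goes to $\cM^b$ iff $j_d\notin\cS_K$. Since $\bi$ then has a single $\cM^b$ edge, we get
\[
X_{\bi}=\mathbf 1\{\exists\,\ell\le d-1:\ j_\ell\in\cS_K\}\,\mathbf 1\{j_d\notin\cS_K\},
\qquad
\E[X_{\bi}]=p\,(1-p^{d-1}).
\]

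\emph{Step 2: part (1).} Since $\Var[X_{\bi}]=\mu(1-\mu)$ with $\mu=\E[X_{\bi}]$, it equals $1/4$ iff $\mu=1/2$. This is equivalent to $p^{d-1}=1-\frac{1}{2p}$, which is solvable with a positive right-hand side exactly when $p>1/2$. Taking $\log_p$ gives the stated value of $d$.

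\emph{Step 3: part (2).} First compute the joint moment. The product $X_{\ti}X_{\bi}$ requires $j_1\notin\cS_K$, some $j_\ell$ with $2\le\ell\le d-1$ in $\cS_K$, and $j_d\notin\cS_K$. Hence
\[
\E[X_{\ti}X_{\bi}]=p^2(1-p^{d-2}),
\]
so the covariance equals $p^{d+1}-p^d=-p^d(1-p)$. It is negative, so any excess over the independent benchmark must come from $\Var[X_{\bi}]$ itself.

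Next expand $\Var[X_{\ti}]+\Var[X_{\bi}]+2\,\mathrm{Cov}$ and subtract $2p(1-p)$. With $x:=p^{d-1}$, the difference simplifies to
\[
p\,x\,(4p-3-p\,x).
\]
This is positive iff $p^{d}<4p-3$. That requires $p>3/4$, and then (since $\log_p$ is decreasing) it becomes a lower bound on $d$ of the form $d>\log_p(4p-3)+c$.

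\emph{Main obstacle.} The main risk is the bookkeeping of the additive constant $c$ in this threshold. It depends on how the neighbors are indexed in Figure~\ref{fig:var}(a): whether $j_d$ is counted among the $d$ neighbors of $\ti$, and whether the shared neighbor is the $d$-th or the $(d+1)$-st. I will fix the indexing to match the figure and recheck Step~1 carefully, so that the exponent lands as $\log_{\bta}(4\bta-3)-1$ as stated.

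Finally, I will note that replacing the product formulas with the exact without-replacement probabilities $(K-|S|)/(n-|S|)$ and $K/(n-|S|)$ changes every quantity by $O(d/n)$. The strict inequalities therefore survive for $n$ large.
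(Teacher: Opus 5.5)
\textbf{Verdict: part~(2) cannot be proved as stated; your computation gives the correct threshold, $\log_{\bar\theta}(4\bar\theta-3)$, not the stated $\log_{\bar\theta}(4\bar\theta-3)-1$.}

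Your Steps 1--3 follow the same route as the paper: the same case split on which of $j_1,\dots,j_{d-1}$ are sampled, under the asymptotic-independence convention. Your formulas are all correct:
\[
X_{\ti}=\mathbf{1}\{j_1\notin\cS_K\},\qquad \E[X_{\bi}]=\bta(1-\bta^{d-1}),\qquad \E[X_{\ti}X_{\bi}]=\bta^{2}(1-\bta^{d-2}),
\]
and therefore
\[
\Var[X_{\ti}+X_{\bi}]-2\bta(1-\bta)=\bta^{d}\bigl(4\bta-3-\bta^{d}\bigr).
\]
The gap is the step you defer to the end. No indexing of Figure~\ref{fig:var}(a) makes part~(1) hold and also makes the part~(2) threshold come out as $\log_{\bta}(4\bta-3)-1$.

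\begin{itemize}
\item Your indexing already matches the figure: $\ti$ is adjacent to $j_1,\dots,j_d$ and the shared neighbour is $j_d$. This indexing is exactly what makes part~(1) true.
\item With it, the strict inequality holds iff $\bta^{d}<4\bta-3$, i.e.\ iff $d>\log_{\bta}(4\bta-3)$.
\item So part~(2) as stated fails for the unique integer $d$ with $\log_{\bta}(4\bta-3)-1<d\le\log_{\bta}(4\bta-3)$. For example, take $\bta=0.9$ and $d=4$. Then $\log_{0.9}0.6\approx4.85$, so the stated hypothesis $d>3.85$ holds. But $\bta^{4}(0.6-\bta^{4})\approx-0.037$, i.e.\ $\Var[X_{\ti}+X_{\bi}]\approx0.143<0.18=2\bta(1-\bta)$. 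Since this limit is strictly negative, the without-replacement correction does not rescue it for large $n$.
\item Re-indexing so that the shared neighbour is $\ti$'s $(d+1)$-st neighbour would give the stated threshold. But it would change part~(1) to $d=\log_{\bta}(1-1/(2\bta))$, and it would give $\ti$ degree $d+1$.
\end{itemize}

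The paper's proof reaches the stated constant through an off-by-one slip. It uses the same three cases you do, but records
\[
\Pr[Y=2]=\bta^{2}(1-\bta^{d-1}),\qquad \Pr[Y=1]=\bta(1-\bta)(2-\bta^{d-1})+\bta^{d}.
\]
Those cases actually give
\[
\Pr[Y=2]=\bta^{2}(1-\bta^{d-2}),\qquad \Pr[Y=1]=\bta(1-\bta)(2-\bta^{d-2})+\bta^{d-1}.
\]
Equivalently, the paper's distribution implies $\E[X_{\bi}]=\bta(1-\bta^{d})$, which contradicts its own part~(1). That shift of $d$ by one is what produces $\bta^{d+1}(4\bta-3-\bta^{d+1})$ and the ``$-1$''.

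So do not try to force the stated constant. State and prove the corrected threshold $d>\log_{\bta}(4\bta-3)$, with $\bta\in(3/4,1)$. It still carries the intended message: for every such $\bta$, all sufficiently large $d$ give a joint variance strictly above the independent benchmark $2\bta(1-\bta)$.
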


\begin{lemma}[Appendix~\ref{app:var-2}]
\label{lem:var-2}
Consider the graph shown in Figure~\ref{fig:var}(b).
For $\ell \in \{1,2,3\}$, let $X_\ell=1$ indicate that the offline agent $i_\ell$ is matched by $\vir$.
Then, for any $\ta \in (0,1)$,
\[
\E[X_2 X_3] - \E[X_2]\E[X_3] = \ta\,\bta^4 > 0,
\]
where $\bta=1-\ta$.
\end{lemma}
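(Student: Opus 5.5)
The plan is to compute the three quantities $\E[X_2]$, $\E[X_3]$ and $\E[X_2X_3]$ in closed form by tracing the virtual algorithm $\vir$ (Algorithm~\ref{alg:vir}) on the graph of Figure~\ref{fig:var}(b). By Lemma~\ref{lem:eqv}, working with $\vir$ suffices. As in the proof of Lemma~\ref{lem:ineq-tig}, I treat the sampling indicators as independent $\mathrm{Bernoulli}(\ta)$ variables, dropping the $O(1/n)$ corrections from sampling without replacement. Write $s_k:=\mathbf{1}\{j_k\in\cS_K\}$ for $k=1,2,3$.

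\textbf{Step 1 (trace $\vir$).} By the weight ordering, $\vir$ considers $(i_1,j_1)$, $(i_1,j_2)$, $(i_1,j_3)$ first, and only then the edges $(i_2,j_2)$ and $(i_3,j_3)$.
\begin{itemize}
\item The edge $(i_1,j_1)$ is always processed. It enters $\am$ exactly when $s_1=1$, and then $i_1$ is saturated in $\am$.
\item If $s_1=1$, the edges $(i_1,j_2)$ and $(i_1,j_3)$ fail the matching test, so $j_2$ and $j_3$ stay unprocessed.
\item If $s_1=0$, then $(i_1,j_2)$ is processed and $j_2$ is consumed. Whether $j_3$ is later consumed by $i_1$ depends on $s_2$: it stays unprocessed iff $s_2=1$, because then $i_1$ is saturated.
\end{itemize}
Offline agent $i_2$ (resp.\ $i_3$) has a single edge. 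So it is matched in $\widehat{\cM}$ iff its neighbor $j_2$ (resp.\ $j_3$) is still unprocessed when that edge is reached, and that neighbor is not sampled. This gives the explicit representations
\[
X_2=s_1(1-s_2),\qquad X_3=(1-s_3)\bigl(s_1+(1-s_1)s_2\bigr).
\]

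\textbf{Step 2 (expectations).} Taking expectations gives
\[
\E[X_2]=\ta\bta,\qquad \E[X_3]=\bta\bigl(1-\bta^2\bigr).
\]
For the product, on the event $X_2=1$ we have $s_1=1$, so
\[
X_2X_3=s_1(1-s_2)(1-s_3),\qquad \E[X_2X_3]=\ta\bta^2.
\]
Hence
\[
\E[X_2X_3]-\E[X_2]\E[X_3]=\ta\bta^2-\ta\bta^2(1-\bta^2)=\ta\bta^4,
\]
which is strictly positive for $\ta\in(0,1)$.

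\textbf{Main obstacle.} The only delicate point is Step~1: correctly applying the ``processed'' and matching-test conventions of $\vir$ when $i_1$ becomes saturated in $\am$. This is precisely what couples $X_2$ and $X_3$ positively through the common factor $s_1$.

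A secondary issue is the independence idealization. If exact finite-$n$ sampling without replacement is required, I would replace each product of marginals by falling-factorial ratios such as $K(n-K)(n-K-1)/\bigl(n(n-1)(n-2)\bigr)$. The covariance then equals $\ta\bta^4+O(1/n)$, and I would state the lemma in the fixed-$d$, $n\to\infty$ regime used throughout the paper.
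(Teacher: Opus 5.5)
Your proposal is correct and matches the paper's proof. The paper uses the complementary indicators $\chi_\ell = 1 - s_\ell$ (for $j_\ell \notin \mathcal{S}_K$), treated as independent Bernoulli$(1-\theta)$ variables, and derives the same representations $X_2 = (1-\chi_1)\chi_2$ and $X_3 = (1-\chi_1\chi_2)\chi_3$; from these it computes the same three expectations. Your explicit trace of $\textsf{VIR}$'s matching test and your remark about the $O(1/n)$ correction from sampling without replacement are reasonable additions that the paper leaves implicit.
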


Taken together, Lemmas~\ref{lem:var-1} and~\ref{lem:var-2} illustrate two distinct obstacles in variance analysis, as visualized in Figure~\ref{fig:var}.
Lemma~\ref{lem:var-1} shows that even the variance of a \emph{single} matching indicator can reach its maximum value under suitable structural conditions, and that the joint variance of two such indicators may exceed what would be expected under independence.
Lemma~\ref{lem:var-2} further demonstrates that matching events for different offline agents can be \emph{positively correlated}, ruling out standard concentration arguments based on negative dependence.
These examples highlight that variance control for $\smg$ cannot rely on naive independence assumptions and must explicitly account for structural correlations induced by the matching process.

\subsection{Proof of Lemma~\ref{lem:var-1}} \label{app:var-1}
We split the proof into two parts, corresponding to the first and second claims.
\begin{proof}[Proof of the first claim]
Let $e_j=(\ti,j)$ for $1 \le j \le d$, and let $e_{d+1}=(\bi,d)$.
Under $\vir$, the edges $\{e_j\}_{j=1}^d$ are processed sequentially in decreasing order of weight.
We focus on the event that $e_{d+1}$ is processed.

\smallskip
\noindent\textbf{Case 1.}
$j \notin \cS_K$ for all $1 \le j \le d-1$.
Then all edges $e_j$ with $1 \le j \le d-1$ are added to $\cM^b$, and $e_d$ is processed next.
Consequently, $e_{d+1}$ is never considered and $X_{\bi}=0$.

\smallskip
\noindent\textbf{Case 2.}
There exists some $1 \le j \le d-1$ such that $j \in \cS_K$.
In this case, $e_{d+1}$ is eventually processed, and $\bi$ is matched with probability $\bta=1-\ta$.
Thus $X_{\bi}\sim \mathrm{Ber}(\bta)$.

\smallskip
Combining the two cases, we obtain
\[
\E[X_{\bi}] = (1-\bta^{\,d-1})\,\bta.~\footnote{As before, we assume $d=o(n)$ and may pad the instance with $n-d$ dummy online nodes so that the equality holds exactly as $n\to\infty$.}
\]
To maximize the variance of $X_{\bi}$, we set $\E[X_{\bi}]=1/2$, which yields
\[
(1-\bta^{\,d-1})\,\bta = \tfrac12 .
\]
Solving this equation gives
$d = \log_{\bta}(1-1/(2\bta)) + 1$, and hence $\Var[X_{\bi}]=1/4$.
\end{proof}


\begin{proof}[Proof of the second claim]
Let $e_j=(\ti,j)$ for $1 \le j \le d$, and let $e_{d+1}=(\bi,d)$.
We distinguish the following cases.

\smallskip
\noindent\textbf{Case 1.}
$j=1 \in \cS_K$.
Then $e_1$ is added to $\cM^a$, $e_{d+1}$ is processed, $X_{\ti}=0$, and
$X_{\bi}\sim \mathrm{Ber}(\bta)$.

\smallskip
\noindent\textbf{Case 2.}
$j=1 \notin \cS_K$, and there exists some $2 \le j \le d-1$ with $j \in \cS_K$.
Then $X_{\ti}=1$ and $X_{\bi}\sim \mathrm{Ber}(\bta)$.

\smallskip
\noindent\textbf{Case 3.}
$j \notin \cS_K$ for all $1 \le j \le d-1$.
Then $X_{\ti}=1$ and $X_{\bi}=0$.

\smallskip
Let $Y := X_{\ti}+X_{\bi}$.
From the above cases, $Y$ takes values $0,1,2$ with respective probabilities
\[
(1-\bta)^2,\qquad
\bta(1-\bta)(2-\bta^{\,d-1})+\bta^{\,d},\qquad
\bta^2(1-\bta^{\,d-1}).
\]
A direct computation yields
\[
\Var[Y] - 2\,\bta(1-\bta)
= \bta^{\,d+1}\bigl(4\bta-3-\bta^{\,d+1}\bigr).
\]
Solving $\Var[Y] > 2\,\bta(1-\bta)$ gives
\[
d > \log_{\bta}(4\bta-3) - 1.
\]
For any $\bta \in (3/4,1)$, such a value of $d \ge 1$ exists, completing the proof.
\end{proof}


\subsection{Proof of Lemma~\ref{lem:var-2}} \label{app:var-2}
\begin{proof}
For $\ell \in \{1,2,3\}$, let $\chi_\ell=1$ indicate that the online agent $j_\ell$ does not belong to the sample set $\cS_K$.
As in the previous analysis, we treat $\{\chi_\ell\}_{\ell=1}^3$ as independent Bernoulli random variables with $\E[\chi_\ell]=\bta=1-\ta$.

The matching events for the three offline agents can be expressed as follows.
We have $X_1=\chi_1$, since $i_1$ is matched if and only if $j_1 \notin \cS_K$.
Moreover,
\[
X_2 = (1-\chi_1)\,\chi_2,
\]
because $i_2$ is matched only if $j_1 \in \cS_K$ and $j_2 \notin \cS_K$.
Similarly,
\[
X_3 = (1-\chi_1\chi_2)\,\chi_3,
\]
since $i_3$ is matched unless both $j_1$ and $j_2$ are excluded from $\cS_K$, and provided that $j_3 \notin \cS_K$.

It follows that
\[
\E[X_2 X_3]
= \Pr[\chi_1=0,\chi_2=1,\chi_3=1]
= \ta\,\bta^2,
\]
while
\[
\E[X_2]\E[X_3]
= (\ta\,\bta)\,\bigl(1-\bta^2\bigr)\,\bta.
\]
Therefore,
\[
\E[X_2 X_3] - \E[X_2]\E[X_3]
= \ta\,\bta^4,
\]
which is strictly positive for all $\ta \in (0,1)$.
\end{proof}

\section{Proof of Lemma~\ref{lem:var-5}}
\label{app:lem-var5}

For each offline agent $i \in I$, define $X_i := Y_i \cdot Z_i$, where $Y_i=1$ indicates that there exists at least one edge incident to $i$ that passes the condition in Step~\eqref{alg:v2} of $\vir$, and $Z_i \sim \mathrm{Ber}(\bta)$ indicates that $j_i \notin \cS_K$, with $j_i$ denoting the first online neighbor of $i$ processed in Step~\eqref{alg:v5}.\footnote{For notational convenience, we assume that $Y_i$ and $Z_i$ are independent. In principle, these two random variables may be weakly correlated—for instance, when $Y_i=1$ implies the existence of a constant-size subset $S \subseteq J$ such that $j \in \cS_K$ for all $j \in S$. However, such correlations vanish asymptotically as $n \to \infty$, and can be safely ignored for our purposes.}

We call an offline agent $i$ \emph{free} if $\E[Y_i]=1$, and \emph{non-free} otherwise.
Equivalently, a free offline agent is guaranteed to have at least one incident edge processed in Step~\eqref{alg:v5}, and therefore satisfies $X_i \sim \mathrm{Ber}(\bta)$ with $\bta=1-\ta$.

\smallskip

For two distinct offline agents $\ti \neq \bi$, we say that $\ti$ \emph{negatively impacts} $\bi$ if the event $Z_{\ti}=1$ strictly decreases the value of $\E[Y_{\bi}]$, and \emph{positively impacts} $\bi$ if it strictly increases $\E[Y_{\bi}]$.
Intuitively, a \emph{negative} impact means that the outcome $j_{\ti} \notin \cS_K$—where $j_{\ti}$ is the first online neighbor of $\ti$ processed in Step~\eqref{alg:v5}—can cause $\bi$ to lose one or more incident edges from being considered by $\vir$, whereas a \emph{positive} impact has the opposite effect.
Note that any offline agent that can be either negatively or positively impacted by another must be non-free by definition.

\begin{lemma}
\label{lem:var-4}
Any free or non-free offline agent can only negatively impact another non-free agent.
\end{lemma}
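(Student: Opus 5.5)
The proof splits into two claims. First, any agent that can be impacted at all is non-free. Second, every impact that actually occurs is negative. The first claim is immediate from the definitions. If $\bi$ is free, then $\E[Y_{\bi}]=1$, so $Y_{\bi}=1$ almost surely. Conditioning on $Z_{\ti}=1$ therefore cannot change $\E[Y_{\bi}]$ in either direction, and $\bi$ can be neither negatively nor positively impacted. It remains to show that whenever $\E[Y_{\bi}]$ is sensitive to $Z_{\ti}$, conditioning on $Z_{\ti}=1$ can only lower it.

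For the second claim I will use a coupling on the sample. Fix $\ti$ and condition on the run of $\vir$ up to the moment it processes $j_{\ti}$, the first online neighbor of $\ti$ processed in Step~\eqref{alg:v5}. Everything before this moment, including which edges were considered and which online agents are already processed, is the same whether $Z_{\ti}=1$ or $Z_{\ti}=0$. Now compare the two continuations.

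If $Z_{\ti}=0$, i.e.\ $j_{\ti}\in\cS_K$, the edge $(\ti,j_{\ti})$ goes into $\cM^a$. This blocks every later edge at $\ti$ and marks $j_{\ti}$ as processed.

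If $Z_{\ti}=1$, the edge goes into $\cM^b$. Again $j_{\ti}$ is marked processed, but $\ti$ remains open in $\cM^a$.

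The two branches therefore differ only in whether $\ti$ stays available for later $\cM^a$-insertions. In the branch $Z_{\ti}=1$, later edges at $\ti$ remain eligible. Each such edge whose online endpoint is sampled enters $\cM^a$, and marks that endpoint processed. That endpoint may also be a neighbor of $\bi$, in which case the edge to $\bi$ is removed from consideration. The mechanism is the star-like processing seen in Lemma~\ref{lem:var-1}, where edges at $\ti$ keep going to $\cM^b$ and keep marking shared online neighbors processed.

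I would formalize this by induction over the sorted edge list. Writing $P^{(1)}$ and $P^{(0)}$ for the sets of processed online agents under the two branches at each step, I expect to show that the set of edges at $\bi$ still eligible under $Z_{\ti}=1$ is contained in the set under $Z_{\ti}=0$. Granting this containment, $Y_{\bi}$ in the branch $Z_{\ti}=1$ is pointwise at most $Y_{\bi}$ in the branch $Z_{\ti}=0$ under the coupling, which gives $\E[Y_{\bi}\mid Z_{\ti}=1]\le\E[Y_{\bi}\mid Z_{\ti}=0]$. Hence the impact, when strict, is negative.

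The main obstacle is that the containment is not obviously monotone, because of a cascade. In the branch $Z_{\ti}=0$, the $\cM^a$-edge at $\ti$ blocks later edges at $\ti$, so their online endpoints stay unprocessed. Those endpoints can later be absorbed by other offline agents $i''$, and the resulting blockage could in principle propagate to $\bi$. This is exactly the kind of chain behind the positive covariance in Lemma~\ref{lem:var-2}. To handle it, I would first try to strengthen the induction hypothesis to state that the $\cM^a$-matched offline set under $Z_{\ti}=0$ differs from the one under $Z_{\ti}=1$ by at most a single alternating-path shift, in the style of Greedy sensitivity arguments, and then show that this shift never removes an eligible edge at $\bi$. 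If that fails, the fallback is to read Lemma~\ref{lem:var-4} more weakly: assert only the first claim (impacts land only on non-free agents), and treat the sign of the impact as arising through $Y$, absorbing any positive effects into the aggregate covariance bound of Lemma~\ref{lem:var-5}.
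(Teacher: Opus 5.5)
Your coupling on the sample status of $j_{\ti}$ is a sound starting point, and it differs from the paper's route, which does a local case analysis on the second-largest edge at a free $\ti$ and then argues separately that no positive influence reaches third agents. But the proposal stops at the step that carries the lemma's content. You only \emph{expect} the containment, and you flag the cascade as unresolved. Your two exits are an unexecuted alternating-path argument and a fallback that proves only the trivial half (that impacted agents are non-free). The sign claim is what Lemma~\ref{lem:var-5} relies on, so the fallback does not prove the statement.

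Your target also needs care. If an ``eligible edge at $\bi$'' requires $\bi$ to still be unmatched in $\cM^a$, the containment is false. Take edges $(\ti,j_1),(\ti,j_2),(\bi,j_2),(\bi,j_3)$ with weights $10,9,8,7$ and $j_2\in\cS_K$:
\begin{itemize}
\item If $j_1\notin\cS_K$, then $\ti$ absorbs $j_2$, so $(\bi,j_3)$ is eligible.
\item If $j_1\in\cS_K$, then $\bi$ is closed by $j_2$, so $(\bi,j_3)$ is not eligible.
\end{itemize}

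The missing idea is a two-part monotone invariant. In branch $Z_{\ti}=z$, let $O^{(z)}$ be the offline agents still unmatched in $\cM^a$ and $P^{(z)}$ the processed online agents, both measured before each edge of the sorted list. Right after $(\ti,j_{\ti})$ is processed, $P^{(1)}=P^{(0)}$ and $O^{(1)}=O^{(0)}\cup\{\ti\}$. The claim is that $O^{(1)}\supseteq O^{(0)}$ and $P^{(1)}\supseteq P^{(0)}$ persist. An edge $(i,j)$ is processed exactly when $i\in O$ and $j\notin P$; check the four cases:
\begin{itemize}
\item \textbf{Branch $0$ only.} Then $i\in O^{(0)}\subseteq O^{(1)}$ forces $j\in P^{(1)}$. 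So adding $j$ to $P^{(0)}$, and possibly deleting $i$ from $O^{(0)}$, is harmless.
\item \textbf{Branch $1$ only.} Then $j\notin P^{(1)}\supseteq P^{(0)}$ forces $i\notin O^{(0)}$. So adding $j$ to $P^{(1)}$, and possibly deleting $i$ from $O^{(1)}$, is harmless.
\item \textbf{Both branches or neither.} Both inclusions are trivially preserved.
\end{itemize}
This removes your cascade. An agent that absorbs in branch~$0$ an endpoint that $\ti$ absorbed in branch~$1$ ends up \emph{more} available in branch~$1$, never less.

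To conclude, note that $\bi$ is unmatched in $\cM^a$ in both branches until its first processed edge. Suppose $(\bi,j)$ is processed in branch~$1$. Then $j\notin P^{(0)}$ at that step, so branch~$0$ has already processed, or now processes, an edge at $\bi$. Hence $Y_{\bi}^{(1)}\le Y_{\bi}^{(0)}$ pointwise. Nothing here uses freeness of $\ti$, so the argument also covers non-free impactors directly. One caveat: flipping the status of $j_{\ti}$ while freezing all other statuses changes $|\cS_K|$. You should say explicitly that you work in the paper's independent-Bernoulli idealization, which holds up to $O(1/n)$ corrections.
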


Lemma~\ref{lem:var-4} implies that the matching events of all free offline agents can be treated as independent Bernoulli random variables with mean $\bta$.

\begin{figure}[ht!]
\centering
\begin{tikzpicture}
    \draw (0,0) node[minimum size=0.2mm,draw,circle, thick] {\redd{$\ti$}};
    \draw (0,-1.5) node[minimum size=0.2mm,draw,circle, thick] {\bluee{$\hi$}};
    \draw (0,-3) node[minimum size=0.2mm,draw,circle, thick] {\bluee{$\bi$}};

    \draw (4,0) node[minimum size=0.2mm,draw,circle, thick] {$j_1$};
    \draw (4,-1.5) node[minimum size=0.2mm,draw,circle, thick] {$j_2$};
    \draw (4,-4) node[minimum size=0.2mm,draw,circle, thick] {$j_3$};

    \draw[-, ultra thick] (0.4,0)--(3.6,0)
        node [blue, above, near end, sloped] {$e_1,\; w_1$};
    \draw[-, ultra thick] (0.4,0)--(3.6,-1.5)
        node[blue, above, near end, sloped] {$e_2,\; w_2$};

    \draw[-, ultra thick] (0.4,-1.5)--(3.6,-1.5)
        node [blue, above, midway, sloped] {$e_5,\; w_5$};

    \draw[-, ultra thick] (0.4,-3)--(3.6,-1.5)
        node [blue, above, midway, sloped] {$e_3,\; w_3$};
    \draw[-, ultra thick] (0.4,-3)--(3.6,-4)
        node [blue, above, midway, sloped] {$e_4,\; w_4$};

    \draw[-, very thick] (0.4,0)--(2,0.5);
    \draw[-, very thick] (0.4,0)--(2,0.7);
    \draw[-, very thick] (0.4,0)--(2,0.9);
\end{tikzpicture}
\caption{
An illustrative example for the proof of Lemma~\ref{lem:var-4}.
A free offline agent $\ti$ negatively impacts a non-free offline agent $\hi$ under the weight ordering
$w_1 > w_2 > w_5$ and $w_1 > w_2 > w_3 > w_4$.
In this configuration, the offline agent $\bi$ becomes non-free only when either the edge $e_4$ does not exist or $e_4$ itself is negatively impacted by another agent.
In either case, $\bi$ cannot exert any influence on any other offline agent.
}
\label{fig:var-3}
\end{figure}

\begin{proof}[\tbf{Proof of Lemma~\ref{lem:var-4}}]
By definition, a free offline agent is one that is guaranteed to reach a stage in $\vir$ at which it has the largest incident edge (by weight), and hence this edge will be processed before any competing edges.
Consider a representative free offline agent $\ti$ as illustrated in Figure~\ref{fig:var-3}, and let $e_1=(\ti,j_1)$ denote the largest-weight edge incident to $\ti$.
Assume that $e_2$ is not the largest edge overall; otherwise, $\ti$ would have no influence on any other offline agent.

\xhdr{Impact of $\ti$ on an offline agent $\bi$ that shares at least one online neighbor with $\ti$.}
We distinguish the following cases.

\tbf{Case 1.}
The second-largest edge is $e=(\hi,j_2)$ with $\hi \neq \ti$ and $\hi \neq \bi$.
In this case, processing $e$ removes all other edges incident to $j_2$, and therefore the event $Z_{\ti}$ has no effect on $Y_{\bi}$.

\tbf{Case 2.}
The second-largest edge is $e=(\bi,j)$ for some $j \in \{j_2,j_3,\ldots\}$.
As in Case~1, processing $e$ removes all other edges incident to $j$, and hence $Z_{\ti}$ again has no influence on $Y_{\bi}$.

\tbf{Case 3.}
The second-largest edge is incident to $\ti$.
Without loss of generality, assume that $e_2=(\ti,j_2)$ is the second-largest edge (otherwise, the above analysis applies recursively).
If $Z_{\ti}=1$, then $e_2$ is processed next, which removes all other edges incident to $j_2$, including $e_3=(\bi,j_2)$.
If $Z_{\ti}=0$, then all edges incident to $\ti$ are removed, leaving $e_3$ intact.

Thus, the event $Z_{\ti}=0$ preserves one additional edge for $\bi$ relative to $Z_{\ti}=1$, and can only weakly increase the value of $Y_{\bi}$.
In particular, $Z_{\ti}=1$ cannot increase $Y_{\bi}$.

\xhdr{Impact of $\ti$ on an offline agent that does not share any online neighbor with $\ti$.}
From the discussion above, Case~3 is the only scenario in which $Z_{\ti}$ can affect $Y_{\bi}$.
We now examine whether such an effect can propagate to a third offline agent $i \notin \{\ti,\bi\}$ through $\bi$.

If $Z_{\ti}$ affects $Y_{\bi}$ only when $e_3$ is the sole remaining edge incident to $\bi$, then both outcomes $Z_{\ti}=1$ (which removes $e_3$) and $Z_{\ti}=0$ (which leaves $e_3$ as the only remaining edge) ultimately result in all edges incident to $j_2$ being removed.
Hence, no influence can propagate further to any other offline agent.

If $Z_{\ti}$ does not change the value of $Y_{\bi}$, then $Z_{\ti}=1$ can only weakly reduce the value of $Y_i$ for $i \neq \ti,\bi$.
Let $e^*_{\bi}$ denote the first edge incident to $\bi$ that passes Step~\eqref{alg:v2} (or $\emptyset$ if no such edge exists).
Any change in $e^*_{\bi}$ caused by $Z_{\ti}$ can only correspond to switching from one non-empty edge to another (for example, from $e_3$ to $e_4$), which again can only reduce the set of edges available to other offline agents.

Therefore, $Z_{\ti}=1$ can only decrease, and never increase, the value of $Y_i$ for any $i \notin \{\ti,\bi\}$.
This establishes that any free or non-free offline agent can only negatively impact another non-free agent, completing the proof.
\end{proof}

\begin{figure}[ht!]
\centering
\begin{tikzpicture}
    \draw (0,0) node[minimum size=0.2mm,draw,circle, thick] {\redd{$i_0$}};
    \draw (0,-1.5) node[minimum size=0.2mm,draw,circle, thick] {\bluee{$i_1$}};
    \draw (0,-3) node[minimum size=0.2mm,draw,circle, thick] {\bluee{$i_2$}};
    \draw (0,-5) node[minimum size=0.2mm,draw,circle, thick] {\bluee{$i_k$}};

    \draw (4,0) node[minimum size=0.2mm,draw,circle, thick] {$j_0$};
    \draw (4,-1.5) node[minimum size=0.2mm,draw,circle, thick] {$j_1$};
    \draw (4,-3) node[minimum size=0.2mm,draw,circle, thick] {$j_2$};
    \draw (4,-5) node[minimum size=0.2mm,draw,circle, thick] {$j_k$};

    \draw[-, ultra thick] (0.4,0)--(3.6,0) node [blue, above, midway, sloped] {$w_0$};
    \draw[-, ultra thick] (0.4,0)--(3.6,-1.5) node [blue, above, midway, sloped] {$w_1$};
    \draw[-, ultra thick] (0.4,0)--(3.6,-3) node [blue, above, near end, sloped] {$w_2$};
    \draw[-, ultra thick] (0.4,0)--(3.6,-5) node [blue, above, near end, sloped] {$w_k$};

    \draw(1,-1.5) node [blue, above] {$v_1$};
    \draw[-, ultra thick] (0.4,-1.5)--(3.6,-1.5);
    \draw[-, ultra thick] (0.4,-3)--(3.6,-3) node [blue, above, near start, sloped] {$v_2$};
    \draw[-, ultra thick] (0.4,-5)--(3.6,-5) node [blue, above, near start, sloped] {$v_k$};

    \draw[dotted, ultra thick] (4,-3.5)--(4,-4.5);
    \draw[dotted, ultra thick] (0,-3.5)--(0,-4.5);

    \draw (2,-6) node[below]
        {$w_0>w_1>\cdots>w_k>\max\{v_1,v_2,\ldots,v_k\}$};
\end{tikzpicture}
\caption{
An example illustrating a configuration in which the matches of non-free offline agents
$(i_1,\ldots,i_k)$ in $\vir$ can be pairwise \emph{positively} correlated.
Such correlations arise only when all non-free agents are negatively impacted by a single free agent $i_0$.
The covariance of the total number of matches within this cluster is upper bounded by
$\min\{\bta^4/\ta^2,\,(k^2+2k)\,\bta^{7/2}\}$.
}
\label{fig:var-7}
\end{figure}

\begin{proof}[\tbf{Proof of Lemma~\ref{lem:var-5}}]
From the proof of Lemma~\ref{lem:var-4}, pairwise positive correlation can arise only between non-free offline agents that are all negatively impacted by the same free offline agent.
Consider such a cluster as illustrated in Figure~\ref{fig:var-7}.
Because $w_0>w_1>\cdots>w_k>\max\{v_1,v_2,\ldots,v_k\}$, the offline agent $i_0$ is free, while $i_1,\ldots,i_k$ are all non-free.

For each $0\le \ell\le k$, let $X_{i_\ell}=1$ indicate that $i_\ell$ is matched in $\vir$, and define
$\mu_\ell:=\E[X_{i_\ell}]$.
A direct computation shows that
\begingroup
\allowdisplaybreaks
\begin{align*}
\mu_0 &= \bta, \\
\mu_\ell &= \bta\,(1-\bta^\ell), && 1\le \ell\le k, \\
\mu_{0,\ell} &= \E[X_{i_0}X_{i_\ell}]
              = \bta^2\,(1-\bta^{\ell-1}), && 1\le \ell\le k, \\
\mu_{\ell,\ell'} &= \E[X_{i_\ell}X_{i_{\ell'}}]
              = \bta^2\,(1-\bta^{\ell}), && 1\le \ell<\ell'\le k .
\end{align*}
\endgroup

Define the total covariance within this cluster by
\[
\sg(k)
:= \sum_{0\le \ell<\ell'\le k}
   \bigl(\E[X_{i_\ell}X_{i_{\ell'}}]-\E[X_{i_\ell}]\E[X_{i_{\ell'}}]\bigr).
\]
Substituting the expressions above yields
\begin{align*}
\sg(k)
&= \Bigl(\sum_{\ell=1}^k \mu_{0,\ell}
    + \sum_{1\le \ell<\ell'\le k} \mu_{\ell,\ell'}\Bigr)
  - \Bigl(\mu_0\sum_{\ell=1}^k \mu_\ell
    + \sum_{1\le \ell<\ell'\le k} \mu_\ell\mu_{\ell'}\Bigr) \\
&= \bta^2(\bta^k-1)
  + \frac{\bta^4}{\ta}
    \Bigl(\frac{1-\bta^{2k}}{1-\bta^2}-k\,\bta^{k-1}\Bigr) \\
&\le \frac{\bta^4}{\ta}\cdot\frac{1}{1-\bta^2}
 = \frac{\bta^4}{\ta^2(1+\bta)}
 \le \frac{\bta^4}{\ta^2}.
\end{align*}

We next derive an alternative bound.
Using $\ln(1-x)\ge -x/\sqrt{1-x}$ for $x\in[0,1)$, we obtain
\begingroup
\allowdisplaybreaks
\begin{align*} \sg(k) & = \bta^2(\bta^k-1)
  + \frac{\bta^4}{\ta}
    \Bigl(\frac{1-\bta^{2k}}{1-\bta^2}-k\,\bta^{k-1}\Bigr) \\ & \le \frac{\bta^4}{\ta} \, \bp{\frac{1 - \bta^{2k}}{1 - \bta^2} - k \, \bta^{k-1}} \\ & = \frac{\bta^4}{\ta} \, \bp{\frac{1 - \sfe^{2k \, \ln (1-\ta)}}{\ta \, (2 - \ta)} - k \, \sfe^{(k-1) \, \ln(1-\ta)}} \\ & \le \frac{\bta^4}{\ta} \, \bp{\frac{2k \ta / \sqrt{1 - \ta}}{\ta (2 - \ta)} - k \, \bp{1 - \frac{(k-1) \ta}{\sqrt{1 - \ta}}}} \quad \text{\Big(since $\ln(1-x) \ge -x / \sqrt{1-x}$ for $x \in [0,1)$\Big)} \\ & = \frac{\bta^4}{\ta} \, \bp{\frac{k / \sqrt{1 - \ta}}{(1 - \ta/2)} - k + \frac{k(k-1) \ta}{\sqrt{1 - \ta}}} \\ & = \frac{\bta^4}{\ta} \, \bp{k \, \frac{\ta / (1 + \sqrt{1 - \ta}) + (\ta / 2) \, \sqrt{1 - \ta}}{(1 - \ta/2) \, \sqrt{1 - \ta}} + \frac{k(k-1) \ta}{\sqrt{1 - \ta}}} \\ & \le (1 - \ta)^4 \, \bp{k \, \frac{3}{\sqrt{1 - \ta}} + \frac{k(k-1)}{\sqrt{1 - \ta}}} = (1 - \ta)^4 \, \frac{k^2 + 2k}{\sqrt{1 - \ta}} = (k^2 + 2k) \, \bta^{7/2}. \end{align*}
\endgroup

Therefore, for any cluster of $k$ non-free agents negatively impacted by a single free agent, the total covariance is at most
\[
\min\{\bta^4/\ta^2,\,(k^2+2k)\,\bta^{7/2}\}.
\]

Since there are at most $m/(k+1)$ such disjoint clusters with $k+1\le d$, and different clusters are independent, the total covariance across all offline agents is upper bounded by
\[
m\, \min\{\bta^4/\ta^2,\,(d+2)\,\bta^{7/2}\}.
\]
This completes the proof.
\end{proof}


\section{Proof of  the Lower Bound on $\psi$ in Theorem~\ref{thm:smg-var-new}}
\label{app:smg-var-tig}

\begin{proof}
Consider an instance $G$ consisting of $m$ identical star graphs, where each star comprises one offline agent and $d$ online agents.
Thus $|I|=m$ and $|J|=n=m\,d$.
For each $i \in I$, let $X_i=1$ indicate that $i$ is matched in $\vir$, and define
\[
X := \sum_{i \in I} X_i .
\]

For each offline agent $i$, let $\ell_i \in J$ denote the online agent incident to $i$ with the largest edge weight.
Recall that $\cS_K$ is the random sample used by $\vir$.
By construction of $\vir$, we have
\[
X_i = 1 \quad \text{if and only if} \quad \ell_i \notin \cS_K .
\]
Consequently, $X_i \sim \mathrm{Ber}(\bta)$ with $\bta := 1-\ta$, and hence
\[
\Var[X_i] = \ta\,\bta .
\]

Next, for any two distinct offline agents $i,i' \in I$, we have
\[
\E[X_i X_{i'}]
= \Pr\bigl(\ell_i \notin \cS_K \;\wedge\; \ell_{i'} \notin \cS_K\bigr)
= \frac{n-K}{n} \, \frac{n-K-1}{n-1}
= \bta \, \frac{\bta - 1/n}{1 - 1/n}.
\]

We now compute the variance of $X$:
\begingroup
\allowdisplaybreaks
\begin{align*}
\Var[X]
&= \sum_{i \in I} \Var[X_i]
   + 2 \sum_{1 \le i < i' \le m}
     \bigl(\E[X_i X_{i'}] - \bta^2\bigr) \\
&= m\,\ta\,\bta
   + m(m-1)
     \Bigl(\bta \, \frac{\bta - 1/n}{1 - 1/n} - \bta^2\Bigr) = m\,\ta\,\bta
   - m(m-1)\,\bta \, \frac{\ta}{n-1} \\
&= m\,\ta\,\bta
   \Bigl(1 - \frac{m-1}{n-1}\Bigr) = m\,\ta\,\bta
   \Bigl(1 - \frac{1}{d} + O(1/n)\Bigr),
\end{align*}
\endgroup
which completes the proof.
\end{proof}

\section{Structural Properties of $\psi_{m,d}(\theta)$}
\label{app:ws-va}

\begin{lemma}
\label{lem:ws-va}
\begin{enumerate}
\item \textbf{Asymptotic tightness near full sampling.}
For every fixed $d\ge2$, in the fixed-$d$ asymptotic regime $n\to\infty$,
\[
\psi_{m,d}(\theta)
=
\Omega\bigl(\overline{\psi}_{m,d}(\theta)\bigr)
\qquad
\text{as }\theta\to1_{-}.
\]
Together with Theorem~\ref{thm:smg-var-new}, this implies
\[
\psi_{m,d}(\theta)
=
\Theta\bigl(\overline{\psi}_{m,d}(\theta)\bigr)
\qquad
\text{as }\theta\to1_{-}.
\]

\item \textbf{One-sided continuity at both endpoints.}
For every fixed $m$ and $d$,
\[
\lim_{\theta\to0_{+}}\psi_{m,d}(\theta)
=
\psi_{m,d}(0)
=
0,
\]
and
\[
\lim_{\theta\to1_{-}}\psi_{m,d}(\theta)
=
\psi_{m,d}(1)
=
0.
\]
More specifically, for $\theta\in[0,1]$,
\[
\psi_{m,d}(\theta)
\le
m^3d\,\theta.
\]
\end{enumerate}
\end{lemma}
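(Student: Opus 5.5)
Part~(1) follows from comparing the two bounds of Theorem~\ref{thm:smg-var-new} as $\theta\to1_-$. Write $\bta=1-\ta$. For $\ta\ge1/2$ we have $\tta=\ta$, so $\tta(1-\tta)=\ta\bta$. The covariance term satisfies $2\min\{\bta^4/\ta^2,(d+2)\bta^{7/2}\}\le 2\bta^4/\ta^2=O(\bta^4)$. Hence $\overline{\psi}_{m,d}(\ta)=m\ta\bta(1+O(\bta^3))=\Theta(m\bta)$. The star-instance lower bound proved in Appendix~\ref{app:smg-var-tig} gives $\psi_{m,d}(\ta)\ge m\ta\bta(1-1/d+O(1/n))$. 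For fixed $d\ge2$ the factor $1-1/d\ge1/2$ is a positive constant, so in the regime $n\to\infty$ we get $\psi_{m,d}(\ta)\ge c\,m\bta$. This yields $\psi_{m,d}=\Omega(\overline{\psi}_{m,d})$, and with the upper bound, the $\Theta$ statement and $\psi_{m,d}=\Theta(m(1-\ta))$.

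For Part~(2), I treat the endpoints directly.
\begin{itemize}
\item At $\ta=1$, $K=n$ and the matching phase is empty. The output is deterministically empty, so $\psi_{m,d}(1)=0$.
\item At $\ta=0$, all prices are zero. \textsc{TMW-Select} with fixed tie-breaking sends each $j$ to a fixed $i^*(j)$ independent of the order. The set of matched offline agents is $\{i^*(j):j\in J\}$, which is deterministic, so $\psi_{m,d}(0)=0$.
\item As $\ta\to1_-$, continuity follows from Part~(1) or from the upper bound $\overline{\psi}_{m,d}(\ta)=O(m\bta)\to0$.
\end{itemize}

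The main work is the bound $\psi_{m,d}(\ta)\le m^3d\,\ta$, and I would prove it by a coupling with the empty sample. Fix an instance in $\cI(m,d)$. Let $J_E\subseteq J$ be the online agents incident to at least one edge; since every offline degree is at most $d$, $|J_E|\le md$. Let $B$ be the event that $\cS_K\cap J_E\neq\emptyset$. On $B^c$, every price is zero because the sampled agents carry no edges. Also every edge-bearing agent arrives in the matching phase and faces the same deterministic rule as at $\ta=0$. So the matched set equals the deterministic set from the $\ta=0$ analysis, and $X=x_0$ is a constant. Since $0\le X\le m$, we get
\[
\Var[X]\le \E[(X-x_0)^2]\le m^2\Pr[B]\le m^2\sum_{j\in J_E}\Pr[j\in\cS_K]\le m^2\cdot md\cdot\frac{K}{n}\le m^3d\,\ta,
\]
using $K=\lfloor n\ta\rfloor\le n\ta$ and a union bound. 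Letting $\ta\to0_+$ gives $\lim_{\ta\to0_+}\psi_{m,d}(\ta)=0$.

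The delicate point is justifying that on $B^c$ the matched set really is independent of the arrival order. One might worry that an offline agent $i^*(j)$ already matched causes $j$ to be rejected rather than redirected. But \textsc{TMW-Select} never redirects: with zero prices $i^*(j)$ is determined by $j$ alone. Therefore $i$ is matched if and only if $i\in\{i^*(j)\}$, whatever the order. I would state the tie-breaking convention explicitly, fixed and order-independent, so that this holds.
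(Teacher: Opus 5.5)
Your proposal is correct and follows the paper's proof essentially step by step. In Part~(1) you compare the star-instance lower bound $m\theta(1-\theta)(1-1/d+O(1/n))$ against $\overline{\psi}_{m,d}(\theta)=m\bigl(\theta(1-\theta)+O((1-\theta)^4)\bigr)$; in Part~(2) you use the deterministic empty output at $\theta=1$ and condition on the event that the sample avoids all edge-bearing online agents, where zero prices make the matched set $\{i^*(j)\}$ order-independent, which gives $\operatorname{Var}[X]\le m^2\cdot md\,\theta$ by the union bound, exactly as in the paper.
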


\begin{proof}
We first establish the asymptotic tightness claim near $\theta=1$.

\smallskip
\noindent
\textbf{Case $\theta=1-\epsilon$.}
By Theorem~\ref{thm:smg-var-new}, in the fixed-$d$ asymptotic regime
$n\to\infty$,
\[
\psi_{m,d}(1-\epsilon)
\ge
m\,\epsilon(1-\epsilon)
\left(
1-\frac{1}{d}+o_n(1)
\right).
\]
For every fixed $d\ge2$, the coefficient $1-1/d$ is bounded below by
$1/2$. Hence, for sufficiently large $n$,
\[
\psi_{m,d}(1-\epsilon)
=
\Omega\bigl(m\epsilon(1-\epsilon)\bigr),
\]
where the implicit constant can be chosen independently of $m$, $d$,
and $\epsilon$ for $d\ge2$.

On the other hand, recalling that
$\widetilde{\theta}=\max\{1/2,\theta\}$ and
$\overline{\theta}=1-\theta$, for sufficiently small $\epsilon>0$ we
have
\begin{align*}
\overline{\psi}_{m,d}(1-\epsilon)
&=
m\left(
\epsilon(1-\epsilon)
+
2\min\left\{
\frac{\epsilon^4}{(1-\epsilon)^2},
(d+2)\epsilon^{7/2}
\right\}
\right)\\
&\le
m\left(
\epsilon(1-\epsilon)
+
\frac{2\epsilon^4}{(1-\epsilon)^2}
\right)=
m\bigl(\epsilon(1-\epsilon)+O(\epsilon^4)\bigr).
\end{align*}
Therefore,
\[
\psi_{m,d}(\theta)
=
\Omega\bigl(\overline{\psi}_{m,d}(\theta)\bigr)
\qquad
\text{as }\theta\to1_{-}.
\]
Since Theorem~\ref{thm:smg-var-new} already gives
\[
\psi_{m,d}(\theta)
\le
\overline{\psi}_{m,d}(\theta),
\]
the upper bound is order-tight in this regime.

\smallskip
We next establish the one-sided continuity statements.

\smallskip
\noindent
\textbf{Continuity at $\theta=1$.}
Let $\theta=1-\epsilon$. From the upper bound above,
\[
0
\le
\psi_{m,d}(1-\epsilon)
\le
m\left(
\epsilon(1-\epsilon)
+
\frac{2\epsilon^4}{(1-\epsilon)^2}
\right),
\]
which converges to zero as $\epsilon\to0_{+}$.
Moreover, $\smg(1)$ samples all online agents and therefore always
returns an empty matching. Hence
\[
\psi_{m,d}(1)=0,
\]
and consequently
\[
\lim_{\theta\to1_{-}}\psi_{m,d}(\theta)
=
\psi_{m,d}(1)
=
0.
\]

\smallskip
\noindent
\textbf{Continuity at $\theta=0$.}
Fix an arbitrary instance
$G=(I,J,E)\in\mathcal{I}(m,d)$, where
$\mathcal{I}(m,d)$ denotes the class of bipartite instances with
$m$ offline agents and maximum offline degree at most $d$. 
Let
\[
J^{+}
:=
\{j\in J:E_j\neq\emptyset\}
\]
denote the set of online agents incident to at least one edge.
Since every offline agent has degree at most $d$,
\[
|J^{+}|
\le
|E|
\le
md.
\]

Let $\mathcal{E}$ be the event that none of the agents in $J^{+}$
belongs to the sample $\cS_K$.
For every $j\in J$,
\[
\Pr[j\in\cS_K]
=
\frac{K}{n}
\le
\theta.
\]
Therefore, by the union bound,
\[
\Pr[\mathcal{E}^{c}]
\le
|J^{+}|\,\frac{K}{n}
\le
md\,\theta.
\]

Conditioned on $\mathcal{E}$, the sampled graph contains no edge, so
all offline prices produced by $\smg(\theta)$ are zero.
Under the fixed tie-breaking convention of $\smg$, each
$j\in J^{+}$ therefore selects a fixed maximum-weight incident edge,
independently of the arrival order.
Consequently, the set of offline agents that are eventually matched is
deterministic on $\mathcal{E}$: an offline agent is matched if and only
if it is the selected endpoint of at least one agent in $J^{+}$.

Let this deterministic match count be $x_G$, and let
\[
X_G:=|\smg(\theta,G)|.
\]
Since $0\le X_G\le m$ and $X_G=x_G$ on $\mathcal{E}$,
\begin{align*}
\Var[X_G]
&\le
\E\bigl[(X_G-x_G)^2\bigr]\le
m^2\,\Pr[\mathcal{E}^{c}]\le
m^3d\,\theta.
\end{align*}
Because this bound holds for every
$G\in\mathcal{I}(m,d)$, taking the supremum over all such instances
gives
\[
\psi_{m,d}(\theta)
\le
m^3d\,\theta.
\]
Therefore,
\[
\lim_{\theta\to0_{+}}\psi_{m,d}(\theta)=0.
\]

Finally, when $\theta=0$, the sample is empty and all prices are zero.
By the same argument, the set of matched offline agents is deterministic,
although the identities of their matched online partners may depend on
the arrival order. Hence
\[
\psi_{m,d}(0)=0.
\]
We conclude that
\[
\lim_{\theta\to0_{+}}\psi_{m,d}(\theta)
=
\psi_{m,d}(0)
=
0,
\]
which completes the proof.
\end{proof}

\section{Arrival-Order Independence of the Black-Box Solver}
\label{sec:amend}

This appendix illustrates why the black-box convention in
Section~\ref{sec:sda} is substantive. Even if $\BB$ returns an exact
optimal matching on every prefix, the online competitiveness can approach
zero when its choice among optimal matchings depends on the arrival
order. The example below applies both to the original-weight
prefix-optimization framework of~\citet{kess13} and to our reweighted
framework. These frameworks are not identical in general; on the instance
below, however, every matching-phase reweighting multiplies all edge
weights by the same positive factor.

\begin{property}[Conditional arrival-order independence]
\label{pro:bb}
For every $t>K$ and every set $S\subseteq J$ with $|S|=t$, conditioned on
$\cS_t=S$, the matching $\cM_t$ returned by $\BB$ on the prescribed
weighted prefix graph is independent of the ordered prefix
$(j_1,\ldots,j_t)$.
\end{property}

For a deterministic solver, Property~\ref{pro:bb} means that permuting
the arrival order within a fixed labeled prefix does not change the
returned matching. For a randomized solver, the conditional distribution
of the returned matching must be the same for every such permutation.
Section~\ref{sec:sda} enforces this property through its black-box
convention, which also specifies the independence of the random seeds
across calls.

\begin{figure}[ht!]
\centering
\begin{tikzpicture}[
    x=0.9cm, y=1cm,
    vertex/.style={
        circle,draw,fill=white,minimum size=4pt,inner sep=0pt
    },
    ordinary edge/.style={line width=0.5pt}
]
    \node[vertex,label=above:{$i=1$}]
        (i1) at (0,0) {};
    \node[vertex,label=above:{$i=2$}]
        (i2) at (2.2,0) {};
    \node[vertex,draw=blue,label=above:{$i^*=K+1$}]
        (is) at (4.4,0) {};
    \node[vertex,label=above:{$i=n-1$}]
        (in) at (8.2,0) {};
    \node at (3.2,0) {$\cdots$};
    \node at (6.4,0) {$\cdots$};

    \node[vertex,label=below:{$j=1$}]
        (j1) at (0,-2.4) {};
    \node[vertex,label=below:{$j=2$}]
        (j2) at (2.2,-2.4) {};
    \node[vertex,label=below:{$j=n-1$}]
        (jn) at (8.2,-2.4) {};
    \node[vertex,draw=blue,label=below:{$j^*=n$}]
        (js) at (10.8,-2.4) {};
    \node at (5.2,-2.4) {$\cdots$};

    \foreach \ii in {i1,i2,is,in}{
        \foreach \jj in {j1,j2,jn}{
            \draw[ordinary edge] (\ii)--(\jj);
        }
    }
    \draw[blue,line width=1.2pt] (is)--
        node[midway,above,sloped,fill=white,inner sep=1.5pt]
        {$N$} (js);
\end{tikzpicture}
\caption{An instance illustrating the role of
Property~\ref{pro:bb}. There are $n-1$ offline agents and $n$ online
agents. The subgraph induced by $I=[n-1]$ and $J_0=[n-1]$ is complete
bipartite, with every edge having weight $1$. The remaining online agent
$j^*=n$ has the single neighbor $i^*=K+1$, joined by an edge of weight
$N>1$. Vertex labels are fixed independently of arrival order, and
$K=\lfloor n/\sfe\rfloor$.}
\label{fig:comp}
\end{figure}

\begin{example}[An exact solver with vanishing online competitiveness]
\label{exam:ind}
Fix $n\ge6$ and let $K:=\lfloor n/\sfe\rfloor$, so that $K\ge2$.
Consider the graph in Figure~\ref{fig:comp}, with
\[
I=[n-1],
\qquad
J_0=[n-1],
\qquad
J=J_0\cup\{j^*\},
\qquad
j^*=n.
\]
Every pair in $I\times J_0$ is an edge of weight $1$. The only additional
edge is
\[
e^*:=(i^*,j^*),
\qquad
i^*:=K+1,
\qquad
w(e^*)=N>1.
\]
In particular, $n=|J|$, consistently with our notation elsewhere.
The offline optimum includes $e^*$ and $n-2$ unit-weight edges, giving
\[
\OPT=N+n-2.
\]

\xhdr{An arrival-order-dependent exact solver.}
Let $j_\ell$ denote the online agent arriving at time $\ell$.
Define a deterministic prefix solver $\BB_{\mathrm{ord}}$ as follows.
If $j^*\notin\cS_t$, return
\[
\cM_t=\{(\ell,j_\ell):1\le\ell\le t\}.
\]
This is feasible because $t\le n-1$ whenever $j^*\notin\cS_t$.
If $j^*\in\cS_t$, include $e^*$ and match as many agents of
$\cS_t\setminus\{j^*\}$ as possible to distinct vertices of
$I\setminus\{i^*\}$. Thus, in this case,
\[
\cM_t=\{e^*\}\cup\mathcal{Q}_t,
\qquad
|\mathcal{Q}_t|=\min\{t-1,n-2\},
\]
where $\mathcal{Q}_t$ can be chosen deterministically using fixed vertex
labels. These rules specify the solver on the prefixes used in this
example; on other inputs it may return any exact optimum.

If $j^*\notin\cS_t$, the returned matching has weight $t$, which is
optimal. If $j^*\in\cS_t$, its weight is
$N+\min\{t-1,n-2\}$, which is also optimal since $N>1$.
Nevertheless, $\BB_{\mathrm{ord}}$ violates Property~\ref{pro:bb}:
before $j^*$ arrives, it assigns each observed online agent to the
offline vertex indexed by that agent's arrival position.

The same solver is exact for the reweighted prefixes in
Algorithm~\ref{alg:sda}. Indeed, at every matching-phase round
$t>K\ge2$, the prefix contains at least two ordinary online agents,
whether or not $j^*$ has appeared. Hence $d_t(i)\ge2$ for every
$i\in I$, and
\[
a_t(i)=b_t=\frac{K}{t-1}>0
\qquad\text{for all }i\in I.
\]
Consequently, $\widetilde w_t(e)=b_t w(e)$ for every prefix edge,
and the original and reweighted objectives have exactly the same
optimal matchings.

\xhdr{Online performance.}
Let $T$ be the arrival time of $j^*$, and let $\cM$ be the final matching
produced by Algorithm~\ref{alg:sda} using $\BB_{\mathrm{ord}}$ and
$\ta=1/\sfe$.

If $T\le K$, then $j^*$ is discarded during sampling.
If $T=K+1$, then the online matching is still empty, so $e^*$ is
selected and accepted.
If $T>K+1$, then at the first matching-phase round the solver proposes
\[
(K+1,j_{K+1})=(i^*,j_{K+1}),
\]
which is accepted. Thus $i^*$ is already matched when $j^*$ arrives,
and $e^*$ is rejected.
Therefore,
\[
\{e^*\in\cM\}=\{T=K+1\},
\qquad
\Pr[e^*\in\cM]=\frac1n.
\]
There are at most $n-1$ unit-weight edges in any matching, so
\[
\E[w(\cM)]\le\frac{N}{n}+n-1.
\]
It follows that
\[
\frac{\E[w(\cM)]}{\OPT}
\le
\frac{N/n+n-1}{N+n-2}
\le
\frac1n+\frac{n-1}{N}.
\]
For example, taking $N=n^3$ gives
\[
\frac{\E[w(\cM)]}{\OPT}
\le\frac1n+\frac1{n^2}\longrightarrow0
\qquad(n\to\infty).
\]
Thus exact prefix optimization alone does not guarantee constant
online competitiveness when arrival-order-dependent tie-breaking is
allowed. The same argument applies to the original-weight
prefix-optimization framework.
\end{example}

\xhdr{Interpretation in terms of random seeds.}
Although $\BB_{\mathrm{ord}}$ is deterministic, the same construction
can be viewed as coupling a solver's tie-breaking randomness to the
arrival process. Before $j^*$ appears, a solver could select an optimal
matching by ordering the vertices of $\cS_t$ according to a random
permutation and assigning the $\ell$th vertex to offline vertex
$i=\ell$. If that permutation is taken to be the actual arrival order,
then it is uniform conditional on $\cS_t$, but is perfectly coupled to
the ordered prefix. This reproduces $\BB_{\mathrm{ord}}$.
Having the correct marginal distribution for the tie-breaking
randomness is therefore not enough; its dependence on the arrival
order matters. The example concerns independence from arrivals and
does not establish that independence between seeds across calls is
necessary.

\xhdr{Enforcing the black-box convention.}
Randomization is not necessary to enforce Property~\ref{pro:bb}.
For a deterministic solver, supply a canonical representation of the
unordered prefix based on fixed, arrival-independent vertex labels,
and use a fixed deterministic procedure without arrival-order-dependent
state. In particular, an exact solver may select the lexicographically
first maximum-weight matching under a fixed ordering of the labeled
edges, using lexicographic order only to resolve ties in total matching
weight.

For a randomized solver, use the same canonical prefix representation
and supply fresh random seeds that are mutually independent and
independent of the entire arrival order, as specified in
Section~\ref{sec:sda}. Merely adding external randomness does not
suffice if the solver still uses arrival positions, the distinguished
current arrival, or the current online matching to select its output.
The essential requirement is arrival-order independence of the
prefix computation, not randomization itself.


\section{Comparison between $\sm(\BB,\ta)$ with $\BB=\gre$ and $\smg(\ta)$}
\label{app:comp}

Throughout this section, we consider the Black-Box Sampling--Matching
framework $\sm(\BB,\ta)$ instantiated with the Greedy algorithm, i.e.,
$\BB=\gre$. Under this instantiation, $\sm(\gre,\ta)$ runs \gre on the
evolving prefix graph $\cG_t:=\cG(I,\cS_t)$ at every round
$t=K+1,\ldots,n$, using the reweighted edge values
$\widetilde w_t$ defined in~\eqref{eq:sda-reweighting}.
In contrast, $\smg(\ta)$ runs \gre only once on the sampled subgraph
$\cG(I,\cS_K)$ to determine offline prices, and then applies a
threshold-based rule to each subsequent arrival.

The examples below demonstrate that these two frameworks can behave
fundamentally differently, even when both use \gre and process the same
sequence of arrivals. Importantly, the prefix reweighting in
$\sm(\gre,\ta)$ does not alter the Greedy matching in either example,
as verified below.

\begin{figure}[ht!]
\begin{minipage}{.5\linewidth}
 \begin{tikzpicture}[ultra thick]
  
    \draw (1,0) node[minimum size=0.2mm,draw,circle] {};
    \draw (1,0.2) node[above] {$i=1$};
               \draw (1,-2) node[minimum size=0.2mm,draw,circle] {};
    \draw (1,-2.2) node[left] {$j_1$};
   
       \draw (3,0) node[minimum size=0.2mm,draw,circle] {};
    \draw (3,0.2) node[above] {$i=2$};
    
           \draw (3,-2) node[minimum size=0.2mm,draw,circle] {};
    \draw (3,-2.2) node[left] {$j_2$};
    
           \draw (5,-2) node[minimum size=0.2mm,draw,circle] {};
    \draw (5,-2.2) node[left] {$j_3$};
    
           \draw (7,-2) node[minimum size=0.2mm,draw,circle] {};
    \draw (7,-2.2) node[left] {$j_4$};
    
       \draw[-] (1,-0.2)--(1,-1.8);     
    \draw (1,-1.5) node[left]{\bluee{$2$}};
    
          \draw[-] (3,-0.15)--(1,-1.85);      
    \draw (1.8,-1.2) node[below]{\bluee{$1.8$}};
    
      \draw[-] (3,-1.85)--(3,-0.2);  
        \draw (3,-1.5) node[left]{\bluee{$1.5$}};
        
            \draw[-] (4.85,-1.9)--(1,-0.18);  
        \draw (4.6,-1.85) node[above]{\bluee{$3$}};
 
          \draw[-] (6.85,-1.9)--(3,-0.18);  
        \draw (6.6,-1.85) node[above]{\bluee{$1.6$}};
        
           \draw[|<->|] (1,-2.7) -- (3.3,-2.7);
    \draw (2,-3) node[below] {{Sampling Phase}};
               \draw[|->] (4.7,-2.7) -- (7.3,-2.7);
    \draw (6,-3) node[below] {{Matching Phase}}; 
         \end{tikzpicture}
         \end{minipage}\hfill
\begin{minipage}{.5\linewidth}
 \begin{tikzpicture}[ultra thick]
  \draw (1,0) node[minimum size=0.2mm,draw,circle] {};
    \draw (1,0.2) node[above] {$i=1$};
               \draw (1,-2) node[minimum size=0.2mm,draw,circle] {};
    \draw (1,-2.2) node[right] {$j_1$};
   
       \draw (3,0) node[minimum size=0.2mm,draw,circle] {};
    \draw (3,0.2) node[above] {$i=2$};
    
           \draw (3,-2) node[minimum size=0.2mm,draw,circle] {};
    \draw (3,-2.2) node[right] {$j_2$};
    
           \draw (5,-2) node[minimum size=0.2mm,draw,circle] {};
    \draw (5,-2.2) node[right] {$j_3$};
    
           \draw (7,-2) node[minimum size=0.2mm,draw,circle] {};
    \draw (7,-2.2) node[right] {$j_4$};
    
       \draw[-] (1,-0.2)--(1,-1.8);     
    \draw (1,-1.5) node[left]{\bluee{$2$}};

      \draw[-] (3,-1.85)--(3,-0.2);  
        \draw (3,-1.5) node[left]{\bluee{$1.5$}};
        
            \draw[-] (4.85,-1.9)--(1,-0.18);  
        \draw (4.6,-1.85) node[above]{\bluee{$3$}};
 
          \draw[-] (6.85,-1.9)--(3,-0.18);  
        \draw (6.6,-1.8) node[above]{\bluee{$1.6$}};
        
                  \draw[-] (6.85,-1.9)--(1,-0.18);  
        \draw (6.1,-1.65) node[below]{\bluee{$2.5$}};
        
                   \draw[|<->|] (1,-2.7) -- (3.3,-2.7);
    \draw (2,-3) node[below] {{Sampling Phase}};
               \draw[|->] (4.7,-2.7) -- (7.3,-2.7);
    \draw (6,-3) node[below] {{Matching Phase}}; 
         \end{tikzpicture}
\end{minipage}

\caption{
Two examples illustrating behavioral differences between
$\sm(\BB,\ta)$ instantiated with $\BB=\gre$ and $\smg(\ta)$.
The numbers next to the edges denote their original weights, and
$j_t$ denotes the online agent arriving at time $t=1,2,3,4$.
Assume that the sample size is $K=2$.
In the left panel, $\sm(\gre,\ta)$ rejects $j_4$, whereas
$\smg(\ta)$ matches $j_4$ with $i=2$.
In the right panel, $\sm(\gre,\ta)$ matches $j_4$ with $i=2$,
whereas $\smg(\ta)$ rejects $j_4$.
}
\label{fig:diff}
\end{figure}

\begin{example}
Consider the two instances shown in Figure~\ref{fig:diff}.
In both cases, there are two offline agents $i=1,2$ and four online
agents arriving in the order $j_1,j_2,j_3,j_4$.
The sampling size is $K=2$, so both $\sm(\gre,\ta)$ and $\smg(\ta)$
discard the first two arrivals $j_1$ and $j_2$.

We first observe that the reweighting used by $\sm(\gre,\ta)$ does not
change any Greedy decision in these examples.
At time $t=3$,
\[
b_3=\frac{K}{t-1}=1,
\]
and hence $a_3(i)=1$ for every offline agent $i$.
Thus,
\[
\widetilde w_3(e)=w(e)
\]
for every edge in the prefix graph $\cG_3$.
At time $t=4$, every offline agent has prefix degree at least two in
both instances. Therefore,
\[
a_4(1)=a_4(2)=b_4=\frac{2}{3},
\]
so that
\[
\widetilde w_4(e)=\frac{2}{3}w(e)
\]
for every edge in $\cG_4$.
Hence the reweighting uniformly rescales all edge weights at time $4$
and leaves the Greedy matching unchanged.

\smallskip
\noindent
\textbf{Left instance.}
At time $t=3$, running \gre on the prefix graph $\cG_3$ selects
$(i=1,j_3)$, so $\sm(\gre,\ta)$ matches $j_3$ with $i=1$.
At time $t=4$, running \gre on $\cG_4$ yields the matching
\[
\{(i=1,j_3),(i=2,j_1)\}.
\]
Thus $j_4$ is unmatched in the prefix matching returned by \gre, and
$\sm(\gre,\ta)$ rejects $j_4$.

In contrast, $\smg(\ta)$ computes its prices by running \gre on the
sampled graph $\cG(I,\cS_2)$.
The resulting matching is
\[
\{(i=1,j_1),(i=2,j_2)\},
\]
and hence
\[
p(i=1)=2,
\qquad
p(i=2)=1.5.
\]
At time $t=3$, $\smg(\ta)$ matches $j_3$ with $i=1$.
Upon the arrival of $j_4$, the edge $(i=2,j_4)$ has weight
\[
1.6>p(i=2)=1.5,
\]
and $i=2$ remains unmatched.
Therefore, $\smg(\ta)$ matches $j_4$ with $i=2$.

\smallskip
\noindent
\textbf{Right instance.}
As in the left instance, at time $t=3$ the Greedy prefix matching
contains $(i=1,j_3)$, so $\sm(\gre,\ta)$ matches $j_3$ with $i=1$.
At time $t=4$, running \gre on $\cG_4$ yields a matching containing
\[
(i=1,j_3)
\qquad\text{and}\qquad
(i=2,j_4).
\]
Since $i=2$ is still unmatched in the online matching, $\sm(\gre,\ta)$
matches $j_4$ with $i=2$.

Under $\smg(\ta)$, the sampled graph yields the same prices
\[
p(i=1)=2,
\qquad
p(i=2)=1.5.
\]
At time $t=3$, $\smg(\ta)$ matches $j_3$ with $i=1$.
When $j_4$ arrives, both edges $(i=1,j_4)$ and $(i=2,j_4)$ satisfy
their respective price thresholds, but
\[
w(i=1,j_4)=2.5
>
1.6=w(i=2,j_4).
\]
Hence the thresholded max-weight rule first selects $(i=1,j_4)$.
Since $i=1$ is already matched, $\smg(\ta)$ rejects $j_4$ rather than
falling back to the lower-weight qualifying edge $(i=2,j_4)$.

These examples therefore show that $\sm(\gre,\ta)$ and $\smg(\ta)$
can make qualitatively different online decisions even though both use
the Greedy algorithm as an offline subroutine. The distinction arises
from their fundamentally different use of Greedy: $\sm(\gre,\ta)$
recomputes a reweighted prefix matching at every round, whereas
$\smg(\ta)$ computes prices once from the initial sample and subsequently
uses a local thresholded max-weight rule.
\hfill$\blacksquare$
\end{example}

\end{document}